\newcounter{global}
\theoremstyle{definition}
\newtheorem{definition}[global]{Definition}
\theoremstyle{plain}
\newtheorem{theorem}[global]{Theorem}
\newtheorem{lemma}[global]{Lemma}
\newtheorem{proposition}[global]{Proposition}
\newtheorem{corollary}[global]{Corollary}
\newtheoremstyle{note}{}{}{}{}{\itshape}{.}{.5em}{}
\theoremstyle{note}
\newtheorem{remark}{Remark}%
\newtheorem{example}{Example}%
\renewcommand\section{%
	\@startsection {section}{1}{\z@}%
	{-3.5ex \@plus -1ex \@minus -.2ex}%
	{2.3ex \@plus.2ex}%
	{\normalfont\large\bfseries}}
\def\nat{\mathbb}
\def\Z{\nat Z}
\def\alltatr{\mathcal{T}_Y}
\def\|{\,|\,}
\def\nmodels{\not\models}
\def\proves{\vdash}
\def\nproves{\nvdash}
\def\itm#1{{\rm(\textit{\romannumeral#1})}}
\def\mdl#1{\triangle^{\!#1}}
\def\logand{\mathop{\binampersand}}
\def\derfrac#1#2{\ensuremath{\displaystyle%
		\cfrac{#1}{#2\rule{0pt}{0.9em}}}}
\def\dderfrac#1#2{\ensuremath{\displaystyle%
		\cfrac{#1}{#2\rule{0pt}{0.9em}}}}
\def\PL{\ensuremath{\mathrm{PL}}}
\def\lfrac#1#2#3{\derfrac{#1}{#2}\,{\scriptstyle #3}}
\def\llfrac#1#2#3{\dderfrac{#1}{#2}\,{\scriptstyle #3}}
\def\Ax{\ensuremath{\mathop{(\mathrm{Ax})}}}
\def\Cut{\ensuremath{\mathop{(\mathrm{Cut})}}}
\def\Sim{\ensuremath{\mathop{(\mathrm{Sim})}}}
\def\iCut{\ensuremath{\mathop{(\mathrm{Cut}_i)}}}
\def\iSim{\ensuremath{\mathop{(\mathrm{Sim}_i)}}}
\def\Shf{\ensuremath{\mathop{(\mathrm{Shf})}}}
\def\Add{\ensuremath{\mathop{(\mathrm{Add})}}}
\def\Aug{\ensuremath{\mathop{(\mathrm{Aug})}}}
\def\Pro{\ensuremath{\mathop{(\mathrm{Pro})}}}
\def\Tra{\ensuremath{\mathop{(\mathrm{Tra})}}}
\def\Acc{\ensuremath{\mathop{(\mathrm{Acc})}}}
\def\Ref{\ensuremath{\mathop{(\mathrm{Ref})}}}
\def\Wea{\ensuremath{\mathop{(\mathrm{Wea})}}}
\def\AX#1{\ensuremath{\llfrac{}{#1}{\Ax}}}
\def\CUT#1#2#3{\ensuremath{\lfrac{#1,\ #2}{#3}{\Cut}}}
\def\SHF#1#2{\ensuremath{\lfrac{#1}{#2}{\Shf}}}
\def\iCUT#1#2#3{\ensuremath{\lfrac{#1,\ #2}{#3}{\iCut}}}
\def\iSIM#1#2#3{\ensuremath{\lfrac{#1,\ #2}{#3}{\iSim}}}
\def\SIM#1#2#3{\ensuremath{\lfrac{#1,\ #2}{#3}{\Sim}}}
\DeclareMathSymbol{\Sigma}{\mathalpha}{operators}{"06}
\begin{document}
	
\title{Logic of temporal attribute implications}
	
\date{\normalsize%
  Dept. Computer Science, Palacky University Olomouc}
	
\author{Jan Triska\footnote{%
    e-mail: \texttt{jan.triska@upol.cz}, phone: +420 585 634 715, fax:
    +420 585 411 643%
    % \newline
    % address: Palacky University,
    % 17. listopadu 12, CZ--77146 Olomouc, Czech Republic
  } \and Vilem Vychodil}
	
\maketitle
	
\begin{abstract}
  We study logic for reasoning with if-then formulas describing
  dependencies between attributes of objects which are observed in
  consecutive points in time. We introduce semantic entailment of the
  formulas, show its fixed-point characterization, investigate closure
  properties of model classes, present an axiomatization and prove its
  completeness, and investigate alternative axiomatizations and
  normalized proofs. We investigate decidability and complexity
  issues of the logic and prove that the entailment problem 
  is NP-hard and belongs to EXPSPACE. We show that by restricting
  to predictive formulas, the entailment problem is decidable
  in pseudo-linear time.
		
  \medskip\noindent%
  \textbf{Keywords:} attribute implication, complete axiomatization,
  entailment problem, fixed point, functional dependency, temporal
  semantics
\end{abstract}
	
%%%%%%%%%%%%%%%%%%%%%%%%%%%%%%%%%%%%%%%%%%%%%%%%%%%%%%%%%%%%%%%%%%%%%%%%%%%%%%%%
%%%%% INTRODUCTION
%%%%%%%%%%%%%%%%%%%%%%%%%%%%%%%%%%%%%%%%%%%%%%%%%%%%%%%%%%%%%%%%%%%%%%%%%%%%%%%%
\section{Introduction}
Formulas describing if-then dependencies between attributes play
fundamental role in reasoning about attributes in many disciplines
including database systems~\cite{Co:Armodflsdb,Mai:TRD}, formal
concept analysis~\cite{GaWi:FCA,GuDu}, data
mining~\cite{AgImSw:ASR,Zak:Mnrar}, logic programming
\cite{Lloyd84,Ro:Molbrp}, and their applications.  In these
disciplines, the rules often appear under different names (e.g.,
attribute implications, functional dependencies, or simply ``rules'')
with semantics defined in various structures (e.g., transactional
data, boolean matrices, or $n$-ary relations) but as it has been shown
in~\cite{Fagin}, the rules may be seen as propositional formulas with
the semantic entailment defined as in the propositional logic,
possibly extended by additional measures of interestingness. The rules
are popular because of their easy readability for non-expert users and
tractability of the entailment problem which is decidable in linear
time~\cite{BeBe:Cprttdonfrs}. Research on if-then rules is active and
recent results include new theoretical
observations~\cite{BaKaNa:Cfdfcapa, FeHaLi:Raffhdopr, Ib:OfdqHt, MaFaBr:Eidc,
Ol:raathfd,SoChCh:Eddtdd} on the rules and their generalizations as well as
applications in data analysis, formal languages, and related
areas~\cite{CoSa:Ibtfdsav, DECaBr:Dwfdmg, DiJeLaSp:Mfcqufid, FaLiTaYu:Ididd,
LiLiToYo:Epdcfd,LiYeLiWa:Odfdfd,SzBe:Fdssgecfl,ViLiMo:TipfcfdcXd,We:Efdiqpr}.
	
In this paper, we introduce if-then formulas which express presence of
attributes relatively in time and the formulas are evaluated in data
where the presence or absence of attributes changes in time. In our
approach, we adopt the notion of a discrete time, i.e., the data are
observed at distinct points in time. We consider a formula valid in
data changing over time if the if-then dependency prescribed by the
formula holds in all time points. We introduce the formulas as
expressions
\begin{align}
  \bigl\{y_1^{i_1},\ldots,y_m^{i_m}\bigr\}
  \Rightarrow
  \bigl\{z_1^{j_1},\ldots,z_n^{j_n}\bigr\},
  \label{eqn:cimpl}
\end{align}
where $y_1,\ldots,y_m$ and $z_1,\ldots,z_n$ are attributes which may
be viewed as propositional variables, and
$i_1,\ldots,i_m,j_1,\ldots,j_n$ are integers annotating the attributes
by \emph{relative time points} with the following meaning: $0$ denotes
the present time point, $1$ is its immediate successor, $-1$ is the
immediate predecessor of $0$, $2$ is the immediate successor of $1$,
etc. With this interpretation of time points and considering, for
instance, the unit of time ``a day'', formula
$\{x^{-1},y^{0}\} \Rightarrow \{z^1\}$ prescribes the following
dependency: ``If $x$ was present yesterday and $y$ is present today,
then $z$ will be present tomorrow.''  From our perspective, a classic
if-then formula
\begin{align}
  \bigl\{y_1,\ldots,y_m\bigr\}
  \Rightarrow
  \bigl\{z_1,\ldots,z_n\bigr\},
  \label{eqn:impl}
\end{align}
may be seen as a particular case of~\eqref{eqn:cimpl}, where all the
relative time points $i_1,\ldots,i_m,j_1,\ldots,j_n$ are equal to $0$,
and the data in which the formula is evaluated is constant in all time
points.
	
We provide answers to several questions which emerge with formulas
like~\eqref{eqn:cimpl}. First, we define the notion of semantic
entailment of the formulas, investigate closure structures of models
of theories consisting of such formulas, and show that the problem of
checking whether a formula is semantically entailed by a set of
formulas can be reduced to checking its validity in a single model.
Second, we prove that the semantic entailment has a complete
axiomatization. That is, we show a notion of provability of formulas
like~\eqref{eqn:cimpl} and show that it coincides with the semantic
entailment. We discuss several possible axiomatizations, including
ones that can be used to consider proofs in particular normal
forms. Third, based on our insight into the properties of the semantic
entailment and provability, we derive results on decidability and
complexity of the entailment problem. Fourth, we include notes on the
relationship of the formulas to formulas appearing in modal
logics~\cite{BlRiVe:ML} and triadic formal concept
analysis~\cite{LeWi:Tafca}. Similar rules as we consider in this paper
appeared as inter-transaction association rules~\cite{TLHF:bbtmiar}
inferred from time-changing transactional data.  Despite the
popularity of the rules in data mining, a logical analysis of the
entailment of the rules and related properties is missing---providing
the logical foundations is a goal of our paper.
	
Our paper is organized as follows. In Section~\ref{sec:related} and
Section~\ref{sec:prelim}, we present a survey of related work and
short preliminaries. We introduce the formulas and present the results
on their semantic entailment in Section~\ref{sec:sem}.
In Section \ref{sec:compl} we give complete
axiomatizations and in Section~\ref{sec:comput} we deal with the
related computational issues. Finally, in Section~\ref{sec:concl}, we
present a conclusion.
	
%%%%%%%%%%%%%%%%%%%%%%%%%%%%%%%%%%%%%%%%%%%%%%%%%%%%%%%%%%%%%%%%%%%%%%%%%%%%%%%%
%%%%% RELATED WORK
%%%%%%%%%%%%%%%%%%%%%%%%%%%%%%%%%%%%%%%%%%%%%%%%%%%%%%%%%%%%%%%%%%%%%%%%%%%%%%%%
\section{Related Work}\label{sec:related}
In database systems and knowledge engineering, there appeared isolated
approaches which propose temporal semantics of if-then rules. We
present here a short survey of the approaches and highlight the
differences between our approach and the existing ones.
	
Formulas called temporal functional dependencies emerged in databases
with time granularities~\cite{BeJaWa:TGD}. In this approach, a time
granularity is a general partition of time like seconds, weeks, years,
etc., and a time granularity is associated to each relational
schema. In addition, each tuple in a relation is associated with a
part (so-called granule) of granularity. In this setting, temporal
functional dependencies are like the ordinary functional
dependencies~\cite{Fagin} with a time granularity as an additional
component. The concept of validity of temporal functional dependencies
is defined in much the same way as its classic counterpart and
includes an additional condition that granules of tuples need to be
covered by any granule from granularity of the temporal functional
dependency.  Thus, \cite{BeJaWa:TGD} uses an ordinary notion of
validity of functional dependencies which is restricted to some time
segments. This is conceptually very different from the problem we deal
with in this paper.
	
Several approaches to temporal if-then rules, which are conceptually
similar to~\cite{BeJaWa:TGD}, appeared in the field of association
rules~\cite{AgImSw:ASR,Zak:Mnrar} as the so-called temporal
association rules~\cite{AlRo:Adtar,LiNiWaJa:Dcbtar,RaRo:PDMKD}. In
these approaches, the input data is in the form of transactions (i.e.,
subsets of items) where each transaction occurred at some point in
time and the interest of the papers lies in extracting association
rules from data which occur during a specified time cycle. For
instance, one may be interested in extracting rules which are valid in
``every spring month of a year'', ``every Monday in every year'',
etc. As in the case of the temporal functional dependencies, the
temporal association rules may be understood as classic association
rules occurring during specified time cycles.
	
Other results motivated by temporal semantics of association rules
includes the so-called inter-transaction association
rules~\cite{FDL:Iarmcptasmd,FYLH:tmmiar,LWWCW:aeamcii,TLHF:bbtmiar},
see \cite{LFH:biaammiar} for a survey of approaches. The papers
propose algorithms to extract, given an input transactional data and a
measure of interestingness (based on levels of minimal support and
confidence), if-then rules which are preserved over a given period of
time.  From this point of view, the rules can be seen as formulas
studied in this paper restricted to so-called predictive rules (see
Definition~\ref{def:reg} in Section~\ref{sec:comput}) whose validity
is considered with respect to the additional parameter of
interestingness. As a consequence, the inter-transaction association
rules are related to the rules in our approach in the same way as the
ordinary association rules~\cite{AgImSw:ASR} are related to the
ordinary attribute implications~\cite{GaWi:FCA}. The results in
\cite{FDL:Iarmcptasmd,FYLH:tmmiar,LWWCW:aeamcii,LFH:biaammiar,TLHF:bbtmiar}
are focused almost exclusively on algorithms for mining the
inter-transaction association rules and are not concerned with
problems of entailment of the rules and the underlying logic. In
contrast, the problems of entailment of rules are central to this
paper and we show there is reasonably strong logic for reasoning with
such rules. Our observations may stimulate further development in the
field of inter-transaction association rules and similar formulas and
their applications in various
domains~\cite{FDL:Iarmcptasmd,HKS:emstarad}.

The formulas studied in this paper are also related to particular
program rules which appear in $\text{\emph{Datalog}}$ extensions
dealing with flow of time and related
phenomena~\cite{ChIm:Tddio,ChIm:Rsiqa,ChIm:friqa}
such as $\text{\emph{Datalog}}_{nS}$ ($\text{\emph{Datalog}}$
with $n$ successors). The formulas we consider in our paper correspond
to a fragment of rules which appear in such $\text{\emph{Datalog}}$
extensions. Despite the similar form of our formulas and the program rules,
there does not seem to be a direct relationship (or a reduction) of the
entailment problem of our formulas and the recognition problem
of $\text{\emph{Datalog}}_{nS}$ programs.
	
%%%%%%%%%%%%%%%%%%%%%%%%%%%%%%%%%%%%%%%%%%%%%%%%%%%%%%%%%%%%%%%%%%%%%%%%%%%%%%%%
%%%%%   PRELIMINARIES
%%%%%%%%%%%%%%%%%%%%%%%%%%%%%%%%%%%%%%%%%%%%%%%%%%%%%%%%%%%%%%%%%%%%%%%%%%%%%%%%
\section{Preliminaries}\label{sec:prelim}
In this section, we present the basic notions of closure systems (also known
as Moore families) and closure operators which are used further in the paper.
More details can be found in~\cite{Bir:LT,DaPr}.
	
If $Y$ is a set, we denote by $2^Y$ its power set. A closure operator
on $Y$ is a map $c\!: 2^Y \to 2^Y$ such that
\begin{align}
  A &\subseteq c(A),
      \label{eqn:c:ext} \\
  A \subseteq B &\text{ implies } c(A) \subseteq c(B),
                  \label{eqn:c:mon} \\
  c(c(A)) &\subseteq c(A), 
            \label{eqn:c:idm}
\end{align}
for all $A,B \subseteq Y$. The conditions
\eqref{eqn:c:ext}--\eqref{eqn:c:idm} are called the extensivity,
monotony, and idempotency of $c$, respectively.  Note
that~\eqref{eqn:c:ext} and~\eqref{eqn:c:idm} yield $c(A) = c(c(A))$
for all $A \subseteq Y$. A closure operator $c\!: 2^Y \to 2^Y$ is
called an algebraic closure operator whenever
\begin{align}
  c(A) &= \textstyle\bigcup\{c(B) \,|\,
         B \subseteq A \text{ and } B \text{ is finite}\}
\end{align}
for all $A \subseteq Y$. Moreover, $A \subseteq Y$ is called a fixed
point of $c$ whenever $c(A) = A$.
	
A system $\mathcal{S} \subseteq 2^Y$ is called a closure system on $Y$
if it is closed under arbitrary intersections, i.e.,
$\bigcap \mathcal{A} \in \mathcal{S}$ for any
$\mathcal{A} \subseteq \mathcal{S}$.  In the paper we utilize the
well-known correspondence between closure systems and closure
operators on $Y$.  In particular, if $c$ is an algebraic closure
operator on $Y$, we call the closure system of all its fixed points
the algebraic closure system induced by $c$.
	
%%%%%%%%%%%%%%%%%%%%%%%%%%%%%%%%%%%%%%%%%%%%%%%%%%%%%%%%%%%%%%%%%%%%%%%%%%%%%%%%
%%%%% FORMULAS AND SEMANTIC ENTAILMENT
%%%%%%%%%%%%%%%%%%%%%%%%%%%%%%%%%%%%%%%%%%%%%%%%%%%%%%%%%%%%%%%%%%%%%%%%%%%%%%%%
\section{Formulas, Models, and Semantic Entailment}\label{sec:sem}
In this section, we present a formalization of the formulas, their
interpretation, and semantic entailment.  Let us assume that $Y$ is a
non-empty and finite set of symbols called attributes. Furthermore, we
use integers in order to denote time points.  We put
\begin{align}
  \alltatr = \bigl\{y^i \| y \in Y \text{ and } i \in \Z\bigr\}
  \label{eqn:alltatr}
\end{align}
and interpret each $y^i \in \alltatr$ as ``attribute $y$ observed in
time $i$'' (technically, $\alltatr$ can be seen as the Cartesian
product $Y \times \Z$). Under this notation, we may now formalize
rules like~\eqref{eqn:cimpl} as follows:
	
\begin{definition}\label{def:cimpl}\upshape%
  An \emph{attribute implication over $Y$ annotated by time points} in
  $\Z$ is a formula of the form $A \Rightarrow B$, where $A,B$ are
  finite subsets of $\alltatr$.
\end{definition}
	
As we have outlined in the introduction, the purpose of time points
encoded by integers which appear in antecedents and consequents of the
considered formulas is to express points in time relatively to a
current time point. Hence, the intended meaning of \eqref{eqn:cimpl}
abbreviated by $A \Rightarrow B$ is the following: ``For all time
points $t$, if an object has all the attributes from $A$ considering
$t$ as the current time point, then it must have all the attributes
from $B$ considering $t$ as the current time point''.  In what
follows, we formalize the interpretation of $A \Rightarrow B$ in this
sense.
	
Since we wish to define formulas being true in all time points (we are
interested in formulas preserved over time), we need to shift relative
times expressed in antecedents and consequents in formulas with
respect to a changing time point. For that purpose, for each
$M \subseteq \alltatr$ and $i \in \Z$, we may introduce a subset
$M + j$ of $\alltatr$ by
\begin{align}
  M + j &= \bigl\{y^{i+j} \| y^i \in M\bigr\}
          \label{eqn:shft}
\end{align}
and call it a \emph{time shift of $M$ by $j$} (shortly, a $j$-shift of
$M$).  In the paper, we utilize the following properties of time
shifts.
	
\begin{proposition}
  For all $M,N \subseteq \alltatr$,
  $\{N_k \subseteq \alltatr \,|\, k \in K\}$, and $i,j \in \Z$, we get
  \begin{align}
    & \text{if } M \subseteq N \text{ then } M+i \subseteq N+i,
      \label{eqn:ts_mon}
    \\
    & (M+i)+j = M+(i+j),
      \label{eqn:ts_assoc}
    \\
    & \textstyle\bigcup_{k \in K} (N_k+i) = \textstyle\bigcup_{k \in K} N_k+i,
      \label{eqn:ts_cup}
    \\
    & \textstyle\bigcap_{k \in K} (N_k+i) = \textstyle\bigcap_{k \in K} N_k+i.
      \label{eqn:ts_cap}
  \end{align}
\end{proposition}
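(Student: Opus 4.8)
The plan is to recognize that each $j$-shift is simply the direct image under a bijection of $\alltatr$, after which all four claims follow from elementary properties of direct images together with the arithmetic of $\Z$. For $j \in \Z$, define $\sigma_j\!: \alltatr \to \alltatr$ by $\sigma_j(y^i) = y^{i+j}$. This is well defined because $i + j \in \Z$, and since composing it with $\sigma_{-j}$ gives the identity in either order, each $\sigma_j$ is a bijection. Unfolding~\eqref{eqn:shft}, we have $M + j = \{\sigma_j(t) \mid t \in M\} = \sigma_j[M]$ for every $M \subseteq \alltatr$, so each of the four identities is a statement about the map $\sigma_i$.

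First I would handle~\eqref{eqn:ts_mon} and~\eqref{eqn:ts_cup}, which hold for the direct image of an \emph{arbitrary} map: $M \subseteq N$ implies $\sigma_i[M] \subseteq \sigma_i[N]$, and $\sigma_i\bigl[\bigcup_{k \in K} N_k\bigr] = \bigcup_{k \in K} \sigma_i[N_k]$; rewriting $\sigma_i[-]$ as $(-)+i$ gives exactly the two stated equalities. Next, for~\eqref{eqn:ts_assoc}, the associativity of integer addition, $(i+k)+j = i+(k+j)$, shows $\sigma_j \circ \sigma_i = \sigma_{i+j}$ pointwise on $\alltatr$, whence $(M+i)+j = \sigma_j\bigl[\sigma_i[M]\bigr] = (\sigma_j \circ \sigma_i)[M] = \sigma_{i+j}[M] = M+(i+j)$.

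The one clause that needs slightly more than ``direct image of an arbitrary map'' is~\eqref{eqn:ts_cap}, since in general only $f\bigl[\bigcap_k N_k\bigr] \subseteq \bigcap_k f[N_k]$ holds, the reverse inclusion requiring $f$ to be injective. Here I would just argue pointwise, using that in $\Z$ the equation $p + i = m$ has the unique solution $p = m - i$: for any $k$ we have $y^m \in N_k + i$ iff $y^{m-i} \in N_k$, so $y^m \in \bigcap_{k \in K}(N_k + i)$ iff $y^{m-i} \in \bigcap_{k \in K} N_k$ iff $y^m \in \bigl(\bigcap_{k \in K} N_k\bigr) + i$. The only point worth pausing on is the degenerate case $K = \emptyset$, where both sides of~\eqref{eqn:ts_cap} equal $\alltatr$ (using surjectivity of $\sigma_i$, i.e.\ $\alltatr + i = \alltatr$); everything else is a mechanical unfolding of~\eqref{eqn:shft}, and I do not expect any genuine obstacle.
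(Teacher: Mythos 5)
Your proof is correct and is simply a careful unfolding of the definition of time shift, which is exactly what the paper does (its proof consists of the single remark that all four claims ``follow directly from'' the definition of $M+j$). You rightly identify the one clause that is not a property of arbitrary direct images, namely \eqref{eqn:ts_cap}, and your injectivity/pointwise argument for it (including the $K=\emptyset$ case) is sound.
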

\begin{proof}
  All \eqref{eqn:ts_mon}--\eqref{eqn:ts_cap} follow directly
  from~\eqref{eqn:shft}.
\end{proof}
	
Based on~\eqref{eqn:ts_assoc}, we may omit parentheses and write
$M + j + i$ instead of $(M + i) + j$. Also, we write $M - i$ to denote
$M + (- i)$.
	
Attribute implications annotated by time points are formulas, i.e.,
syntactic notions for which we define their semantics (interpretation)
as follows.
	
\begin{definition}\label{def:sem}\upshape%
  A formula \emph{$A \Rightarrow B$ is true in $M \subseteq \alltatr$}
  whenever, for each $i \in \Z$,
  \begin{align}
    \text{if } A+i \subseteq M \text{, then } B+i \subseteq M
    \label{eqn:ifthen}
  \end{align}
  and we denote the fact by $M \models A \Rightarrow B$.
\end{definition}
	
\begin{remark}\label{rem:trivB}
  (a) The value of $i$ in the definition may be understood as a
  sliding time point. Moreover, $A+i$ and $B+i$ represent sets of
  attributes annotated by \emph{absolute time points} considering $i$
  as the current time point. Note that using~\eqref{eqn:shft}, the
  condition \eqref{eqn:ifthen} can be equivalently restated as
  ``$A \subseteq M - i$ implies $B \subseteq M - i$,'' i.e., instead
  of shifting the antecedents and consequents of the formula, we may
  shift the set $M$.
		
  (b) Observe that $A \Rightarrow B$ is trivially true in $M$ whenever
  $B \subseteq A$ because in that case \eqref{eqn:ifthen} trivially
  holds for any $i$. By definition, $A \Rightarrow B$ is not true in
  $M$, written $M \nmodels A \Rightarrow B$ if{}f there is $i$ such
  that $A+i \subseteq M$ and $B+i \nsubseteq M$.  In words, in the
  time point $i$, $M$ has all the attributes of $A$ but does not have
  an attribute in $B$, i.e., the time point $i$ serves as a
  counterexample.
\end{remark}
	
\begin{figure}[t]
  \centering%
  \setlength{\tabcolsep}{0.7ex}%
  \renewcommand{\arraystretch}{0.65}%
  \def\atr#1#2{\ensuremath{\mathtt{\lowercase{#1#2}}}}%
  \def\X{$\boldsymbol{\times}$} \def\O{}
  \begin{tabular}{|c|*{9}{|c}|}
    \hline
    &$\atr{R}{n}$&$\atr{R}{l}$&$\atr{R}{m}$
    &$\atr{T}{v}$&$\atr{T}{c}$&$\atr{T}{m}$
    &$\atr{W}{l}$&$\atr{W}{m}$&$\atr{W}{s}$ 
    \\
    \hline\hline
    15&\X&\O&\O&\O&\X&\O&\X&\O&\O\\
    16&\X&\O&\O&\O&\X&\O&\O&\X&\O\\
    17&\O&\X&\O&\X&\O&\O&\X&\O&\O\\
    18&\O&\O&\X&\O&\X&\O&\O&\X&\O\\
    19&\X&\O&\O&\O&\X&\O&\O&\O&\X\\
    20&\X&\O&\O&\O&\X&\O&\X&\O&\O\\
    21&\X&\O&\O&\O&\X&\O&\O&\X&\O\\
    22&\X&\O&\O&\O&\X&\O&\O&\X&\O\\
    23&\X&\O&\O&\O&\X&\O&\X&\O&\O\\
    24&\O&\X&\O&\O&\O&\X&\O&\X&\O\\
    25&\X&\O&\O&\O&\X&\O&\X&\O&\O\\
    26&\O&\O&\X&\O&\X&\O&\O&\X&\O\\
    27&\O&\X&\O&\O&\X&\O&\O&\X&\O\\
    28&\X&\O&\O&\O&\X&\O&\O&\X&\O\\
    29&\X&\O&\O&\O&\O&\X&\O&\X&\O\\
    \hline
  \end{tabular}
  \label{fig:weather}
  \caption{Daily weather observation from an airport station.}
\end{figure}
	
\begin{example}%
  \def\atr#1#2{\ensuremath{\mathtt{\lowercase{#1#2}}}}%
  One particular example of a subset $M$ of $\alltatr$ can be a daily
  weather observation from an airport station. For instance, we can
  consider $Y$ as
  \begin{align*}
    Y = \{
    \atr{R}{n},\atr{R}{l},\atr{R}{m},
    \atr{T}{v},\atr{T}{c},\atr{T}{m},
    \atr{W}{l},\atr{W}{m},\atr{W}{s}\},
  \end{align*}
  where the attributes have the following meaning: ``no rainfall''
  (denoted $\atr{R}{n}$), ``light rainfall'' (denoted $\atr{R}{l}$),
  ``moderate rainfall'' (denoted $\atr{R}{m}$),
  % [0 - 0.1), [0.1 - 5), [5 - )
  ``temperature is very cold'', (denoted $\atr{T}{v}$), ``temperature
  is cold'', (denoted $\atr{T}{c}$) ``temperate is mild'', (denoted
  $\atr{T}{m}$)
  % [0 - 10) [10 - 16) [16 - )
  ``light wind'' (denoted $\atr{W}{l}$), ``moderate wind'' (denoted
  $\atr{W}{m}$), and ``strong wind'' (denoted $\atr{W}{s}$).
  % [2 - 12) [12 - 35) [35 - )
  A subset of $\alltatr$ may be depicted as a two-dimensional table
  with rows corresponding to time points, columns corresponding to
  attributes in $Y$, and crosses and blanks in the table, indicating
  whether attributes annotated by time points belong to the
  subset. For instance, if $M$ is given by the table in
  Figure~\ref{fig:weather}, then $\atr{R}{n}^{15} \in M$,
  $\atr{R}{l}^{15} \not\in M$, etc\footnote{%
    The data is based on discretization of real meteorological
    information for Aug 14 which can be found at
    \url{http://www.bom.gov.au/climate/dwo/IDCJDW0100.shtml}.}.  In
  this case, we have
  $M \models \{\atr{W}{l}^0,\atr{W}{m}^1\} \Rightarrow
  \{\atr{T}{c}^3\}$.
  On the other hand,
  $M \nmodels \{\atr{W}{m}^0,\atr{W}{l}^1\} \Rightarrow
  \{\atr{T}{c}^3,\atr{R}{m}^{3},\atr{T}{c}^4\}$
  because for $i = 22$, we have
  $\{\atr{W}{m}^0,\atr{W}{l}^1\} + 22 =
  \{\atr{W}{m}^{22},\atr{W}{l}^{23}\}\subseteq M$
  and
  $\{\atr{T}{c}^3,\atr{R}{m}^{3},\atr{T}{c}^4\} + 22 =
  \{\atr{T}{c}^{25},\atr{R}{m}^{25},\atr{T}{c}^{26}\} \nsubseteq M$.
\end{example}

We consider the following notions of a theory and a model:
	
\begin{definition}\label{def:thmod}\upshape%
  Let $\Sigma$ be a set of formulas (called a \emph{theory}).  A
  subset $M \subseteq \alltatr$ is called a \emph{model of $\Sigma$}
  if $M \models A \Rightarrow B$ for all $A \Rightarrow B \in \Sigma$.
  The system of all models of $\Sigma$ is denoted by
  $\mathrm{Mod}(\Sigma)$, i.e.,
  \begin{align}
    \mathrm{Mod}(\Sigma) &=
    \bigl\{M \subseteq \alltatr \|
    M \models A \Rightarrow B
    \text{ for all } A \Rightarrow B \in \Sigma\bigr\}.
    \label{eqn:Mod}
  \end{align}
\end{definition}
	
In general, $\mathrm{Mod}(\Sigma)$ is infinite and there may be
theories that do not have any finite model. For instance, consider a
theory containing $\emptyset \Rightarrow \{y^0\}$.
	
We now turn our attention to the structure of systems of all models of
attribute implications annotated by time points. In case of the
ordinary attribute implications, it is well known that systems of
their models are exactly closure systems in $Y$. Interestingly, the
systems of models in our case are exactly the algebraic closure
systems which are closed under time shifts. This additional closure
property is introduced by the following definition.
	
\begin{definition}
  A system $\mathcal{S} \subseteq 2^{\alltatr}$ of subsets of
  $\alltatr$ is called \emph{closed under time shifts} whenever
  $M+i \in \mathcal{S}$ for all $M \in \mathcal{S}$ and $i \in \Z$.
\end{definition}
	
We first show that $\mathrm{Mod}(\Sigma)$ is a closure system closed
under time shifts:
	
\begin{theorem}\label{thm:mods}
  Let $\Sigma$ be a theory. Then, $\mathrm{Mod}(\Sigma)$ is closed
  under arbitrary intersections and time shifts.
\end{theorem}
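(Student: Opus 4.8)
The plan is to verify the two closure properties separately, each by unwinding Definition~\ref{def:sem} and using the elementary properties of time shifts from the displayed proposition. Throughout, fix a formula $A \Rightarrow B \in \Sigma$; it suffices to show that the intersection of any family of models satisfies it, and that the $i$-shift of any model satisfies it, since then the same holds for every member of $\Sigma$.

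For closure under arbitrary intersections, let $\{M_k \| k \in K\} \subseteq \mathrm{Mod}(\Sigma)$ and put $M = \bigcap_{k \in K} M_k$ (the case $K = \emptyset$ gives $M = \alltatr$, in which every formula is trivially true, so assume $K \neq \emptyset$). To check $M \models A \Rightarrow B$, take any $i \in \Z$ with $A + i \subseteq M$. Then $A + i \subseteq M_k$ for every $k$, so by $M_k \models A \Rightarrow B$ we get $B + i \subseteq M_k$ for every $k$, hence $B + i \subseteq \bigcap_{k \in K} M_k = M$. This is immediate and uses nothing beyond the set-theoretic behaviour of intersection; note \eqref{eqn:ts_cap} is not even strictly needed here since we only shift $A$ and $B$, not the $M_k$, but it would be the relevant identity if one preferred to argue via Remark~\ref{rem:trivB}(a) by shifting the $M_k$ instead.

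For closure under time shifts, let $M \in \mathrm{Mod}(\Sigma)$ and $j \in \Z$; we must show $M + j \models A \Rightarrow B$. Take $i \in \Z$ with $A + i \subseteq M + j$. Applying a $(-j)$-shift and using monotony \eqref{eqn:ts_mon} together with associativity \eqref{eqn:ts_assoc}, we get $(A + i) - j = A + (i - j) \subseteq (M + j) - j = M$. Since $M \models A \Rightarrow B$, instantiating its defining condition at the time point $i - j$ yields $B + (i - j) \subseteq M$. Shifting back by $j$ and using \eqref{eqn:ts_mon} and \eqref{eqn:ts_assoc} once more gives $B + i = (B + (i-j)) + j \subseteq M + j$, as required.

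I do not expect any real obstacle: both parts are short diagram-chases through the definitions, and the only mild subtlety is bookkeeping the sign and the order of shifts in the second part, where one must be careful to instantiate $M \models A \Rightarrow B$ at the shifted index $i - j$ rather than at $i$. The proposition on time shifts does all the heavy lifting, and an alternative streamlining of the second part is to invoke Remark~\ref{rem:trivB}(a), which says $M + j \models A \Rightarrow B$ iff for all $i$, $A \subseteq (M+j) - i$ implies $B \subseteq (M+j) - i$; since $(M + j) - i = M - (i - j)$ by \eqref{eqn:ts_assoc}, this is just the defining condition for $M$ re-indexed, making the equivalence essentially definitional.
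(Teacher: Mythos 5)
Your proof is correct and follows essentially the same route as the paper's: the intersection part is the direct set-theoretic argument, and the time-shift part instantiates $M \models A \Rightarrow B$ at the re-indexed time point $i-j$ exactly as the paper does. The extra remarks (the empty-family case and the alternative via Remark~\ref{rem:trivB}(a)) are fine but not needed.
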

\begin{proof}
  The fact that $\mathrm{Mod}(\Sigma)$ is closed under arbitrary
  intersections follows by analogous arguments as in the case of
  ordinary attribute implications taking into account
  that~\eqref{eqn:ifthen} must hold for all $i \in \Z$. That is, for
  any $\mathcal{M} \subseteq \mathrm{Mod}(\Sigma)$ and arbitrary
  $A \Rightarrow B \in \Sigma$, we reason as follows.  If
  $A + i \subseteq \textstyle\bigcap \mathcal{M}$, then
  $A + i \subseteq M$ for all $M \in \mathcal{M}$ and thus
  $B + i \subseteq M$ for all $M \in \mathcal{M}$ because
  $\mathcal{M} \subseteq \mathrm{Mod}(\Sigma)$. Therefore,
  $B + i \subseteq \textstyle\bigcap \mathcal{M}$, proving
  $\textstyle\bigcap \mathcal{M} \models A \Rightarrow B$ which
  further gives
  $\textstyle\bigcap \mathcal{M} \in \mathrm{Mod}(\Sigma)$ since
  $A \Rightarrow B \in \Sigma$ was arbitrary.
		
  In order to show that $\mathrm{Mod}(\Sigma)$ is closed under time
  shifts, take $M \in \mathrm{Mod}(\Sigma)$ and $j \in \Z$. It
  suffices to prove that $M+j \in \mathrm{Mod}(\Sigma)$. In order to
  see that, take $A \Rightarrow B \in \Sigma$. If $A+i \subseteq M+j$,
  then $A+(i-j) \subseteq M$ and thus $B+(i-j) \subseteq M$ because
  $M \in \mathrm{Mod}(\Sigma)$ and $A \Rightarrow B \in
  \Sigma$.
  Therefore, $B+i \subseteq M+j$, i.e.,
  $M + j \models A \Rightarrow B$ for arbitrary
  $A \Rightarrow B \in \Sigma$, showing
  $M+j \in \mathrm{Mod}(\Sigma)$.
\end{proof}
	
Taking into account Theorem~\ref{thm:mods}, for each theory $\Sigma$,
we may consider a closure operator induced by $\mathrm{Mod}(\Sigma)$
which maps each $M \subseteq \alltatr$ to the least model of $\Sigma$
containing $M$.
	
\begin{definition}
  Let $\Sigma$ be a theory. For each $M \subseteq \alltatr$, we put
  \begin{align}
    [M]_\Sigma &=
                 \textstyle\bigcap\{N \in \mathrm{Mod}(\Sigma) \| 
                 M \subseteq N\}
                 \label{eqn:semclos}
  \end{align}
  and call $[M]_\Sigma$ the \emph{semantic closure} of $M$ under
  $\Sigma$.
\end{definition}
	
Using the well-known relationship between closure operators and
closure systems, $[{\cdots}]_\Sigma$ defined by~\eqref{eqn:semclos} is
indeed a closure operator. Note that in general, $[M]_\Sigma$ is
infinite even if $Y$ and $M$ are finite. This is in contrast with the
ordinary attribute implications using finite $Y$. Nevertheless, in our
setting we can prove that even if $[M]_\Sigma$ is infinite, it can be
obtained as a union of finitely generated elements of
$\mathrm{Mod}(\Sigma)$, showing that $\mathrm{Mod}(\Sigma)$ is in fact
an \emph{algebraic} closure system.
	
\begin{theorem}\label{th:algmod}
  Let $\Sigma$ be a theory. For each $M \subseteq \alltatr$, we have
  \begin{align}
    [M]_{\Sigma} = \textstyle\bigcup
    \{[N]_{\Sigma}\| N \text{ is finite subset of } M\}.
    \label{eqn:algmod}
  \end{align}
\end{theorem}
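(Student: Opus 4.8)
The plan is to prove the two inclusions in \eqref{eqn:algmod} separately, the routine one being ``$\supseteq$'' and the substantive one ``$\subseteq$''. For ``$\supseteq$'', observe that every finite $N \subseteq M$ satisfies $N \subseteq M$, so monotony of the closure operator $[{\cdots}]_\Sigma$ gives $[N]_\Sigma \subseteq [M]_\Sigma$; taking the union over all such $N$ yields $\bigcup\{[N]_\Sigma \| N \text{ finite subset of } M\} \subseteq [M]_\Sigma$.

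For the converse inclusion, write $U$ for the right-hand side of \eqref{eqn:algmod}. Since by \eqref{eqn:semclos} the set $[M]_\Sigma$ is the least model of $\Sigma$ containing $M$, it suffices to verify that $M \subseteq U$ and that $U \in \mathrm{Mod}(\Sigma)$. The first is immediate: for any $y^i \in M$ the singleton $\{y^i\}$ is a finite subset of $M$, and $y^i \in \{y^i\} \subseteq [\{y^i\}]_\Sigma \subseteq U$ by extensivity. The core of the argument is $U \in \mathrm{Mod}(\Sigma)$, and here the key points are that antecedents of formulas are finite and that the family $\{[N]_\Sigma \| N \text{ finite subset of } M\}$ is directed by inclusion: for finite $N_1,N_2 \subseteq M$ the set $N_1 \cup N_2$ is again a finite subset of $M$ and monotony gives $[N_1]_\Sigma \cup [N_2]_\Sigma \subseteq [N_1 \cup N_2]_\Sigma$. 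Now fix $A \Rightarrow B \in \Sigma$ and $i \in \Z$ and suppose $A + i \subseteq U$. Since $A$, and hence by \eqref{eqn:shft} also $A+i$, is finite, each of its finitely many elements lies in some $[N_j]_\Sigma$ with $N_j$ a finite subset of $M$; putting $N = \bigcup_j N_j$, which is a finite subset of $M$, monotony yields $A + i \subseteq [N]_\Sigma$. But $[N]_\Sigma \in \mathrm{Mod}(\Sigma)$, so $[N]_\Sigma \models A \Rightarrow B$, whence $B + i \subseteq [N]_\Sigma \subseteq U$. As $A \Rightarrow B \in \Sigma$ and $i \in \Z$ were arbitrary, $U \models A \Rightarrow B$ for every formula in $\Sigma$, i.e., $U \in \mathrm{Mod}(\Sigma)$. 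Combining the two inclusions gives \eqref{eqn:algmod}.

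The main obstacle is less a difficulty than a point requiring care: the compactness step in which, from $A+i \subseteq U$, one extracts a single finite $N \subseteq M$ with $A+i \subseteq [N]_\Sigma$. This rests essentially on the finiteness of the antecedent $A$ (so that $A+i$ is finite) together with the directedness of the closures of finite subsets of $M$; dropping the finiteness of antecedents would break the argument. Note also that the shift operation plays no essential role here beyond preserving finiteness, so the proof is structurally the same as the classical ``algebraicity'' argument for ordinary attribute implications, with $A+i$ in place of $A$.
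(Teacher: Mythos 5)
Your proof is correct and follows essentially the same route as the paper's: both establish ``$\supseteq$'' by monotony and prove ``$\subseteq$'' by showing the union is a model of $\Sigma$ containing $M$, using the finiteness of $A+i$ to collect a single finite $N \subseteq M$ with $A+i \subseteq [N]_\Sigma$ and then invoking $[N]_\Sigma \in \mathrm{Mod}(\Sigma)$. The explicit mention of directedness is a harmless repackaging of the paper's step of taking the union of the witnessing finite sets $N_{y^j}$.
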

\begin{proof}
  Observe that the monotony of $[{\cdots}]_\Sigma$ yields
  $[N]_{\Sigma} \subseteq [M]_{\Sigma}$ for any finite $N \subseteq M$
  and thus the ``$\supseteq$''-part of~\eqref{eqn:algmod} is obvious.
		
  For brevity, put
  $\mathcal{M} = \{[N]_{\Sigma}\| N \text{ is finite subset of } M\}$.
  In order to prove the ``$\subseteq$''-part of~\eqref{eqn:algmod}, it
  suffices to show that $\bigcup\mathcal{M}$ is a model of $\Sigma$
  which contains~$M$ because $[M]_\Sigma$ is the least model of
  $\Sigma$ containing $M$. For any $y^i \in M$, we have
  $[\{y^i\}]_\Sigma \in \mathcal{M}$ and thus
  $y^i \in [\{y^i\}]_\Sigma \subseteq \bigcup\mathcal{M}$ by the
  extensivity of $[{\cdots}]_\Sigma$ which proves
  $M \subseteq \bigcup\mathcal{M}$.
		
  Now, take any $A \Rightarrow B \in \Sigma$ and suppose that
  $A + i \subseteq \bigcup\mathcal{M}$. Observe that for every
  $y^j \in A+i$ there is $[N_{y^j}]_\Sigma \in \mathcal{M}$ such that
  $y^j \in [N_{y^j}]_\Sigma$. Moreover, the fact that $A+i$ is finite
  yields that $\bigcup\{N_{y^j} \,|\, y^j \in A+i\}$ is finite and we
  thus have
  $\bigl[\bigcup\{N_{y^j} \,|\, y^j \in A+i\}\bigr]_\Sigma \in
  \mathcal{M}$.
  Clearly,
  $A + i \subseteq \bigl[\bigcup\{N_{y^j} \,|\, y^j \in
  A+i\}\bigr]_\Sigma$
  and thus it follows that
  $B + i \subseteq \bigl[\bigcup\{N_{y^j} \,|\, y^j \in
  A+i\}\bigr]_\Sigma \subseteq \bigcup\mathcal{M}$
  because $A \Rightarrow B \in \Sigma$. Altogether,
  $\bigcup\mathcal{M} \models A \Rightarrow B$ and so
  $\bigcup\mathcal{M} \in \mathrm{Mod}(\Sigma)$.
\end{proof}
	
Using Theorem~\ref{th:algmod}, we may establish that each algebraic
closure system closed under time shifts is a system of models of some
theory consisting of attribute implications annotated by time points.
Before we go to the proof, we show how the property of being closed
under time shifts can be formulated in terms of closure operators.
	
\begin{lemma}\label{semclo-shift}
  Let $\mathcal{S}$ be a closure system which is closed under
  arbitrary time shifts and let $\mathrm{C}_\mathcal{S}$ be the
  induced closure operator.  For each $M \subseteq \alltatr$ and
  $i \in \Z$,
  \begin{align}
    \mathrm{C}_\mathcal{S}(M + i) = \mathrm{C}_\mathcal{S}(M) + i.
    \label{eqn:sh_invar}
  \end{align}
\end{lemma}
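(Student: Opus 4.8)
The plan is to prove the two inclusions of~\eqref{eqn:sh_invar} separately, using the fact that $\mathrm{C}_\mathcal{S}(M)$ is the least member of $\mathcal{S}$ containing $M$ together with the closure of $\mathcal{S}$ under time shifts and the shift identities~\eqref{eqn:ts_mon} and~\eqref{eqn:ts_assoc}. The key observation is that shifting by $i$ is an order-automorphism of $2^{\alltatr}$ (monotone with inverse shifting by $-i$) that maps $\mathcal{S}$ onto itself, so it must commute with the induced closure operator.

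For the inclusion $\mathrm{C}_\mathcal{S}(M+i) \subseteq \mathrm{C}_\mathcal{S}(M)+i$: since $\mathcal{S}$ is closed under time shifts and $\mathrm{C}_\mathcal{S}(M) \in \mathcal{S}$, we have $\mathrm{C}_\mathcal{S}(M)+i \in \mathcal{S}$. Moreover $M \subseteq \mathrm{C}_\mathcal{S}(M)$ by extensivity, so by~\eqref{eqn:ts_mon} we get $M+i \subseteq \mathrm{C}_\mathcal{S}(M)+i$. Thus $\mathrm{C}_\mathcal{S}(M)+i$ is a member of $\mathcal{S}$ containing $M+i$, and since $\mathrm{C}_\mathcal{S}(M+i)$ is the least such member, $\mathrm{C}_\mathcal{S}(M+i) \subseteq \mathrm{C}_\mathcal{S}(M)+i$.

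For the reverse inclusion, I would apply the inclusion just proved with $M$ replaced by $M+i$ and $i$ replaced by $-i$, obtaining $\mathrm{C}_\mathcal{S}((M+i)+(-i)) \subseteq \mathrm{C}_\mathcal{S}(M+i)+(-i)$. By~\eqref{eqn:ts_assoc} the left-hand side is $\mathrm{C}_\mathcal{S}(M+(i+(-i))) = \mathrm{C}_\mathcal{S}(M)$, so $\mathrm{C}_\mathcal{S}(M) \subseteq \mathrm{C}_\mathcal{S}(M+i)-i$. Shifting both sides by $i$ and using~\eqref{eqn:ts_mon} and~\eqref{eqn:ts_assoc} gives $\mathrm{C}_\mathcal{S}(M)+i \subseteq (\mathrm{C}_\mathcal{S}(M+i)-i)+i = \mathrm{C}_\mathcal{S}(M+i)$, which is the desired inclusion. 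Combining the two yields~\eqref{eqn:sh_invar}.

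There is no real obstacle here; the argument is essentially a symmetry/invertibility trick. The only point requiring a little care is making sure the shift identities are invoked in the right direction—in particular that $(N-i)+i = N$ follows from~\eqref{eqn:ts_assoc}—and that monotony of shifting (rather than of $\mathrm{C}_\mathcal{S}$ alone) is what lets one push a containment through a shift. Everything else is bookkeeping with the well-known least-closed-set characterization of an induced closure operator.
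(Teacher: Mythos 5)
Your proof is correct and follows essentially the same route as the paper: the first inclusion is argued identically (shift-closedness of $\mathcal{S}$ plus the least-closed-set characterization), and for the reverse inclusion both arguments reduce to the containment $\mathrm{C}_\mathcal{S}(M) \subseteq \mathrm{C}_\mathcal{S}(M+i)-i$, which you obtain by a symmetry appeal to the first half while the paper derives it directly from extensivity and shift-closedness. The difference is purely presentational.
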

\begin{proof}
  ``$\subseteq$'': Since $\mathcal{S}$ is closed under time shifts, we
  get $\mathrm{C}_\mathcal{S}(M) + i \in \mathcal{S}$. In addition,
  $M + i \subseteq \mathrm{C}_\mathcal{S}(M) + i$ on account of the
  extensivity of $\mathrm{C}_\mathcal{S}$ and~\eqref{eqn:ts_mon}.
  Therefore,
  $\mathrm{C}_\mathcal{S}(M + i) \subseteq \mathrm{C}_\mathcal{S}(M) +
  i$ by monotony and idempotency of $\mathrm{C}_\mathcal{S}$.
		
  ``$\supseteq$'': The extensivity of $\mathrm{C}_\mathcal{S}$ gives
  $M + i \subseteq \mathrm{C}_\mathcal{S}(M+i)$ and thus
  $M \subseteq \mathrm{C}_\mathcal{S}(M+i)-i$.  Moreover,
  $\mathrm{C}_\mathcal{S}(M+i)-i \in \mathcal{S}$ because
  $\mathcal{S}$ is closed under time shifts and thus
  $\mathrm{C}_\mathcal{S}(M) \subseteq \mathrm{C}_\mathcal{S}(M+i)-i$
  which gives
  $\mathrm{C}_\mathcal{S}(M)+i \subseteq \mathrm{C}_\mathcal{S}(M+i)$.
\end{proof}
	
\begin{lemma}
  Let $\mathrm{C}$ be a closure operator satisfying
  $\mathrm{C}(M+i) = \mathrm{C}(M) + i$ for each
  $M \subseteq \alltatr$ and $i \in \Z$. Then, the system
  $\mathcal{S}_\mathrm{C}$ of all fixed points of $\mathrm{C}$ is
  closed under arbitrary time shifts.
\end{lemma}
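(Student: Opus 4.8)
The plan is to verify directly, from the definitions, that every fixed point of $\mathrm{C}$ is carried by an arbitrary time shift to another fixed point. Recall that $\mathcal{S}_\mathrm{C} = \{M \subseteq \alltatr \mid \mathrm{C}(M) = M\}$, and that by the correspondence between closure operators and closure systems recalled in Section~\ref{sec:prelim}, $\mathcal{S}_\mathrm{C}$ is a closure system; hence ``$\mathcal{S}_\mathrm{C}$ is closed under arbitrary time shifts'' is meaningful and unfolds to the statement: whenever $\mathrm{C}(M) = M$ and $i \in \Z$, then $M + i \in \mathcal{S}_\mathrm{C}$, i.e.\ $\mathrm{C}(M + i) = M + i$.

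First I would fix such an $M$ and an arbitrary $i \in \Z$, so that $\mathrm{C}(M) = M$. Applying the assumed commutation identity $\mathrm{C}(M + i) = \mathrm{C}(M) + i$ and substituting $\mathrm{C}(M) = M$ gives $\mathrm{C}(M + i) = M + i$ at once, which says precisely that $M + i$ is a fixed point of $\mathrm{C}$, i.e.\ $M + i \in \mathcal{S}_\mathrm{C}$. Since $M \in \mathcal{S}_\mathrm{C}$ and $i \in \Z$ were arbitrary, this proves that $\mathcal{S}_\mathrm{C}$ is closed under arbitrary time shifts.

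I would point out that this is exactly the converse direction of Lemma~\ref{semclo-shift}: there, closedness of a closure system under time shifts was used to derive the identity $\mathrm{C}_\mathcal{S}(M + i) = \mathrm{C}_\mathcal{S}(M) + i$ for the induced operator, whereas here we assume that identity for $\mathrm{C}$ and recover closedness of its fixed-point system. I do not expect any real obstacle; the only things worth remarking are that the argument uses none of the closure-operator axioms \eqref{eqn:c:ext}--\eqref{eqn:c:idm} beyond the stated shift-invariance property, and that $\mathcal{S}_\mathrm{C}$ being a closure system is what licenses speaking of its closure under time shifts in the first place.
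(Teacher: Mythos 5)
Your proof is correct and is essentially identical to the paper's: take a fixed point $M = \mathrm{C}(M)$, apply the shift-invariance identity to get $\mathrm{C}(M+i) = \mathrm{C}(M)+i = M+i$, and conclude $M+i \in \mathcal{S}_\mathrm{C}$. The additional remarks about the converse relationship to Lemma~\ref{semclo-shift} match the paper's framing as well.
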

\begin{proof}
  Take $M \in \mathcal{S}_\mathrm{C}$ and any $i \in \Z$, i.e.,
  $M \subseteq \alltatr$ such that $M = \mathrm{C}(M)$.  Clearly,
  $M + i = \mathrm{C}(M) + i$ and since
  $\mathrm{C}(M) + i = \mathrm{C}(M+i)$, we get
  $M + i = \mathrm{C}(M+i)$, proving that
  $M + i \in \mathcal{S}_\mathrm{C}$.
\end{proof}
	
The previous two lemmas give the following consequence.
	
\begin{corollary}
  A closure system $\mathcal{S}$ is closed under arbitrary time shifts
  if{}f the corresponding closure operator $\mathrm{C}_\mathcal{S}$
  satisfies~\eqref{eqn:sh_invar}.  \qed
\end{corollary}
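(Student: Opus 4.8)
The plan is to derive this equivalence as an immediate consequence of the two preceding lemmas, together with the standard bijective correspondence between closure systems and closure operators on $\alltatr$ recalled in Section~\ref{sec:prelim}. Recall that this correspondence assigns to a closure system $\mathcal{S}$ the operator $\mathrm{C}_\mathcal{S}$ sending $M$ to the least member of $\mathcal{S}$ containing $M$, and assigns to a closure operator $\mathrm{C}$ the system of its fixed points; these two assignments are mutually inverse, so in particular $\mathcal{S}$ coincides with the system $\mathcal{S}_{\mathrm{C}_\mathcal{S}}$ of all fixed points of $\mathrm{C}_\mathcal{S}$. I would state this identification explicitly at the start, since it is the only non-citation ingredient.

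For the ``only if'' direction, I would assume $\mathcal{S}$ is closed under arbitrary time shifts and invoke Lemma~\ref{semclo-shift} verbatim: it yields $\mathrm{C}_\mathcal{S}(M+i) = \mathrm{C}_\mathcal{S}(M)+i$ for all $M \subseteq \alltatr$ and $i \in \Z$, which is precisely condition~\eqref{eqn:sh_invar}. For the ``if'' direction, I would assume $\mathrm{C}_\mathcal{S}$ satisfies~\eqref{eqn:sh_invar} and apply the second (unnamed) lemma with $\mathrm{C} := \mathrm{C}_\mathcal{S}$; its hypothesis $\mathrm{C}(M+i)=\mathrm{C}(M)+i$ is exactly~\eqref{eqn:sh_invar}, so its conclusion gives that $\mathcal{S}_{\mathrm{C}_\mathcal{S}}$ is closed under arbitrary time shifts. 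Since $\mathcal{S}_{\mathrm{C}_\mathcal{S}} = \mathcal{S}$ by the correspondence recalled above, $\mathcal{S}$ itself is closed under arbitrary time shifts, and the two implications together give the claimed equivalence.

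I do not expect a genuine obstacle here: both directions are direct applications of the two lemmas, and the only step requiring a moment's care is the bookkeeping identification $\mathcal{S} = \mathcal{S}_{\mathrm{C}_\mathcal{S}}$ (and symmetrically that $\mathrm{C}_\mathcal{S}$ is the operator induced by $\mathcal{S}$), which is standard. If one wanted to be fully self-contained one could unfold that identification in a line, but it is safe to cite~\cite{Bir:LT,DaPr} as in the preliminaries.
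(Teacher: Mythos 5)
Your proposal is correct and matches the paper's intent exactly: the corollary is stated with no written proof precisely because it follows immediately from Lemma~\ref{semclo-shift} (for the only-if direction) and the subsequent unnamed lemma (for the if direction), mediated by the standard identification of $\mathcal{S}$ with the fixed-point system of $\mathrm{C}_\mathcal{S}$. Your explicit mention of that bookkeeping identification is the only elaboration beyond what the paper leaves implicit.
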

	
Based on our previous observations, we may now establish the
connection between systems of models of attribute implications
annotated by time points and algebraic closure systems closed under
time shifts.
	
\begin{theorem}\label{thm:mods2}
  Let $\mathcal{S} \subseteq 2^{\alltatr}$ be an algebraic closure
  system which is closed under time shifts. Then, there is a theory
  $\Sigma$ such that $\mathcal{S} = \mathrm{Mod}(\Sigma)$.
\end{theorem}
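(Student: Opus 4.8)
The plan is to exhibit an explicit theory $\Sigma$ built directly from the closure operator $\mathrm{C}_{\mathcal{S}}$ induced by $\mathcal{S}$ and then verify the two inclusions $\mathcal{S} \subseteq \mathrm{Mod}(\Sigma)$ and $\mathrm{Mod}(\Sigma) \subseteq \mathcal{S}$. The one point requiring care is that, unlike in the setting of ordinary attribute implications, $\mathrm{C}_{\mathcal{S}}(A)$ may be infinite even for finite $A$ (as the example $\emptyset \Rightarrow \{y^0\}$ already indicates), so $A \Rightarrow \mathrm{C}_{\mathcal{S}}(A)$ need not be a legal formula in the sense of Definition~\ref{def:cimpl}. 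I would therefore take
\begin{align*}
  \Sigma = \bigl\{ A \Rightarrow \{z^j\} \bigm|
    A \subseteq \alltatr \text{ finite, } z^j \in \mathrm{C}_{\mathcal{S}}(A) \bigr\},
\end{align*}
which consists of genuine attribute implications annotated by time points, and it is precisely the algebraicity of $\mathcal{S}$ that will let us recover all of $\mathrm{C}_{\mathcal{S}}(M)$ from such finitely generated facts.

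For the inclusion $\mathcal{S} \subseteq \mathrm{Mod}(\Sigma)$, I would fix $M \in \mathcal{S}$ and $A \Rightarrow \{z^j\} \in \Sigma$ and suppose $A + i \subseteq M$ for some $i \in \Z$; equivalently $A \subseteq M - i$. Since $\mathcal{S}$ is closed under time shifts, $M - i \in \mathcal{S}$, i.e., $M - i$ is a fixed point of $\mathrm{C}_{\mathcal{S}}$. Monotony together with the defining property of $\Sigma$ then gives $z^j \in \mathrm{C}_{\mathcal{S}}(A) \subseteq \mathrm{C}_{\mathcal{S}}(M-i) = M - i$, hence $\{z^j\} + i \subseteq M$; so $M \models A \Rightarrow \{z^j\}$. (One could invoke Lemma~\ref{semclo-shift} here instead, but the direct argument is shorter.)

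For the inclusion $\mathrm{Mod}(\Sigma) \subseteq \mathcal{S}$, I would fix $M \in \mathrm{Mod}(\Sigma)$; since $\mathcal{S}$ is exactly the system of fixed points of $\mathrm{C}_{\mathcal{S}}$ and extensivity gives $M \subseteq \mathrm{C}_{\mathcal{S}}(M)$, it suffices to show $\mathrm{C}_{\mathcal{S}}(M) \subseteq M$. Take $z^j \in \mathrm{C}_{\mathcal{S}}(M)$. Because $\mathcal{S}$ is algebraic, $\mathrm{C}_{\mathcal{S}}(M) = \bigcup\{\mathrm{C}_{\mathcal{S}}(N) \mid N \subseteq M \text{ finite}\}$, so $z^j \in \mathrm{C}_{\mathcal{S}}(N)$ for some finite $N \subseteq M$. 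Then $N \Rightarrow \{z^j\} \in \Sigma$, and applying $M \models N \Rightarrow \{z^j\}$ with the sliding time point $i = 0$ (using $N = N + 0 \subseteq M$) yields $z^j \in \{z^j\} + 0 \subseteq M$. Hence $\mathrm{C}_{\mathcal{S}}(M) \subseteq M$ and $M \in \mathcal{S}$.

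I do not expect a serious obstacle: the shift-closure hypothesis is exactly what pushes the first inclusion through (it lets the ``current'' time point be moved to $0$), and algebraicity is exactly what pushes the second through (it reduces membership in $\mathrm{C}_{\mathcal{S}}(M)$ to finitely generated instances that $\Sigma$ records). The only thing one must not overlook is the finiteness constraint imposed by Definition~\ref{def:cimpl}, which is what forces the single-attribute (or, more generally, finite) consequents in the definition of $\Sigma$.
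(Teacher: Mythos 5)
Your proposal is correct and follows essentially the same route as the paper: define $\Sigma$ from the induced closure operator $\mathrm{C}_{\mathcal{S}}$, use closure under time shifts (equivalently, Lemma~\ref{semclo-shift}) for the inclusion $\mathcal{S} \subseteq \mathrm{Mod}(\Sigma)$, and use algebraicity for the converse. The only cosmetic differences are that you restrict to singleton consequents where the paper allows arbitrary finite $B \subseteq \mathrm{C}_{\mathcal{S}}(A)$, and that you argue the shift-invariance step directly from the fixed-point property of $M-i$ rather than citing the lemma; neither changes the substance.
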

\begin{proof}
  Assume that $\mathrm{C}_\mathcal{S}$ is the closure operator induced
  by $\mathcal{S}$ and put
  \begin{align*}
    \Sigma = \{A \Rightarrow B \,|\,
    A \subseteq \alltatr \text{, } B \subseteq 
    \mathrm{C}_\mathcal{S}(A) \text{, and } A,B \text{ are finite}\}.
  \end{align*}
  We show that $\mathcal{S} = \mathrm{Mod}(\Sigma)$ by proving that
  both inclusions hold.
		
  ``$\subseteq$'': Take $M \in \mathcal{S}$ and a finite
  $B \subseteq \mathrm{C}_\mathcal{S}(A)$ for a finite
  $A \subseteq \alltatr$. We now check that
  $M \models A \Rightarrow B$. Assume that $A + i \subseteq M$.  Then,
  $A \subseteq M - i$ and by the monotony of $\mathrm{C}_\mathcal{S}$
  and utilizing~\eqref{eqn:sh_invar}, we have
  $\mathrm{C}_\mathcal{S}(A) \subseteq \mathrm{C}_\mathcal{S}(M - i) =
  \mathrm{C}_\mathcal{S}(M) - i = M - i$
  which yields that $B \subseteq M - i$, i.e., $B + i \subseteq M$,
  showing $M \models A \Rightarrow B$. As a consequence,
  $\mathcal{S} \subseteq \mathrm{Mod}(\Sigma)$.
		
  ``$\supseteq$'': We let $M\in \mathrm{Mod}(\Sigma)$ and prove that
  $M \in \mathcal{S}$ which means to prove that
  $\mathrm{C}_\mathcal{S}(M) = M$. Since $\mathcal{S}$ is an algebraic
  closure system, it suffices to check that
  $\mathrm{C}_\mathcal{S}(A) \subseteq M$ for each finite
  $A \subseteq M$.  Assuming that $A \subseteq M$ and $A$ is finite,
  take any finite $B \subseteq \mathrm{C}_\mathcal{S}(A)$. By
  definition, $A \Rightarrow B \in \Sigma$ and since
  $M \in \mathrm{Mod}(\Sigma)$, we get that for $i = 0$,
  $A + 0 \subseteq M$ implies $B + 0 \subseteq M$.  Since $A + 0 = A$
  and $A \subseteq M$, we therefore obtain $B = B + 0 \subseteq
  M$.
  Since $B$ was an arbitrary finite subset of
  $\mathrm{C}_\mathcal{S}(A)$, we conclude that
  $\mathrm{C}_\mathcal{S}(A) \subseteq M$.
\end{proof}
	
We now define semantic entailment of formulas and explore its
properties.  The notion is defined the usual way using the notion of a
model introduced before.
	
\begin{definition}\label{def:entail}\upshape%
  Let $\Sigma$ be a theory. Formula \emph{$A \Rightarrow B$ is
    semantically entailed by $\Sigma$} if $M \models A \Rightarrow B$
  for each $M \in \mathrm{Mod}(\Sigma)$.
\end{definition}

The following lemma justifies the description of time points in
attribute implications as \emph{relative} time points.  Namely, it
states that each $A \Rightarrow B$ semantically entails all formulas
resulting by shifting the antecedent and consequent of
$A \Rightarrow B$ by a constant factor.
	
\begin{lemma}\label{tai-relativ}
  $\{A \Rightarrow B\} \models A+i \Rightarrow B+i$.
\end{lemma}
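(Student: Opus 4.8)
The plan is to unfold both the semantic entailment ``$\{A \Rightarrow B\} \models A+i \Rightarrow B+i$'' and the notion of truth of a formula in a model, and then reduce the claim to a re-indexing of the sliding time point. So I would fix an arbitrary $M \in \mathrm{Mod}(\{A \Rightarrow B\})$, i.e.\ an $M \subseteq \alltatr$ with $M \models A \Rightarrow B$, and show $M \models A+i \Rightarrow B+i$. By Definition~\ref{def:sem} this means: for every $j \in \Z$, if $(A+i)+j \subseteq M$ then $(B+i)+j \subseteq M$.

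The key step is the associativity of time shifts, \eqref{eqn:ts_assoc}: $(A+i)+j = A+(i+j)$ and likewise $(B+i)+j = B+(i+j)$. So the hypothesis $(A+i)+j \subseteq M$ is literally $A+(i+j) \subseteq M$, and since $M \models A \Rightarrow B$ and \eqref{eqn:ifthen} holds for \emph{all} integers — in particular for the integer $i+j$ — we get $B+(i+j) \subseteq M$, which is $(B+i)+j \subseteq M$ again by \eqref{eqn:ts_assoc}. Since $j \in \Z$ was arbitrary, $M \models A+i \Rightarrow B+i$; since $M$ was an arbitrary model of $\{A \Rightarrow B\}$, the entailment follows by Definition~\ref{def:entail}.

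There is really no obstacle here: the only mildly delicate point is to make sure one quantifies over the sliding time point correctly, i.e.\ to notice that the $j$ ranging over $\Z$ in the definition of $M \models A+i \Rightarrow B+i$ can be absorbed into the $i$ already built into the formula, because $\Z$ is closed under addition and the universal quantifier in~\eqref{eqn:ifthen} ranges over all of $\Z$. (Equivalently, one could phrase the whole argument via the ``shift $M$ instead of the formula'' reformulation in Remark~\ref{rem:trivB}(a): $A+(i+j) \subseteq M$ iff $A \subseteq M-(i+j)$, and then $B \subseteq M-(i+j)$ iff $B+(i+j)\subseteq M$.) I would present the one-line computation with~\eqref{eqn:ts_assoc} cited explicitly, and note in passing that the converse inclusion — $\{A+i \Rightarrow B+i\} \models A \Rightarrow B$ — holds by the same reasoning with $i$ replaced by $-i$, so in fact $\{A\Rightarrow B\}$ and $\{A+i \Rightarrow B+i\}$ are semantically equivalent.
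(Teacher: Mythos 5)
Your proof is correct and captures the same essential idea as the paper's: the shift $i$ is absorbed into the universally quantified sliding time point via \eqref{eqn:ts_assoc}. The only cosmetic difference is that the paper routes the argument through Theorem~\ref{thm:mods} (passing to the shifted model $M-j$), whereas you apply \eqref{eqn:ifthen} directly at the time point $i+j$ --- your version is, if anything, slightly more self-contained.
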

\begin{proof}
  Take $M \in \mathrm{Mod}(\{A \Rightarrow B\})$ and let
  $(A+i)+j \subseteq M$. Then, $A+i \subseteq M-j$ and by
  Theorem~\ref{thm:mods}, we get
  $M-j \in \mathrm{Mod}(\{A \Rightarrow B\})$ which yields
  $B+i \subseteq M-j$ and thus $(B+i)+j \subseteq M$, proving
  $M \models A+i \Rightarrow B+i$
\end{proof}

Analogously as for the classic attribute implications, the
semantic entailment of $A \Rightarrow B$ by a theory $\Sigma$
can be checked using the least model of $\Sigma$ generated by
$A$ as it is shown in the following theorem.

\begin{theorem}\label{th:semclos}
  For any $\Sigma$ and $A \Rightarrow B$, the following
  conditions are equivalent: \bgroup%
  \setlength{\leftmargini}{3em}%
  \begin{enumerate}\parskip=\smallskipamount%
  \item[\itm{1}] $\Sigma \models A \Rightarrow B$,
  \item[\itm{2}] $[A]_\Sigma \models A \Rightarrow B$,
  \item[\itm{3}] $B \subseteq [A]_\Sigma$.
  \end{enumerate}%
  \egroup%
\end{theorem}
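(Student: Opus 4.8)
The plan is to prove the cycle of implications $\itm{1} \Rightarrow \itm{2} \Rightarrow \itm{3} \Rightarrow \itm{1}$, mirroring the classical argument for ordinary attribute implications but taking care of the sliding time point $i$ throughout.

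First, $\itm{1} \Rightarrow \itm{2}$ is immediate: by Theorem~\ref{thm:mods} the set $[A]_\Sigma$ is a model of $\Sigma$ (it is an intersection of members of $\mathrm{Mod}(\Sigma)$, hence lies in $\mathrm{Mod}(\Sigma)$), so if $A \Rightarrow B$ holds in every model of $\Sigma$ it holds in this particular one.

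Second, for $\itm{2} \Rightarrow \itm{3}$ I would instantiate the definition of truth in $[A]_\Sigma$ at the time point $i = 0$. By extensivity of $[{\cdots}]_\Sigma$ we have $A = A + 0 \subseteq [A]_\Sigma$, so the if-then condition~\eqref{eqn:ifthen} forces $B = B + 0 \subseteq [A]_\Sigma$, which is exactly $\itm{3}$.

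Third, and this is the only step with any real content, $\itm{3} \Rightarrow \itm{1}$. Suppose $B \subseteq [A]_\Sigma$ and take an arbitrary $M \in \mathrm{Mod}(\Sigma)$ and $i \in \Z$ with $A + i \subseteq M$; I must show $B + i \subseteq M$. From $A + i \subseteq M$ we get $A \subseteq M - i$, and by Theorem~\ref{thm:mods} the shifted set $M - i$ is again a model of $\Sigma$. Hence $M - i$ is a member of $\mathrm{Mod}(\Sigma)$ containing $A$, so by the definition~\eqref{eqn:semclos} of the semantic closure as the intersection of all such models, $[A]_\Sigma \subseteq M - i$. Combining with $B \subseteq [A]_\Sigma$ gives $B \subseteq M - i$, i.e.\ $B + i \subseteq M$, as required; since $M$ and $i$ were arbitrary, $\Sigma \models A \Rightarrow B$. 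The main obstacle — really the only subtlety compared with the classical case — is handling the quantification over all time points $i$, and the key device is the shift-invariance of $\mathrm{Mod}(\Sigma)$ from Theorem~\ref{thm:mods}, which lets us reduce the condition at time $i$ to a containment statement about $M - i$ and thereby invoke the closure $[A]_\Sigma$ at the single ``present'' point.
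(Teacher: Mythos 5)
Your proof is correct and follows essentially the same route as the paper: both reduce the condition at an arbitrary time point $i$ to the present by shifting $M$ and invoking the shift-closure of $\mathrm{Mod}(\Sigma)$ from Theorem~\ref{thm:mods}. The only cosmetic difference is in the last step, where the paper writes $B \subseteq [A]_\Sigma \subseteq [M-i]_\Sigma = [M]_\Sigma - i = M - i$ via monotony and~\eqref{eqn:sh_invar}, whereas you obtain $[A]_\Sigma \subseteq M - i$ directly from the intersection definition~\eqref{eqn:semclos}; both are valid.
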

\begin{proof}
  Clearly, \itm{1} implies \itm{2} since
  $[A]_\Sigma \in \mathrm{Mod}(\Sigma)$; \itm{2} implies
  \itm{3} because $A+0 \subseteq [A]_\Sigma$. Assume that
  \itm{3} holds and take $M \in \mathrm{Mod}(\Sigma)$ and
  $i \in \Z$ such that $A+i \subseteq M$. Then,
  $A \subseteq M-i$ and thus
  $B \subseteq [A]_\Sigma \subseteq [M-i]_\Sigma =
  [M]_\Sigma-i$
  by~\eqref{eqn:sh_invar} from which it follows that
  $B+i \subseteq [M]_\Sigma = M$, proving \itm{1}.
\end{proof}

We conclude this section by notes on the propositional
semantics of our formulas. The classic attribute implications
on finite $Y$ can be understood as propositional
formulas. Namely, an attribute implication of the
from~\eqref{eqn:impl} can be seen as a propositional formula
\begin{align}
  \bigl(y_1 \logand \cdots \logand y_m\bigr)
  \Rightarrow
  \bigl(z_1 \logand \cdots \logand z_n\bigr),
  \label{eqn:prop_impl}
\end{align}
where $\logand$ is the symbol for conjunction and
$y_1,\ldots,y_m,z_1,\ldots,z_n$ are propositional variables.
Thus, \eqref{eqn:prop_impl} may be called a propositional
counterpart of~\eqref{eqn:impl}. Obviously, there are in
general several propositional counterparts of~\eqref{eqn:impl}
since formulas equivalent to~\eqref{eqn:prop_impl} in sense of
the propositional logic result, e.g., by rearranging the
propositional variables $y_1,\ldots,y_m,z_1,\ldots,z_n$ in a
different order. We neglect this aspect and always consider a
fixed propositional counterpart of each attribute implication.
It can be shown that if one takes the propositional
counterparts of attribute implications, then their semantic
entailment in sense of the propositional logic coincides with
the semantic entailment as it is defined for attribute
implications. We now show that an analogous correspondence can
also be established in our case.

We start by considering the following notation.  For any
finite $A,B \subseteq \alltatr$ and for any
$M \subseteq \alltatr$, we put $M \models_\PL A \Rightarrow B$
whenever $A \nsubseteq M$ or $B \subseteq M$. That is,
$M \models_\PL A \Rightarrow B$ means that $A \Rightarrow B$
is true in $M$ as a classical attribute implication. Clearly,
$M \models_\PL A \Rightarrow B$ does not imply that
$M \models A \Rightarrow B$ in sense of
Definition~\ref{def:sem}. Moreover, we may introduce the set
of models of $\Sigma$ in the classic sense:
\begin{align}
  \mathrm{Mod}^\PL(\Sigma) &=
                             \bigl\{M \subseteq \alltatr \|
                             M \models_\PL A \Rightarrow B
                             \text{ for all } A \Rightarrow B \in \Sigma\bigr\}
                             \label{eqn:MonPL}
\end{align}
and put $\Sigma \models_\PL A \Rightarrow B$ whenever
$M \models_\PL A \Rightarrow B$ for all
$M \in \mathrm{Mod}^\PL(\Sigma)$.  Therefore, $\models_\PL$
denotes the semantic entailment of attribute implications in
the classic sense. Again, $\models_\PL$ is in general
different from $\models$ introduced in
Definition~\ref{def:entail} but we can establish the following
characterization:

\begin{theorem}\label{thm:PL_sem}
  Let $\Sigma$ be a theory and let
  \begin{align}
    \Sigma^\PL &= \{A+i \Rightarrow B+i \,|\,
                 A \Rightarrow B \in \Sigma \text{ and } i \in \Z\}.
                 \label{eqn:SigmaPL}
  \end{align}
  Then $\mathrm{Mod}(\Sigma) = \mathrm{Mod}^\PL(\Sigma^\PL)$.
  As a consequence, for each $A \Rightarrow B$, we have
  $\Sigma \models A \Rightarrow B$ if{}f\/
  $\Sigma^\PL \models_\PL A \Rightarrow B$.
\end{theorem}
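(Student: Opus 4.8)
The plan is to prove the set equality $\mathrm{Mod}(\Sigma) = \mathrm{Mod}^\PL(\Sigma^\PL)$ by establishing both inclusions directly from the definitions, and then derive the entailment equivalence as a corollary. The core observation is that the condition ``$M \models A \Rightarrow B$'' from Definition~\ref{def:sem} quantifies over all $i \in \Z$ and asks ``$A+i \subseteq M$ implies $B+i \subseteq M$'', whereas the condition ``$M \models_\PL C \Rightarrow D$'' asks the single implication ``$C \subseteq M$ implies $D \subseteq M$''. So, intuitively, the time-point quantifier in $\models$ is exactly what is encoded syntactically by the family of shifts in $\Sigma^\PL$.

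First I would show $\mathrm{Mod}(\Sigma) \subseteq \mathrm{Mod}^\PL(\Sigma^\PL)$. Take $M \in \mathrm{Mod}(\Sigma)$ and an arbitrary formula in $\Sigma^\PL$, which by~\eqref{eqn:SigmaPL} has the form $A+i \Rightarrow B+i$ for some $A \Rightarrow B \in \Sigma$ and $i \in \Z$. I must check $M \models_\PL A+i \Rightarrow B+i$, i.e., if $A+i \subseteq M$ then $B+i \subseteq M$. But since $M \models A \Rightarrow B$, condition~\eqref{eqn:ifthen} applied to this particular $i$ gives exactly this. (Alternatively, one can cite Lemma~\ref{tai-relativ} together with the trivial fact that $M \models A+i \Rightarrow B+i$ implies $M \models_\PL A+i \Rightarrow B+i$ by instantiating at time point $0$.) Conversely, for $\mathrm{Mod}^\PL(\Sigma^\PL) \subseteq \mathrm{Mod}(\Sigma)$, take $M \in \mathrm{Mod}^\PL(\Sigma^\PL)$ and $A \Rightarrow B \in \Sigma$; to verify $M \models A \Rightarrow B$, fix any $i \in \Z$ with $A+i \subseteq M$. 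Since $A+i \Rightarrow B+i \in \Sigma^\PL$ and $M \in \mathrm{Mod}^\PL(\Sigma^\PL)$, we have $M \models_\PL A+i \Rightarrow B+i$, and as $A+i \subseteq M$ this forces $B+i \subseteq M$. As $i$ was arbitrary, $M \models A \Rightarrow B$, and as $A \Rightarrow B \in \Sigma$ was arbitrary, $M \in \mathrm{Mod}(\Sigma)$.

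For the ``as a consequence'' part, the entailment equivalence follows purely formally once the model classes coincide. Indeed, $\Sigma \models A \Rightarrow B$ means $M \models A \Rightarrow B$ for all $M \in \mathrm{Mod}(\Sigma)$, and $\Sigma^\PL \models_\PL A \Rightarrow B$ means $M \models_\PL A \Rightarrow B$ for all $M \in \mathrm{Mod}^\PL(\Sigma^\PL)$. Here one has to be slightly careful: the two entailment statements use \emph{different} satisfaction relations on the conclusion side, so the equality of model classes alone is not literally enough. The right move is to note that, by the first part applied to the singleton theory, $M \models A \Rightarrow B$ iff $M \models_\PL A+i \Rightarrow B+i$ for all $i$; but in fact it is cleaner to observe that $\mathrm{Mod}(\Sigma)$ is closed under time shifts (Theorem~\ref{thm:mods}), so for $M \in \mathrm{Mod}(\Sigma)$ we have $M \models A \Rightarrow B$ iff $M-i \models_\PL A \Rightarrow B$ for all $i$ iff (using shift-closure) $M \models_\PL A+i \Rightarrow B+i$ for all $i$. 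Combining this with $\mathrm{Mod}(\Sigma) = \mathrm{Mod}^\PL(\Sigma^\PL)$ yields both directions of the iff.

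The main obstacle, such as it is, is not in the set equality --- that is a routine unwinding of definitions --- but in the logical bookkeeping for the ``consequence'' clause, where one must not conflate $\models$ and $\models_\PL$ on the entailed formula. I expect the whole argument to be short; the only place demanding care is to make explicit that checking $M \models A \Rightarrow B$ reduces to checking $M \models_\PL A+i \Rightarrow B+i$ for every $i$, which is precisely Definition~\ref{def:sem} read through the lens of $\models_\PL$, and to invoke shift-closure of $\mathrm{Mod}(\Sigma)$ at the right moment.
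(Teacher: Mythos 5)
Your proposal is correct and follows essentially the same route as the paper: the model-class equality is the same definition unwinding, and the entailment equivalence is obtained exactly as in the paper by instantiating at $i=0$ in one direction and invoking closure of $\mathrm{Mod}(\Sigma)$ under time shifts (Theorem~\ref{thm:mods}) in the other. One small presentational nit: the equivalence ``$M-i \models_\PL A \Rightarrow B$ for all $i$ iff $M \models_\PL A+i \Rightarrow B+i$ for all $i$'' is pure definition unwinding and does not use shift-closure; shift-closure is actually needed only in the final step, to pass from ``$M \models_\PL A \Rightarrow B$ for all models $M$'' to ``$M \models_\PL A+i \Rightarrow B+i$ for all models $M$ and all $i$.''
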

\begin{proof}
  The first part of the claim is easy to see. Indeed, for each
  $A \Rightarrow B$ we have
  $M \in \mathrm{Mod}(\{A \Rightarrow B\})$ if{}f for each
  $i \in \Z$, we have $A+i \subseteq M$ implies
  $B+i \subseteq M$ which is true if{}f
  $M \in \mathrm{Mod}^\PL(\{A+i \Rightarrow B+i \,|\, i \in
  \Z\})$.
  Hence, it follows that
  $\mathrm{Mod}(\Sigma) = \mathrm{Mod}^\PL(\Sigma^\PL)$.
  
  Now, assume that $\Sigma \models A \Rightarrow B$ and take
  $M \in \mathrm{Mod}^\PL(\Sigma^\PL)$ such that
  $A \subseteq M$.  Then $A + 0 \subseteq M$ and
  $M \in \mathrm{Mod}(\Sigma)$ and thus
  $B = B + 0 \subseteq M$, proving that
  $\Sigma^\PL \models_\PL A \Rightarrow B$.  Conversely, let
  $\Sigma^\PL \models_\PL A \Rightarrow B$ and
  $A+i \subseteq M$ for $M \in \mathrm{Mod}(\Sigma)$. That is,
  we have $A \subseteq M - i$ and, owing to
  Theorem~\ref{thm:mods},
  $M - i \in \mathrm{Mod}(\Sigma) =
  \mathrm{Mod}^\PL(\Sigma^\PL)$.
  As a consequence of $M - i \models_\PL A \Rightarrow B$, we
  get $B \subseteq M - i$ and thus $B + i \subseteq M$,
  showing $\Sigma \models A \Rightarrow B$.  Altogether,
  $\Sigma \models A \Rightarrow B$ if{}f\/
  $\Sigma^\PL \models_\PL A \Rightarrow B$.
\end{proof}

Now, based on Theorem~\ref{thm:PL_sem}, we may argue that for
each $\Sigma$ there is a set of propositional formulas
$\Sigma'$ such that the propositional counterpart of
$A \Rightarrow B$ follows by $\Sigma'$ in sense of the
propositional logic. Indeed, $\Sigma'$ can be taken as the set
of propositional counterparts to all formulas in $\Sigma^\PL$:
Owing to Theorem~\ref{thm:PL_sem}, $A \Rightarrow B$ follows
by $\Sigma^\PL$ as a classic attribute implication over (a
denumerable set of attributes) $\alltatr$ and thus the
propositional counterpart of $A \Rightarrow B$ follows by the
propositional counterparts to all formulas in $\Sigma^\PL$.

%%%%%%%%%%%%%%%%%%%%%%%%%%%%%%%%%%%%%%%%%%%%%%%%%%%%%%%%%%%%%%%%%%%%%%%%%%%%%%%%
%%%%% DEDUCTION SYSTEMS AND COMPLETE AXIOMATIZATIONS
%%%%%%%%%%%%%%%%%%%%%%%%%%%%%%%%%%%%%%%%%%%%%%%%%%%%%%%%%%%%%%%%%%%%%%%%%%%%%%%%
\section{Deduction Systems and Complete Axiomatizations}\label{sec:compl}
In this section, we present a deduction system for our formulas and a
related notion of provability which represents the syntactic entailment of
formulas.
The provability is based on an extension of the Armstrong axiomatic
system~\cite{Arm:Dsdbr} which is well known mainly in database
systems~\cite{Mai:TRD}. The extension we propose accommodates the fact
that time points in formulas are relative. The deductive system we use
consists of the following deduction rules.

\begin{definition}\label{def:ded}
  We introduce the following \emph{deduction rules}:
  \bgroup%
  \setlength{\leftmargini}{3.5em}%
  \begin{itemize}\parskip=\smallskipamount%
  \item[(Ax)]
    infer $A {\cup} B \Rightarrow A$,
  \item[(Cut)]
    from $A \Rightarrow B$ and
    $B{\cup}C \Rightarrow D$ infer $A{\cup}C \Rightarrow D$,
  \item[(Shf)]
    from $A \Rightarrow B$ infer
    $A+i \Rightarrow B+i$,
  \end{itemize}
  \egroup%
  \noindent%
  where $i \in \Z$ and $A,B,C,D$ are arbitrary finite subsets of $\alltatr$.
\end{definition}

\begin{remark}
  (a)
  Note that there are several equivalent systems which are called
  the Armstrong systems~\cite{Mai:TRD}.
  In our presentation, the rule (Ax) can be seen as a nullary deduction
  rule which is an axiom scheme, i.e., each $A {\cup} B \Rightarrow A$
  may be called an axiom.
  (Cut) and (Shf) are binary and unary deduction rules, respectively.
  In the classic case, (Ax) and (Cut) form a system which is equivalent
  to that from~\cite{Arm:Dsdbr}. We call the additional rule (Shf)
  the rule of ``time shifts.'' Also note that in the database
  literature, (Cut) is also referred to as the rule
  of pseudo-transitivity~\cite{Mai:TRD}.
  
  (b)
  The rules in Definition~\ref{def:ded} can be written as fractions with
  hypotheses (formulas preceding ``infer'') above the conclusion
  (formula following ``infer'') as
  \begin{align*}
    &\AX{A{\cup}B\Rightarrow A}, &
    &\CUT{A \Rightarrow B}{B{\cup}C \Rightarrow D}{A{\cup}C \Rightarrow D}, &
    &\SHF{A \Rightarrow B}{A+i \Rightarrow B+i}.
  \end{align*}
\end{remark}

Although we are going to use (Ax), (Cut), and (Shf) as the basic deduction
rules in our system, we define the notion of provability relatively to
a collection of deduction rules because we later investigate systems
consisting of other rules. Thus, a general \emph{deduction system} is
a set $\mathcal{R}$ of $n$-ary rules of the form
``from $\varphi_1,\ldots,\varphi_n$, infer $\psi$''.

\begin{definition}
  Let $\mathcal{R}$ be a deduction system.
  An $\mathcal{R}$-proof of $A \Rightarrow B$ by $\Sigma$ is
  a finite sequence $\delta_1,\ldots,\delta_n$ such that $\delta_n$
  equals $A \Rightarrow B$ and for each $i=1,\ldots,n$ we have
  \begin{itemize}\parskip=0pt%
  \item[\itm{1}]
    $\delta_i \in \Sigma$, or
  \item[\itm{2}]
    $\mathcal{R}$ contains a rule
    ``from $\varphi_1,\ldots,\varphi_n$ infer $\psi$''
    such that $\psi$ is equal to $\delta_i$ and
    we have $\{\varphi_1,\ldots,\varphi_n\} \subseteq \{\delta_j \,|\, j < i\}$.
  \end{itemize}
  We say that \emph{$A \Rightarrow B$ is $\mathcal{R}$-provable by $\Sigma$,}
  denoted $\Sigma \proves_\mathcal{R} A \Rightarrow B$, if there is
  an $\mathcal{R}$-proof of $A \Rightarrow B$ by $\Sigma$.
\end{definition}

If $\mathcal{R}$ consists solely of (Ax), (Cut), and (Shf), we write just
$\Sigma \proves A \Rightarrow B$ and call $A \Rightarrow B$ provable
by $\Sigma$. Analogously, we use the term ``proof'' instead of 
``$\mathcal{R}$-proof''. In the paper, we use the following
properties of provability.

\begin{proposition}\label{prop:deriv}
  For every finite $A,B,C,D \subseteq \alltatr$, we have
  \bgroup%
  \setlength{\leftmargini}{3.5em}%
  \begin{itemize}\parskip=0pt%
  \item[\Ref]
    $\proves A \Rightarrow A$,
  \item[\Wea]
    $\{A \Rightarrow C\} \proves A{\cup}B \Rightarrow C$,
  \item[\Acc]
    $\{A \Rightarrow B{\cup}C, C \Rightarrow D{\cup}E\}
    \proves A \Rightarrow B{\cup}C{\cup}D$,
  \item[\Add]
    $\{A \Rightarrow B, A \Rightarrow C\} \proves A \Rightarrow B{\cup}C$,
  \item[\Aug]
    $\{B \Rightarrow C\} \proves A{\cup}B \Rightarrow A{\cup}C$,
  \item[\Pro]
    $\{A \Rightarrow B{\cup}C\} \proves A \Rightarrow B$,
  \item[\Tra]
    $\{A \Rightarrow B,B \Rightarrow C\} \proves A \Rightarrow C$.
  \end{itemize}
  \egroup%
\end{proposition}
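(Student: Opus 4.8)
The plan is to prove each of the seven derived rules \Ref, \Wea, \Acc, \Add, \Aug, \Pro, \Tra by exhibiting an explicit proof sequence using only \Ax, \Cut, and \Shf. Since \Shf plays no role in any of these (the statements involve no time shifts), the real work is the classical Armstrong-style derivations from \Ax and \Cut alone; this is routine bookkeeping, so I will only indicate the key combination for each and let the reader fill in details.

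First I would handle \Ref: instantiate \Ax with $B = A$ to get $A \cup A \Rightarrow A$, i.e.\ $A \Rightarrow A$. For \Wea, start from the hypothesis $A \Rightarrow C$ and from \Ax the axiom $C \cup (A \cup B) \Rightarrow C$; but more directly, apply \Cut to $A \Rightarrow C$ and the instance $C \cup B \Rightarrow C$ of \Ax (taking the ``$C$'' of \Cut to be $C$, the ``$B$'' to be $C$, the ``$C$'' to be $B$, the ``$D$'' to be $C$), obtaining $A \cup B \Rightarrow C$. For \Tra, from $A \Rightarrow B$ and $B \Rightarrow C$ apply \Cut with the middle set $C$ empty, i.e.\ \Cut on $A \Rightarrow B$ and $B \cup \emptyset \Rightarrow C$ yields $A \cup \emptyset \Rightarrow C = A \Rightarrow C$. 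For \Pro, from $A \Rightarrow B \cup C$ and the \Ax instance $(B \cup C) \cup \emptyset \Rightarrow B$ (valid since $B \subseteq B \cup C$), apply \Cut to get $A \Rightarrow B$; this is really the special case of \Tra composed with an axiom. For \Add, first use \Wea-style reasoning: from $A \Rightarrow B$ get (by the argument above, or by \Cut with an axiom) $A \Rightarrow A \cup B$ — actually cleaner to use the \Ax instance $(A\cup B)\cup C \Rightarrow ?$; the standard route is to \Cut $A \Rightarrow B$ with $B \cup A \Rightarrow A \cup B \cup C$? I would instead derive \Add as: from $A \Rightarrow C$ and $A \Rightarrow B$, apply \Cut to $A \Rightarrow B$ and $B \cup A \Rightarrow B \cup C$, where the second is obtained from $A \Rightarrow C$ by \Aug, giving $A \cup A \Rightarrow B \cup C$. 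So \Add follows once \Aug is available.

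Thus the crux is \Aug and \Acc. For \Aug, I want $\{B \Rightarrow C\} \proves A \cup B \Rightarrow A \cup C$. The idea: from \Ax, $A \cup B \Rightarrow A$ and also $A \cup B \Rightarrow B$; from the hypothesis $B \Rightarrow C$ and \Tra (already derived) get $A \cup B \Rightarrow C$; then combine $A \cup B \Rightarrow A$ and $A \cup B \Rightarrow C$ — but \Add is not yet available here, so instead I apply \Cut directly: from $B \Rightarrow C$ we get $(A\cup B) \Rightarrow C$ needs care. The clean classical derivation is: apply \Cut to $B \Rightarrow C$ and $C \cup (A \cup B) \Rightarrow A \cup C$ (the latter an \Ax instance, valid since $A \cup C \subseteq C \cup A \cup B$ requires $A \subseteq C \cup A \cup B$, true), giving $B \cup (A \cup B) \Rightarrow A \cup C$, i.e.\ $A \cup B \Rightarrow A \cup C$. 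For \Acc, $\{A \Rightarrow B \cup C,\ C \Rightarrow D \cup E\} \proves A \Rightarrow B \cup C \cup D$: from $C \Rightarrow D \cup E$ and \Aug get $B \cup C \Rightarrow B \cup C \cup D \cup E$ (augmenting by $B \cup C$), then by \Pro (derived) get $B \cup C \Rightarrow B \cup C \cup D$; finally \Tra with $A \Rightarrow B \cup C$ yields $A \Rightarrow B \cup C \cup D$.

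I expect the main obstacle to be purely organizational: getting the order of derivations right so that each rule is proved using only \Ax, \Cut, and rules already established, and correctly matching the metavariables $A,B,C,D$ of the \Cut and \Ax schemes in each application (in particular being careful that set unions like $B \cup C$ in a hypothesis line up with the ``$B$'' slot of \Cut, which forces some $\emptyset$-instantiations). None of the individual steps is deep; the write-up amounts to listing, for each of the seven items, a short proof sequence and verifying the side conditions $X \subseteq Y$ needed for each \Ax instance. I would present \Ref, \Wea, \Tra, \Pro first, then \Aug, then \Add and \Acc, citing the earlier ones freely.
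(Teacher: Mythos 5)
Your derivations are all correct and amount to exactly the standard Armstrong-system derivations of these rules from \Ax{} and \Cut{} (with \Shf{} indeed playing no role); the paper's own proof is just a one-line citation of this classical fact, so your proposal is the same argument carried out explicitly. The ordering you propose (\Ref, \Wea, \Tra, \Pro, then \Aug, then \Add{} and \Acc) is consistent, and each \Ax-instance you invoke satisfies the required inclusion of the consequent in the antecedent.
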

\begin{proof}
  The laws hold because our system is an extension of the Armstrong system
  in which the laws hold as well, see~\cite{Arm:Dsdbr,Mai:TRD}.
\end{proof}

Our inference system is sound in the usual sense:

\begin{theorem}[soundness]\label{thm:soundness}%
  If $\Sigma \proves A \Rightarrow B$ then
  $\Sigma \models A \Rightarrow B$.
\end{theorem}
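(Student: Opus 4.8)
The plan is to prove soundness by induction on the length of an $\mathcal{R}$-proof of $A \Rightarrow B$ by $\Sigma$, where $\mathcal{R}$ consists of \Ax, \Cut, and \Shf. Concretely, I would show that every formula $\delta_i$ appearing in a proof satisfies $\Sigma \models \delta_i$; applying this to $\delta_n = A \Rightarrow B$ gives the claim. Fix an arbitrary $M \in \mathrm{Mod}(\Sigma)$; it suffices to show $M \models \delta_i$ for each $i$, proceeding by strong induction on $i$.

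The induction has three cases according to how $\delta_i$ enters the proof. First, if $\delta_i \in \Sigma$, then $M \models \delta_i$ holds immediately since $M \in \mathrm{Mod}(\Sigma)$. Second, if $\delta_i$ is obtained by \Ax, then $\delta_i$ has the form $A'{\cup}B' \Rightarrow A'$, and this is true in every $M$ by Remark~\ref{rem:trivB}(b), since the consequent $A'$ is a subset of the antecedent $A'{\cup}B'$. Third, if $\delta_i$ is obtained by a binary or unary rule from earlier formulas $\delta_j$ with $j < i$, the induction hypothesis gives $M \models \delta_j$ for those $j$, and I must verify that the rule preserves truth in $M$. For \Cut: assume $M \models A' \Rightarrow B'$ and $M \models B'{\cup}C' \Rightarrow D'$, and let $i \in \Z$ with $(A'{\cup}C')+i \subseteq M$. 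Then, using $(A'{\cup}C')+i = (A'+i) \cup (C'+i)$, we get $A'+i \subseteq M$, so $B'+i \subseteq M$ by the first hypothesis; hence $(B'{\cup}C')+i = (B'+i)\cup(C'+i) \subseteq M$, and the second hypothesis yields $D'+i \subseteq M$. Since $i$ was arbitrary, $M \models A'{\cup}C' \Rightarrow D'$. For \Shf: assume $M \models A' \Rightarrow B'$; I must show $M \models A'+k \Rightarrow B'+k$ for the shift amount $k$. Let $i \in \Z$ with $(A'+k)+i \subseteq M$, i.e.\ $A'+(k+i) \subseteq M$ by~\eqref{eqn:ts_assoc}; the hypothesis applied with time point $k+i$ gives $B'+(k+i) \subseteq M$, i.e.\ $(B'+k)+i \subseteq M$. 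Thus $M \models A'+k \Rightarrow B'+k$. Alternatively, this last case is exactly Lemma~\ref{tai-relativ}, which can be cited directly.

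I do not expect any serious obstacle here: each rule is sound essentially by its design, and the only mild care needed is in the \Shf case, where one must track that the shift by $i$ in the definition of truth composes with the syntactic shift $k$ in the rule via associativity of time shifts~\eqref{eqn:ts_assoc}. The one structural point worth stating explicitly is that all unions appearing in \Ax\ and \Cut\ commute with time shifts, i.e.\ $(P \cup Q)+i = (P+i) \cup (Q+i)$, which follows from~\eqref{eqn:ts_cup} (or directly from~\eqref{eqn:shft}); this is what lets the classical Armstrong-style verification go through verbatim for each absolute time point $i$. Finally, since $M \in \mathrm{Mod}(\Sigma)$ was arbitrary, $M \models A \Rightarrow B$ for all models of $\Sigma$, which is precisely $\Sigma \models A \Rightarrow B$.
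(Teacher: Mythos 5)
Your proof is correct and follows essentially the same route as the paper: induction on the length of the proof, with the base cases handled by membership in $\Sigma$ and the triviality of \Ax\ instances, and the inductive step verifying soundness of \Cut\ (which the paper delegates to the classical Armstrong-system literature) and of \Shf\ (which the paper handles via Lemma~\ref{tai-relativ}, as you also note). Your explicit pointwise verification of \Cut\ and \Shf\ using \eqref{eqn:ts_cup} and \eqref{eqn:ts_assoc} just fills in details the paper cites or leaves implicit.
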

\begin{proof}
  The proof goes by induction on the length of a proof,
  considering the facts that each axiom is true in all models,
  (Cut) is a sound deduction rule~\cite{Mai:TRD},
  and (Shf) is sound on account of Lemma~\ref{tai-relativ}.
  In a more detail, let $\delta_1,\ldots,\delta_n$ be a proof
  by $\Sigma$ and let $\Sigma \models \delta_i$ for all $i < j$.
  Then, if $\delta_j$ results by $\delta_i$ using $\Shf$ for some $i < j$,
  then $\Sigma \models \delta_i$ yields that
  $M \models \delta_i$ for all $M \in \mathrm{Mod}(\Sigma)$ and thus,
  using Lemma~\ref{tai-relativ}, $M \models \delta_j$ for all 
  $M \in \mathrm{Mod}(\Sigma)$, showing $\Sigma \models \delta_j$.
  The rest follows as in the classic case.
\end{proof}

In the proof of completeness, we utilize the notion of
a syntactic closure which is introduced as follows.

\begin{definition}
  Let $\Sigma$ be a theory. For each $M \subseteq \alltatr$, we put
  \begin{align}
    M_\Sigma^0 &= M,
    \\
    M_\Sigma^{n+1} &=  M_\Sigma^{n} \cup
                     \textstyle\bigcup\bigl\{F+i \|
                     E \Rightarrow F \in \Sigma \text{ and } 
                     E+i \subseteq M_\Sigma^{n}\bigr\},
                     \label{eqn:A_n+1}
    \\
    M_\Sigma^{\omega} &= \textstyle\bigcup_{n=0}^\infty M_\Sigma^{n}.
			\label{eqn:Aomega}
  \end{align}
  and call $M_\Sigma^{\omega}$
  the \emph{syntactic closure} of $M$ under $\Sigma$.
\end{definition}

By the Tarski fixpoint theorem~\cite{Ta:Altfta},
the operator which maps $M$ to $M_\Sigma^{\omega}$ defined
by~\eqref{eqn:Aomega} is indeed a closure operator,
so the term ``closure'' in the name syntactic closure is
appropriate. The following observation shows that the term
``syntactic'' is also appropriate since closures are directly
related to provability.

\begin{lemma}\label{le:synclos}
  Let $A,B \subseteq \alltatr$ be finite.
  Then, $B \subseteq A_\Sigma^{\omega}$ if{}f\/
  $\Sigma \proves A \Rightarrow B$.
\end{lemma}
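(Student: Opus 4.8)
The statement is a standard "syntactic closure captures provability" lemma, and I would prove both directions by the usual routine. For the ``$\Leftarrow$'' direction (provability implies $B \subseteq A_\Sigma^\omega$), I would argue by induction on the length of a proof of $A \Rightarrow B$ by $\Sigma$, after first reducing to the case where we track, for each formula $C \Rightarrow D$ appearing in the proof with $C \subseteq A_\Sigma^\omega$, that also $D \subseteq A_\Sigma^\omega$. More precisely, I would show by induction on $n$ that whenever $\Sigma \proves A \Rightarrow B$ via a proof $\delta_1,\ldots,\delta_n$, every $\delta_k$ of the form $C \Rightarrow D$ with $C \subseteq A_\Sigma^\omega$ satisfies $D \subseteq A_\Sigma^\omega$; applying this to $\delta_n = A \Rightarrow B$ and using $A = A + 0 \subseteq A_\Sigma^\omega$ (extensivity, i.e.\ $M = M_\Sigma^0 \subseteq M_\Sigma^\omega$) gives the result. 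The inductive step splits on how $\delta_k$ arises: if $\delta_k \in \Sigma$, say $\delta_k$ is $E \Rightarrow F$ and $E = E + 0 \subseteq A_\Sigma^\omega$, then because $A_\Sigma^\omega$ is the closure of $A$ and hence a fixed point of the one-step operator, and because $A_\Sigma^\omega = \bigcup_n A_\Sigma^n$ with $E$ finite, we get $E \subseteq A_\Sigma^n$ for some $n$, so $F = F + 0 \subseteq A_\Sigma^{n+1} \subseteq A_\Sigma^\omega$ by~\eqref{eqn:A_n+1} with $i = 0$. If $\delta_k$ results from (Ax) as $G \cup H \Rightarrow G$, the conclusion is immediate since $G \subseteq G \cup H \subseteq A_\Sigma^\omega$. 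If $\delta_k$ results from (Cut) from earlier $G \Rightarrow H$ and $H \cup J \Rightarrow K$, and $G \cup J \subseteq A_\Sigma^\omega$, then $G \subseteq A_\Sigma^\omega$ gives $H \subseteq A_\Sigma^\omega$ by the inductive hypothesis, hence $H \cup J \subseteq A_\Sigma^\omega$, hence $K \subseteq A_\Sigma^\omega$ again by the inductive hypothesis. The (Shf) case is the one requiring care: $\delta_k$ is $G + i \Rightarrow H + i$ coming from an earlier $G \Rightarrow H$, and we assume $G + i \subseteq A_\Sigma^\omega$; here I cannot directly apply the inductive hypothesis to $G \Rightarrow H$ since $G$ need not be a subset of $A_\Sigma^\omega$. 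Instead I would show $A_\Sigma^\omega$ is closed under the operation ``if $G + i$ is a subset, so is $H + i$'' whenever $\Sigma \proves G \Rightarrow H$ --- but that is circular. The clean fix is to strengthen the induction hypothesis to: for every $j \in \Z$, if $C + j \subseteq A_\Sigma^\omega$ then $D + j \subseteq A_\Sigma^\omega$. With this stronger statement the (Shf) case becomes: $\delta_k = G + i \Rightarrow H + i$, assume $(G + i) + j \subseteq A_\Sigma^\omega$, i.e.\ $G + (i+j) \subseteq A_\Sigma^\omega$ using~\eqref{eqn:ts_assoc}; apply the hypothesis for $G \Rightarrow H$ with shift $i + j$ to get $H + (i+j) \subseteq A_\Sigma^\omega$, i.e.\ $(H+i)+j \subseteq A_\Sigma^\omega$. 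All the other cases go through with the stronger hypothesis just as readily, using~\eqref{eqn:A_n+1} at an arbitrary shift $i$ in the $\delta_k \in \Sigma$ case.

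For the ``$\Rightarrow$'' direction ($B \subseteq A_\Sigma^\omega$ implies $\Sigma \proves A \Rightarrow B$), I would first prove $\Sigma \proves A \Rightarrow A_\Sigma^n$ for every $n$ by induction on $n$ --- here I must be slightly careful since $A_\Sigma^n$ may be infinite, so I interpret this as: for every \emph{finite} $G \subseteq A_\Sigma^n$, $\Sigma \proves A \Rightarrow G$. The base case $n = 0$ is $G \subseteq A$, handled by \Ref{} and \Pro{} (or \Wea{}) from Proposition~\ref{prop:deriv}. For the step, a finite $G \subseteq A_\Sigma^{n+1}$ decomposes as $G \subseteq A_\Sigma^n \cup (F_1 + i_1) \cup \cdots \cup (F_r + i_r)$ where each $E_s \Rightarrow F_s \in \Sigma$ and $E_s + i_s \subseteq A_\Sigma^n$; by the inductive hypothesis $\Sigma \proves A \Rightarrow E_s + i_s$ for each $s$ (the $E_s + i_s$ are finite), by (Shf) $\Sigma \proves A \Rightarrow E_s + i_s \Rightarrow F_s + i_s$ hence by \Tra{} $\Sigma \proves A \Rightarrow F_s + i_s$, and using \Add{} repeatedly together with $\Sigma \proves A \Rightarrow (G \cap A_\Sigma^n)$ we get $\Sigma \proves A \Rightarrow G$ after one application of \Pro{} to cut down to $G$. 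Now for the lemma proper: since $B$ is finite and $B \subseteq A_\Sigma^\omega = \bigcup_n A_\Sigma^n$, we have $B \subseteq A_\Sigma^n$ for some $n$, so $\Sigma \proves A \Rightarrow B$.

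**Main obstacle.** The only genuinely nonroutine point is recognizing that the naive induction on proof length fails at the (Shf) step and must be replaced by the shift-uniform strengthening ``$C + j \subseteq A_\Sigma^\omega \Rightarrow D + j \subseteq A_\Sigma^\omega$ for all $j$'' --- this is exactly where the relative-time structure interacts with the definition~\eqref{eqn:A_n+1}, which is already written with an arbitrary shift $i$ precisely to make this work. Everything else is bookkeeping: the finiteness arguments needed to pass between $A_\Sigma^\omega$ and the stages $A_\Sigma^n$ (legitimate because all antecedents and consequents in $\Sigma$-formulas and in $A \Rightarrow B$ are finite), and the repeated invocations of the derived rules \Ref, \Wea, \Add, \Pro, \Tra{} from Proposition~\ref{prop:deriv} together with (Shf).
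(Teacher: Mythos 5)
Your proposal is correct. The only-if direction of the lemma ($B \subseteq A_\Sigma^{\omega}$ implies $\Sigma \proves A \Rightarrow B$) is essentially the paper's argument verbatim: stage-wise induction on $n$ over finite subsets of $A_\Sigma^{n}$, finiteness of $B$ to land in some $A_\Sigma^{m}$, and the same combination of $\Shf$, $\Tra$, $\Add$, and $\Pro$ to assemble $A \Rightarrow D$ from the pieces. Where you genuinely diverge is the converse direction. The paper derives $\Sigma \proves A \Rightarrow B$ implies $B \subseteq A_\Sigma^{\omega}$ by citing the already-established soundness theorem (Theorem~\ref{thm:soundness}) and then checking---via the same finiteness argument you use in your $\delta_k \in \Sigma$ case---that $A_\Sigma^{\omega}$ is itself a model of $\Sigma$; since $A = A + 0 \subseteq A_\Sigma^{\omega}$, truth of $A \Rightarrow B$ in that model forces $B \subseteq A_\Sigma^{\omega}$. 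You instead run a direct induction on proof length with the shift-uniform invariant ``$C + j \subseteq A_\Sigma^{\omega}$ implies $D + j \subseteq A_\Sigma^{\omega}$ for all $j \in \Z$.'' That invariant is exactly the statement $A_\Sigma^{\omega} \models C \Rightarrow D$ in the sense of Definition~\ref{def:sem}, so your induction amounts to the soundness proof specialized to the single model $A_\Sigma^{\omega}$, with your $\Shf$ case playing the role of Lemma~\ref{tai-relativ}. Your diagnosis of why the naive shift-free induction breaks at $\Shf$, and the fix by quantifying over all shifts, is precisely the right observation. What your route buys is self-containment (the lemma no longer leans on Theorem~\ref{thm:soundness}); what the paper's route buys is brevity, since the cost of quantifying over shifts has already been paid once in the soundness proof. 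One trivial slip: ``$\Sigma \proves A \Rightarrow E_s + i_s \Rightarrow F_s + i_s$'' should read $\Sigma \proves E_s + i_s \Rightarrow F_s + i_s$; the intent is clear and nothing depends on it.
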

\begin{proof}
  Suppose that $B \subseteq A_\Sigma^{\omega}$.
  Since $B$ is finite, there is $m$ such that $B \subseteq A_\Sigma^m$.
  Thus, in order to show that $\Sigma \proves A \Rightarrow B$,
  it suffices to check that for every $n$ and every
  finite $D \subseteq A_\Sigma^{n}$, we have $\Sigma \proves A \Rightarrow D$
  since then the claim readily follows for $D = B$ and $n = m$.
  By induction, assume the claim holds for $n$ and all finite
  $D \subseteq A_\Sigma^{n}$. Consider $n+1$ and take
  a finite $D \subseteq A_\Sigma^{n+1}$.
  Now, consider a finite
  \begin{align*}
    D' = \{\langle E \Rightarrow F,i\rangle \,|\,
    E \Rightarrow F \in \Sigma \text{ and } E+i \subseteq A_\Sigma^{n}\}
  \end{align*}
  such that
  \begin{align*}
    D \subseteq
    A_\Sigma^{n} \cup
    \textstyle\bigcup \{F+i \,|\, \langle E \Rightarrow F, i\rangle \in D'\}
    \subseteq A_\Sigma^{n+1}.
  \end{align*}
  Notice that since we assume $D$ finite, such finite $D'$ always exists.
  Now, by induction hypothesis,
  for each $\langle E \Rightarrow F, i\rangle \in D'$,
  we have $\Sigma \proves A \Rightarrow E+i$ owing to
  $E+i \subseteq A_\Sigma^{n} \subseteq A_\Sigma^{\omega}$.
  Furthermore, for $E \Rightarrow F \in \Sigma$, we have
  $\Sigma \proves E+i \Rightarrow F+i$ using (Shf).
  Thus, (Tra) gives $\Sigma \proves A \Rightarrow F+i$
  for each $\langle E \Rightarrow F, i\rangle \in D'$.
  In addition to that,
  $D \cap A_\Sigma^n \subseteq A_\Sigma^n$ and thus
  $\Sigma \proves A \Rightarrow D \cap A_\Sigma^n$.
  Since $D'$ is finite and
  $D \subseteq (D \cap A_\Sigma^n) \cup \bigcup
  \{F+i \,|\, \langle E \Rightarrow F, i\rangle \in D'\}$,
  $\Sigma \proves A \Rightarrow D$ follows by finitely many
  applications of $\Add$ and $\Pro$.
  As a consequence, $\Sigma \proves A \Rightarrow B$.
  
  Conversely, assume that $\Sigma \proves A \Rightarrow B$.
  By Theorem~\ref{thm:soundness}, $\Sigma \models A \Rightarrow B$.
  We show that $A_\Sigma^{\omega} \in \mathrm{Mod}(\Sigma)$.
  Take $E \Rightarrow F \in \Sigma$, $i \in \Z$ and
  let $E+i \subseteq A_\Sigma^{\omega}$. Since $E+i$ is finite,
  there must be $n$ such that $E+i \subseteq A_\Sigma^{n}$ and thus
  $F+i \subseteq A_\Sigma^{n+1} \subseteq A_\Sigma^{\omega}$, proving
  that $A_\Sigma^{\omega} \in \mathrm{Mod}(\Sigma)$.
  Now, $\Sigma \models A \Rightarrow B$ and
  $A + 0 = A \subseteq A_\Sigma^{\omega}$
  yields that $B+0 = B \subseteq A_\Sigma^{\omega}$.
\end{proof}

Note that Lemma~\ref{le:synclos} is in fact a syntactic counterpart
of Theorem~\ref{th:semclos}. Now, using previous observations,
we derive that our logic is complete:

\begin{theorem}[completeness]\label{th:scmpl}
  $\Sigma \proves A \Rightarrow B$ if{}f\/
  $\Sigma \models A \Rightarrow B$.
\end{theorem}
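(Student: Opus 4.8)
The plan is to prove the two directions of the biconditional separately, with soundness (the ``only if'' direction) already established as Theorem~\ref{thm:soundness}. Thus the real work is the ``if'' direction: assuming $\Sigma \models A \Rightarrow B$, I must produce an $\mathcal{R}$-proof of $A \Rightarrow B$ from $\Sigma$ using $\Ax$, $\Cut$, $\Shf$. The natural strategy is to route everything through the syntactic closure $A_\Sigma^\omega$, exploiting the already-proved Lemma~\ref{le:synclos}, which tells us that $\Sigma \proves A \Rightarrow B$ is equivalent to $B \subseteq A_\Sigma^\omega$.

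The key step is therefore to show $B \subseteq A_\Sigma^\omega$ under the hypothesis $\Sigma \models A \Rightarrow B$. For this I would use the semantic characterization from Theorem~\ref{th:semclos}: since $\Sigma \models A \Rightarrow B$, we have $B \subseteq [A]_\Sigma$, the semantic closure. So it suffices to prove that $A_\Sigma^\omega$ is itself a model of $\Sigma$ containing $A$; then by minimality of $[A]_\Sigma$ among models of $\Sigma$ containing $A$, we get $[A]_\Sigma \subseteq A_\Sigma^\omega$, hence $B \subseteq A_\Sigma^\omega$, and Lemma~\ref{le:synclos} finishes the job. Conveniently, the computation that $A_\Sigma^\omega \in \mathrm{Mod}(\Sigma)$ was already carried out inside the proof of Lemma~\ref{le:synclos}: given $E \Rightarrow F \in \Sigma$ and $i \in \Z$ with $E+i \subseteq A_\Sigma^\omega$, finiteness of $E+i$ forces $E+i \subseteq A_\Sigma^n$ for some $n$, whence $F+i \subseteq A_\Sigma^{n+1} \subseteq A_\Sigma^\omega$ by the defining equation~\eqref{eqn:A_n+1}. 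Extensivity $A = A_\Sigma^0 \subseteq A_\Sigma^\omega$ is immediate. So the whole argument collapses to assembling already-available pieces.

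I do not anticipate a genuine obstacle here: the hard conceptual work has been front-loaded into Lemma~\ref{le:synclos} (the fixpoint/closure bookkeeping and the finiteness arguments needed to turn a union of shifted consequents into finitely many applications of $\Shf$, $\Tra$, $\Add$, $\Pro$) and into Theorem~\ref{th:semclos} (the fixed-point characterization of semantic entailment). The one point requiring a little care is making sure the minimality appeal is legitimate --- that $[A]_\Sigma$ really is the \emph{least} model of $\Sigma$ containing $A$, which is exactly how $[\,\cdot\,]_\Sigma$ was defined via~\eqref{eqn:semclos} as the intersection of all such models, this intersection again lying in $\mathrm{Mod}(\Sigma)$ by Theorem~\ref{thm:mods}. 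Given that, the completeness proof is just: soundness gives one direction; for the other, $\Sigma \models A \Rightarrow B$ yields $B \subseteq [A]_\Sigma \subseteq A_\Sigma^\omega$ since $A_\Sigma^\omega \in \mathrm{Mod}(\Sigma)$ and $A \subseteq A_\Sigma^\omega$, and then Lemma~\ref{le:synclos} gives $\Sigma \proves A \Rightarrow B$.
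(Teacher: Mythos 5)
Your proof is correct and follows essentially the same route as the paper's: both reduce completeness to Lemma~\ref{le:synclos} together with the observation that $A_\Sigma^{\omega}$ is a model of $\Sigma$ containing $A$. The paper merely phrases it contrapositively (if $\Sigma \nproves A \Rightarrow B$, then $B \nsubseteq A_\Sigma^{\omega}$ and $A_\Sigma^{\omega}$ is directly a countermodel at $i=0$) instead of detouring through $[A]_\Sigma$ and Theorem~\ref{th:semclos}, which is only a presentational difference.
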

\begin{proof}
  If $\Sigma \nproves A \Rightarrow B$, we prove that
  there is $M \in \mathrm{Mod}(\Sigma)$
  such that $M \nmodels A \Rightarrow B$.
  Indeed, we show that one can take $A_\Sigma^{\omega}$ for $M$.
  By Lemma~\ref{le:synclos}, $\Sigma \nproves A \Rightarrow B$
  yields $B \nsubseteq A_\Sigma^{\omega}$.
  So, for $i = 0$, we have that
  $A+i = A \subseteq A_\Sigma^{\omega}$ and
  $B+i = B \nsubseteq A_\Sigma^{\omega}$,
  i.e., $A_\Sigma^{\omega} \nmodels A \Rightarrow B$.
  In addition to that, if $E + i \subseteq A_\Sigma^{\omega}$
  for $E \Rightarrow F \in \Sigma$ and $i \in \Z$,
  then $\Sigma \vdash A \Rightarrow E + i$ by Lemma~\ref{le:synclos}
  and so $\Sigma \vdash A \Rightarrow F + i$ using $\Shf$ and $\Tra$.
  Using Lemma~\ref{le:synclos} again, $F + i \subseteq A_\Sigma^{\omega}$
  which proves $A_\Sigma^{\omega} \in \mathrm{Mod}(\Sigma)$.
  The rest is a consequence of Theorem~\ref{thm:soundness}.
\end{proof}

As a corollary of the previous observations, we get the following assertion
showing that both the syntactic and semantic closures coincide.

\begin{theorem}\label{th:clos}
  For every $M \subseteq \alltatr$, we have $[M]_\Sigma = M_\Sigma^{\omega}$.
\end{theorem}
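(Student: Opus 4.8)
The plan is to prove the two inclusions $[M]_\Sigma \subseteq M_\Sigma^\omega$ and $M_\Sigma^\omega \subseteq [M]_\Sigma$ separately, leveraging the machinery already built up. For a general $M \subseteq \alltatr$ (not necessarily finite), the key is that both sides are defined via finitary data: the right-hand side by the finite stages $M_\Sigma^n$, and the left-hand side by Theorem~\ref{th:algmod}, which tells us $[M]_\Sigma = \bigcup\{[N]_\Sigma \| N \text{ finite} \subseteq M\}$. A symmetric reduction should work for the syntactic closure, since $M_\Sigma^\omega = \bigcup\{N_\Sigma^\omega \| N \text{ finite} \subseteq M\}$ (each element of $M_\Sigma^\omega$ lies in some $M_\Sigma^n$, and only finitely many elements of $M$ are needed to witness the stages up to $n$). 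Thus it suffices to prove $[N]_\Sigma = N_\Sigma^\omega$ for all finite $N \subseteq \alltatr$.

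For finite $N$, the inclusion $N_\Sigma^\omega \subseteq [N]_\Sigma$ follows from the argument already given inside the proof of Lemma~\ref{le:synclos}: there it is shown that $N_\Sigma^\omega \in \mathrm{Mod}(\Sigma)$, and clearly $N = N + 0 \subseteq N_\Sigma^\omega$ by extensivity of the stage operator, so $N_\Sigma^\omega$ is a model of $\Sigma$ containing $N$, whence $[N]_\Sigma \subseteq N_\Sigma^\omega$ — wait, that is the reverse; let me be careful. Since $[N]_\Sigma$ is the \emph{least} model of $\Sigma$ containing $N$, and $N_\Sigma^\omega$ is \emph{a} model of $\Sigma$ containing $N$, we get $[N]_\Sigma \subseteq N_\Sigma^\omega$. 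For the opposite inclusion, take any $y^j \in N_\Sigma^\omega$; then $y^j \in N_\Sigma^m$ for some $m$, so $\{y^j\}$ is a finite subset of $N_\Sigma^m$, and by Lemma~\ref{le:synclos} we have $\Sigma \proves N \Rightarrow \{y^j\}$ (using the internal claim of that lemma: $\Sigma \proves N \Rightarrow D$ for every finite $D \subseteq N_\Sigma^n$). By completeness (Theorem~\ref{th:scmpl}), $\Sigma \models N \Rightarrow \{y^j\}$, so by Theorem~\ref{th:semclos}, $\{y^j\} \subseteq [N]_\Sigma$, i.e.\ $y^j \in [N]_\Sigma$. Hence $N_\Sigma^\omega \subseteq [N]_\Sigma$, giving equality for finite $N$.

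Finally I would glue the finite case back to the general case:
\begin{align*}
  [M]_\Sigma
  = \textstyle\bigcup\{[N]_\Sigma \| N \text{ finite} \subseteq M\}
  = \textstyle\bigcup\{N_\Sigma^\omega \| N \text{ finite} \subseteq M\}
  = M_\Sigma^\omega,
\end{align*}
where the first equality is Theorem~\ref{th:algmod}, the second is the finite case just proved, and the third is the finitary characterization of $M_\Sigma^\omega$ noted above. I expect the main (minor) obstacle to be the bookkeeping in the third equality — verifying carefully that $M_\Sigma^\omega = \bigcup\{N_\Sigma^\omega \| N \text{ finite} \subseteq M\}$, which requires an easy induction on the stage index $n$ showing that any element appearing in $M_\Sigma^n$ already appears in $N_\Sigma^n$ for a suitable finite $N \subseteq M$ (at each stage only finitely many new witnesses $E + i \subseteq M_\Sigma^{n-1}$ are invoked, and one accumulates their finitely many ancestors). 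Alternatively, the whole theorem can be obtained directly: $[M]_\Sigma \subseteq M_\Sigma^\omega$ because $M_\Sigma^\omega$ is a model of $\Sigma$ containing $M$ (the model check being exactly as in Lemma~\ref{le:synclos}, just with finiteness of $E+i$ rather than of $M$), and $M_\Sigma^\omega \subseteq [M]_\Sigma$ by taking, for each $y^j \in M_\Sigma^\omega$, a finite $N \subseteq M$ with $y^j \in N_\Sigma^\omega$ and applying Lemma~\ref{le:synclos}, completeness, and Theorem~\ref{th:semclos} together with monotony of $[\cdot]_\Sigma$.
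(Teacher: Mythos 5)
Your proof is correct, but it takes a genuinely different route from the paper's. The paper's argument is a short, purely order-theoretic one: $[M]_\Sigma \subseteq M_\Sigma^{\omega}$ because $M_\Sigma^{\omega}$ is a model of $\Sigma$ containing $M$ (the model check from Lemma~\ref{le:synclos} only uses finiteness of $E+i$, as you note), and for the converse it observes that every model $N$ is a fixed point of the stage construction, $N_\Sigma^{\omega}=N$, so by monotony $M_\Sigma^{\omega} \subseteq N_\Sigma^{\omega} = N$ for every model $N \supseteq M$, in particular for $N = [M]_\Sigma$. No appeal to provability, soundness, or algebraicity is needed. You instead reduce to finite $N$ via Theorem~\ref{th:algmod} together with a finitary characterization of $M_\Sigma^{\omega}$ (which you correctly identify as needing its own induction on the stage index, and which does go through), and then prove $N_\Sigma^{\omega} \subseteq [N]_\Sigma$ by chaining Lemma~\ref{le:synclos}, soundness (only the soundness half of Theorem~\ref{th:scmpl} is actually used, i.e.\ Theorem~\ref{thm:soundness}), and Theorem~\ref{th:semclos}. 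Both work; the paper's version is shorter and independent of the deduction system for the hard inclusion, while yours makes the algebraic (finitary) character of both closure operators explicit and reuses more of the already-established machinery. Note that even in your finite case the detour through provability is avoidable: since $[N]_\Sigma$ is itself a model containing $N$, the paper's fixed-point observation gives $N_\Sigma^{\omega} \subseteq ([N]_\Sigma)_\Sigma^{\omega} = [N]_\Sigma$ directly, which would also render your reduction to finite sets unnecessary. Your self-correction of the inclusion direction mid-argument lands on the right side, so there is no gap.
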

\begin{proof}
  We get $[M]_\Sigma \subseteq M_\Sigma^{\omega}$ since $[M]_\Sigma$ is the
  least model of $\Sigma$ containing $M$. Conversely,
  observe that for any $N \in \mathrm{Mod}(\Sigma)$ such that $M \subseteq N$,
  it follows that $M_\Sigma^{\omega} \subseteq N_\Sigma^{\omega} = N$.
  Hence, for $N$ being $[M]_\Sigma$,
  we get $M_\Sigma^{\omega} \subseteq [M]_\Sigma$.
\end{proof}

\begin{remark}\label{rem:shf_strong}
  Let us stress that the notions of semantic and syntactic entailment we
  have considered in our paper are different from their classic counterparts.
  Indeed, each attribute implication annotated by time points can also be
  seen as a classic attribute implication \emph{per se} because the sets
  $A$ and $B$ in $A \Rightarrow B$ are subsets of~$\alltatr$. Therefore,
  in addition to the semantic entailment from Definition~\ref{def:entail},
  we may consider the ordinary one which disregards the special role of
  time points. The same applies to the provability---the classic notion
  is obtained by omitting the rule (Shf). For instance,
  $\Sigma = \{\{x^1\} \Rightarrow \{y^2\},\{y^5\} \Rightarrow \{z^2\}\}$
  proves $\{x^4\} \Rightarrow \{y^5\}$ by (Shf) and thus 
  $\{x^4\} \Rightarrow \{z^2\}$ by (Tra). On the other hand, $\Sigma$
  does not prove $\{x^4\} \Rightarrow \{z^2\}$ without (Shf).
\end{remark}

\begin{remark}
  (a)
  We can show that our system of deduction rules consisting of
  $\Ax$, $\Cut$, and $\Shf$ is non-redundant, i.e., all the rules in the
  system are independent. Indeed, no formulas are provable by 
  $\Sigma = \emptyset$ using only $\Cut$ and $\Shf$ and thus $\Ax$
  is independent. Moreover, $\Cut$ is independent since all formulas
  provable by $\Sigma = \emptyset$ using only $\Ax$ and $\Shf$ are
  exactly all instances of $\Ax$. The independence of $\Shf$ follows
  by Remark~\ref{rem:shf_strong}.
  
  (b)
  Let us note that the deductive system in Definition~\ref{def:ded} is not
  minimal in terms of the number of deduction rules. Indeed, we may
  replace $\Cut$ and $\Shf$ by a single deduction rule
  \begin{align}
    \iCUT{A \Rightarrow B+i}{B{\cup}C \Rightarrow D}{
    A \cup (C+i) \Rightarrow D+i}.
  \end{align}
  Indeed, observe that $\Cut$ is a particular case of $\iCut$ for $i = 0$
  and $\Shf$ results by $\iCut$ and $\Ax$ for $A = B = \emptyset$. Conversely,
  $\{A \Rightarrow B+i, B{\cup}C \Rightarrow D\} \proves
  A {\cup} (C+i) \Rightarrow D+i$ because using~\eqref{eqn:ts_cup},
  the sequence
  \begin{align*}
    A \Rightarrow B+i,
    B{\cup}C \Rightarrow D,
    (B+i){\cup}(C+i) \Rightarrow D+i,
    A {\cup} (C+i) \Rightarrow D+i
  \end{align*}
  is a proof of $A {\cup} (C+i) \Rightarrow D+i$ using $\Cut$ and $\Shf$.
  As a consequence, the system consisting of $\Ax$, $\Cut$, and $\Shf$ is
  equivalent to $\Ax$ and $\iCut$.
  
  (c)
  An alternative deduction system for our logic can be based on $\Ref$
  instead of $\Ax$ and a single rule which is a modification of
  a simplification deduction rule~\cite{SL}.
  First, it is easily seen that $\Ax$ and $\Cut$ may be equivalently
  replaced by the following rule and $\Ref$:
  \begin{align}
    \SIM{A \Rightarrow B}{C \Rightarrow D}{
    A \cup (C \setminus B) \Rightarrow D}.
  \end{align}
  Indeed, $\Sim$ is a rule derivable by $\Ax$ and $\Cut$
  because the sequence
  \begin{align*}
    A \Rightarrow B,
    B{\cup}C \Rightarrow C,
    C \Rightarrow D,
    B{\cup}C \Rightarrow D,
    A{\cup}(C{\setminus}B) \Rightarrow D,
  \end{align*}
  is a proof of $A{\cup}(C{\setminus}B) \Rightarrow D$ by 
  $\{A \Rightarrow B,C \Rightarrow D\}$ using $\Ax$ and $\Cut$;
  apply the rule twice and observe that $B \cup C = B \cup (C \setminus B)$.
  Conversely, observe first that $\Ax$ is derivable by~$\Ref$ and $\Sim$
  because from $B \Rightarrow B$ and $A \Rightarrow A$ it follows
  that $B {\cup} (A {\setminus} B) \Rightarrow A$ that is,
  $A {\cup} B \Rightarrow A$. Moreover, $\Cut$ is derivable by
  $\Ref$ and $\Sim$ because the following sequence
  \begin{align*}
    C \Rightarrow C,
    \emptyset \Rightarrow \emptyset,
    C \Rightarrow \emptyset,
    A \Rightarrow B,
    B{\cup}C \Rightarrow D,
    A{\cup}((B{\cup}C){\setminus}B) \Rightarrow D,
    A{\cup}C \Rightarrow D,
  \end{align*}
  is a proof of $A{\cup}C \Rightarrow D$ by
  $\{A \Rightarrow B,B{\cup}C \Rightarrow D\}$ in which we have used
  $\Sim$ three times and utilized the fact that
  $C \cup ((A \cup ((B \cup C) \setminus B)) \setminus \emptyset) =
  A \cup C$. Altogether, $\Ax$ and $\Cut$ can indeed be replaced by
  $\Ref$ and $\Sim$. Note that $\Sim$ may be perceived even more natural
  than $\Cut$ because it is applicable to any two input formulas.
  Note that a rule analogous to $\Sim$ with the inferred formula being 
  $A \cup (C \setminus B) \Rightarrow B \cup D$ was first proposed
  by Darwen~\cite[page 140]{DaDa:RDW89_91}.
  Now, we may consider an extension of $\Sim$ which involves time shifts:
  \begin{align}
    \iSIM{A \Rightarrow B+i}{C \Rightarrow D}{
    A \cup ((C \setminus B) +i) \Rightarrow D+i}.
  \end{align}
  Analogously as in the case of $\iCut$, $\Sim$ is a particular case of
  $\iSim$ for $i = 0$ and $\Shf$ results by $\iSim$ and $\Ref$ for
  $A = B = \emptyset$.
  Therefore, the deductive system in Definition~\ref{def:ded} can be replaced
  by $\Ref$ and $\iSim$.
\end{remark}

We now focus on the order in which the deduction rules may be applied in
proofs. We show that each proof may be transformed into a normalized proof
which involves applications of deduction rules in a special order. First,
we show that $\Shf$ commutes with the other rules. Formally, we introduce
the property for a general deduction rule $R$ as follows:

Let $R$ be a deduction rule of the form ``from $\varphi_1,\ldots,\varphi_n$
infer $\psi$''. We say that $\Shf$ \emph{commutes} with $R$ if for any
formula $\chi$ which results by $\psi$ using $\Shf$ there are
$\varphi'_1,\ldots,\varphi'_n$ which result by $\varphi_1,\ldots,\varphi_n$
using \Shf, respectively, such that $\chi$ is provable
by $\{\varphi'_1,\ldots,\varphi'_n\}$ using $R$.

\begin{lemma}\label{le:commut}
  $\Shf$ commutes with $\Ax$, $\Cut$, and $\Shf$.
\end{lemma}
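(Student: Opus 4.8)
The plan is to verify the commutation property separately for each of the three rules, since each case is an essentially routine manipulation of time shifts using the algebraic identities \eqref{eqn:ts_mon}--\eqref{eqn:ts_cap} and the associativity \eqref{eqn:ts_assoc} that lets us write $M+i+j$ unambiguously. Recall that what must be shown is: given a rule instance with hypotheses $\varphi_1,\ldots,\varphi_n$ and conclusion $\psi$, and given any $\chi$ obtained from $\psi$ by a single application of $\Shf$ (say shifting by $k$), there are shifts $\varphi'_1,\ldots,\varphi'_n$ of the hypotheses such that $\chi$ is provable from $\{\varphi'_1,\ldots,\varphi'_n\}$ by a single application of $R$.

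For $R = \Ax$, the rule is nullary ($n=0$), so $\chi$ must simply be an instance of $\Ax$ itself. Here $\psi$ is $A{\cup}B \Rightarrow A$ and $\chi$ is $(A{\cup}B)+k \Rightarrow A+k$, which by \eqref{eqn:ts_cup} equals $(A+k){\cup}(B+k) \Rightarrow A+k$; this is the axiom instance for the sets $A+k$ and $B+k$, so $\chi$ is provable by $\Ax$ with no hypotheses at all. For $R = \Shf$, the rule takes $A \Rightarrow B$ to $A+i \Rightarrow B+i$; a further $k$-shift of the conclusion gives $(A+i)+k \Rightarrow (B+i)+k = A+(i+k) \Rightarrow B+(i+k)$ by \eqref{eqn:ts_assoc}, and this is exactly what $\Shf$ (with shift $i+k$) produces from the \emph{unshifted} hypothesis $A \Rightarrow B$ — so here $\varphi' = \varphi$ works; alternatively one may shift the hypothesis by $k$ first and then apply $\Shf$ with shift $i$, both routes being legitimate.

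The case $R = \Cut$ is the one requiring the most care, though it is still only bookkeeping. The rule takes $\varphi_1 = (A \Rightarrow B)$ and $\varphi_2 = (B{\cup}C \Rightarrow D)$ to $\psi = (A{\cup}C \Rightarrow D)$; a $k$-shift gives $\chi = (A{\cup}C)+k \Rightarrow D+k = (A+k){\cup}(C+k) \Rightarrow D+k$ using \eqref{eqn:ts_cup}. The natural choice is $\varphi'_1 = (A+k \Rightarrow B+k)$ and $\varphi'_2 = ((B{\cup}C)+k \Rightarrow D+k) = (B+k){\cup}(C+k) \Rightarrow D+k$, again by \eqref{eqn:ts_cup}; then a single application of $\Cut$ to $\varphi'_1, \varphi'_2$ (with the roles $A \mapsto A+k$, $B \mapsto B+k$, $C \mapsto C+k$, $D \mapsto D+k$) yields exactly $(A+k){\cup}(C+k) \Rightarrow D+k$, which is $\chi$. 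The only thing to check is the set-theoretic matching, i.e.\ that $(B{\cup}C)+k = (B+k){\cup}(C+k)$, which is \eqref{eqn:ts_cup}; everything else is immediate. I expect no real obstacle here — the ``hard part'' is merely being careful that the shift distributes over the unions appearing in the premises and conclusion of $\Cut$, which is precisely guaranteed by the proposition on time shifts proved earlier.
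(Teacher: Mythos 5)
Your proof is correct and follows essentially the same route as the paper: $\Ax$ and $\Shf$ are handled by observing that shifted instances are again instances (using \eqref{eqn:ts_cup} and \eqref{eqn:ts_assoc}), and the $\Cut$ case is settled by shifting both hypotheses by $k$ and checking that the shift distributes over the unions via \eqref{eqn:ts_cup}. Your treatment is, if anything, slightly more careful than the paper's in spelling out the two legitimate readings of the $\Shf$-with-$\Shf$ case.
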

\begin{proof}
  Clearly, $\Shf$ commutes with $\Ax$ because the result of application
  of $\Shf$ to an instance of $\Ax$ is again an instance of $\Ax$. Moreover,
  $\Shf$ commutes with itself since $(A + i) + j$ equals $A + (i + j)$ for
  any $A \subseteq \alltatr$ and $i,j \in \Z$. Therefore, it remains to check
  that $\Shf$ commutes with $\Cut$. Consider formulas $A \Rightarrow B$
  and $B {\cup} C \Rightarrow D$ and the formula $A {\cup} C \Rightarrow D$
  which results by $\Cut$ and formula $(A {\cup} C)+i \Rightarrow D+i$ which
  results by $\Shf$. Clearly, if we apply $\Shf$ to $A \Rightarrow B$
  and $B {\cup} C \Rightarrow D$ for $i$, we obtain
  $A+i \Rightarrow B+i$ and $(B {\cup} C)+i \Rightarrow D+i$, respectively.
  The second formula equals $(B+i) {\cup} (C+i) \Rightarrow D+i$ and
  thus we may apply $\Cut$ to obtain $(A+i) \cup (C+i) \Rightarrow D+i$
  which equals $(A {\cup} C)+i \Rightarrow D+i$, proving that
  $\Shf$ commutes with $\Cut$.
\end{proof}

\begin{theorem}\label{th:commut}
  $\Sigma \proves A \Rightarrow B$ if{}f\/
  there is a finite $\Sigma' \subseteq \Sigma^\PL$
  such that 
  $\Sigma' \proves_\mathcal{R} A \Rightarrow B$ for
  $\mathcal{R}$ containing $\Ax$ and $\Cut$.
\end{theorem}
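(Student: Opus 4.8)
The plan is to prove the two directions separately; the backward direction is immediate, and the forward direction is a ``shift-pushing'' normalization driven by Lemma~\ref{le:commut}. Throughout, I read $\mathcal{R}$ as the classic $\Shf$-free system consisting of $\Ax$ and $\Cut$.

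For ``$\Leftarrow$'': suppose $\Sigma' \subseteq \Sigma^\PL$ is finite and $\Sigma' \proves_{\mathcal{R}} A \Rightarrow B$. Each member of $\Sigma'$ has the form $C+i \Rightarrow D+i$ with $C \Rightarrow D \in \Sigma$, hence is provable by $\Sigma$ in a single $\Shf$ step. Prepending these one-step derivations to the given $\{\Ax,\Cut\}$-proof (which, a fortiori, is a proof in the system of Definition~\ref{def:ded}) yields $\Sigma \proves A \Rightarrow B$.

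For ``$\Rightarrow$'': I would induct on the length of a proof $\delta_1,\dots,\delta_n = A \Rightarrow B$ of $A \Rightarrow B$ by $\Sigma$, with inductive hypothesis that any formula possessing a shorter proof by $\Sigma$ is $\{\Ax,\Cut\}$-provable from some finite subset of $\Sigma^\PL$. The pivotal auxiliary fact is a ``shift the whole proof'' lemma: if $\varepsilon_1,\dots,\varepsilon_m$ is an $\{\Ax,\Cut\}$-proof of $\psi$ by a theory $\Theta$, then applying $\Shf$ with parameter $i$ to every $\varepsilon_\ell$ produces an $\{\Ax,\Cut\}$-proof of $\psi+i$ by $\{\theta+i \mid \theta \in \Theta\}$. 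This is Lemma~\ref{le:commut} iterated along the proof: a shift of an $\Ax$-instance is again an $\Ax$-instance, and $\Shf$ commutes with $\Cut$ because by~\eqref{eqn:ts_cup} the shift of a $\Cut$-conclusion equals the $\Cut$ of the shifted premises; the only formulas whose status can change under the shift are hypotheses $\theta \in \Theta$, which become $\theta + i$. Granting this, I examine how $\delta_n$ arises. If $\delta_n \in \Sigma$, then $\delta_n \in \Sigma^\PL$ (take $i = 0$) and $\{\delta_n\}$ works; if $\delta_n$ is an $\Ax$-instance, the empty subset works; if $\delta_n$ follows by $\Cut$ from $\delta_j, \delta_l$ with $j,l < n$, then the prefixes $\delta_1,\dots,\delta_j$ and $\delta_1,\dots,\delta_l$ are themselves shorter proofs of their final formulas, so the inductive hypothesis gives finite $\Sigma'_1, \Sigma'_2 \subseteq \Sigma^\PL$ from which $\delta_j$ and $\delta_l$ are $\{\Ax,\Cut\}$-provable, and $\Sigma'_1 \cup \Sigma'_2$ plus one more $\Cut$ proves $\delta_n$; finally, if $\delta_n = C+i \Rightarrow D+i$ follows by $\Shf$ from $\delta_j = C \Rightarrow D$ with $j < n$, the inductive hypothesis gives a finite $\Sigma'' \subseteq \Sigma^\PL$ with an $\{\Ax,\Cut\}$-proof of $\delta_j$, and shifting that proof by $i$ via the auxiliary fact yields an $\{\Ax,\Cut\}$-proof of $\delta_n$ from $\Sigma'' + i$. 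Since $(C'+k \Rightarrow D'+k) + i = C'+(k{+}i) \Rightarrow D'+(k{+}i)$, the set $\Sigma'' + i$ is again a finite subset of $\Sigma^\PL$, which completes the induction and hence the proof.

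The only real obstacle is formulating and checking the ``shift the whole proof'' auxiliary fact cleanly. It is Lemma~\ref{le:commut} applied position by position, but the write-up must make explicit that a prefix of an $\mathcal{R}$-proof is again an $\mathcal{R}$-proof of its final formula (so the inductive hypothesis genuinely applies to $\delta_j$ and $\delta_l$), and must observe that the shifted hypothesis set lands back inside $\Sigma^\PL$ itself rather than in some iterated $(\Sigma^\PL)^\PL$ — precisely because $\Sigma^\PL$ is already closed under arbitrary further time shifts. Everything else is routine bookkeeping with finite unions of theories.
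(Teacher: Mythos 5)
Your proof is correct and rests on the same key fact as the paper's, namely Lemma~\ref{le:commut}: both arguments push every application of $\Shf$ down to the hypotheses, which then land inside $\Sigma^\PL$. You package this as an induction on proof length together with a ``shift the whole proof'' lemma, whereas the paper phrases it as a permutation of rule applications within a single proof sequence; the substance is identical, and your version is, if anything, the more explicitly justified of the two.
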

\begin{proof}
  In order to see the only-if part, assume that $\Sigma \proves A \Rightarrow B$
  which means there is a proof of $A \Rightarrow B$ by $\Sigma$. The proofs
  contains only finitely many formulas in $\Sigma$ and thus, we may consider
  a finite $\Sigma'' \subseteq \Sigma$ such that
  $\Sigma'' \proves A \Rightarrow B$. Moreover, the proof contains only
  finitely many applications of $\Shf$ and, using Lemma~\ref{le:commut},
  there is a proof of $A \Rightarrow B$ by $\Sigma''$ which starts by formulas
  in $\Sigma''$, then continues with applications of $\Shf$, and terminates
  with formulas derived without using $\Shf$. Therefore, there is
  a finite $\Sigma' \subseteq (\Sigma'')^\PL \subseteq \Sigma^\PL$
  such that $A \Rightarrow B$ is provable by $\Sigma'$ using only
  $\Ax$ and $\Cut$. The if-part of the assertion is easy to see.
\end{proof}

The previous observation allows us to introduce special derivation sequences
which represent proofs in a normalized form in that all utilized deduction rules
are applied in a particular order. The proofs are constructed using deduction
rules $\Ref$, $\Shf$, $\Acc$, and $\Pro$, see Proposition~\ref{prop:deriv}.

\begin{definition}\label{def:normseq}
  A finite sequence of formulas $\varphi_1,\ldots,\varphi_n$
  is called a \emph{normalized derivation sequence} of $A \Rightarrow B$
  using formulas in $\Sigma$ if the sequence
  \begin{enumerate}\parskip=0pt
  \item[\itm{1}]
    starts with finitely many formulas in $\Sigma$;
  \item[\itm{2}]
    continues by formulas obtained using $\Shf$ applied to formulas in \itm{1};
  \item[\itm{3}]
    continues by $A \Rightarrow A$;
  \item[\itm{4}]
    continues by formulas obtained using $\Acc$ whose first argument is
    the preceding formula and the second argument is a formula in
    \itm{1} or \itm{2};
  \item[\itm{5}]
    terminates with $A \Rightarrow B$
    which results by the 
    preceding formula by $\Pro$.
  \end{enumerate}
\end{definition}

Normalized derivation sequences are sufficient and adequate means for
determining provability of formulas:

\begin{theorem}\label{th:norm}
  $\Sigma \proves A \Rightarrow B$ if{}f\/
  there is a normalized derivation sequence of $A \Rightarrow B$
  using formulas in $\Sigma$.
\end{theorem}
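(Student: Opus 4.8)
The plan is to prove both directions of the equivalence in Theorem~\ref{th:norm}, with the hard work concentrated in the ``only-if'' direction. The ``if'' direction is straightforward: a normalized derivation sequence is, by inspection of Definition~\ref{def:normseq}, a genuine proof of $A \Rightarrow B$ by $\Sigma$ in the system consisting of $\Ref$, $\Shf$, $\Acc$, and $\Pro$, each of which is either a basic rule ($\Shf$) or a derived rule (by Proposition~\ref{prop:deriv}, $\Ref$, $\Acc$, $\Pro$ are all provable from $\Ax$, $\Cut$, $\Shf$). Splicing in the short derivations witnessing $\Ref$, $\Acc$, $\Pro$ turns a normalized derivation sequence into an ordinary proof, so $\Sigma \proves A \Rightarrow B$ follows.

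For the ``only-if'' direction, the plan is to start from Theorem~\ref{th:commut}: if $\Sigma \proves A \Rightarrow B$, there is a finite $\Sigma' \subseteq \Sigma^\PL$ with $\Sigma' \proves_\mathcal{R} A \Rightarrow B$ for $\mathcal{R}$ containing only $\Ax$ and $\Cut$. Each formula of $\Sigma'$ is of the form $E+i \Rightarrow F+i$ for some $E \Rightarrow F \in \Sigma$, so items \itm{1} and \itm{2} of Definition~\ref{def:normseq} can reproduce $\Sigma'$ (the formulas of $\Sigma$ used, followed by their $\Shf$-shifts). It then suffices to show that whatever is provable from a finite set $\Sigma'$ using $\Ax$ and $\Cut$ can be re-derived by an \itm{3}--\itm{5} tail: start with $A \Rightarrow A$, repeatedly apply $\Acc$ against formulas of $\Sigma'$ to accumulate attributes into the consequent, and finish with one $\Pro$. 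Equivalently, one shows $B \subseteq A^\omega_{\Sigma'}$ (via Lemma~\ref{le:synclos} and the fact that $\Sigma' \proves_{\{\Ax,\Cut\}} A \Rightarrow B$ iff $\Sigma' \proves A \Rightarrow B$, since $\Sigma'$ already consists of shifted formulas) and then traces the stages $A = A^0_{\Sigma'} \subseteq A^1_{\Sigma'} \subseteq \cdots$: at stage $n{+}1$ one picks a formula $E+i \Rightarrow F+i$ in $\Sigma'$ (here $i$ is already absorbed, so really a formula of $\Sigma'$ viewed as $E' \Rightarrow F'$) with $E' \subseteq A^n_{\Sigma'}$, and since the current running formula $A \Rightarrow G$ has $A^n_{\Sigma'} \supseteq G \supseteq E'$... more precisely one maintains the invariant that the running consequent $G$ contains $E'$, so that $\Acc$ applied to $A \Rightarrow G$ (written $A \Rightarrow (G \setminus F') \cup E' \cup \cdots$ as needed to match the shape $A \Rightarrow B{\cup}C$ with $C = E'$) and the axiom-instance $E' \Rightarrow F'$ yields $A \Rightarrow G \cup F'$. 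Finitely many such steps reach a running formula $A \Rightarrow G^*$ with $B \subseteq G^* \subseteq A^\omega_{\Sigma'}$, and a final $\Pro$ gives $A \Rightarrow B$.

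The main obstacle is the bookkeeping in step \itm{4}: one must verify that $\Acc$, whose schema is $\{A \Rightarrow B{\cup}C,\ C \Rightarrow D{\cup}E\} \proves A \Rightarrow B{\cup}C{\cup}D$, can indeed simulate an arbitrary single $\Cut$ step against a $\Sigma'$-formula when the left argument is constrained to be the immediately preceding formula in the sequence. Concretely, given a $\Cut$-based proof of $A \Rightarrow B$ from $\Sigma'$, one needs to reorganize it so that attributes are only ever \emph{added} to a single growing consequent anchored at antecedent $A$ (never splitting into independent branches), which is exactly the content of the iterative $A^n_{\Sigma'}$ construction; matching the set-theoretic shape $B{\cup}C \ /\ C{\cup}D{\cup}E$ required by $\Acc$ against the actual sets $G$ and $E' \Rightarrow F'$ requires choosing $C = E'$, $B = G \setminus E'$, $D{\cup}E = F'$ and checking $B \cup C = G$, which holds because of the maintained invariant $E' \subseteq G$. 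I would state this as a small lemma (``every finite subset of $A^\omega_{\Sigma'}$ is the consequent of a formula obtained by an \itm{3}--\itm{4} prefix'') proved by induction on the stage $n$, and then assemble the theorem from it together with Theorem~\ref{th:commut}.
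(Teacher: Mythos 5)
Your proposal is correct and follows essentially the same route as the paper: the if-direction by noting that $\Ref$, $\Acc$, and $\Pro$ are derived rules (Proposition~\ref{prop:deriv}), and the only-if direction by invoking Theorem~\ref{th:commut} to obtain a finite $\Sigma'\subseteq\Sigma^\PL$ proving $A\Rightarrow B$ with $\Ax$ and $\Cut$ only, then realizing the increasing chain of finite consequents $A=A_0\subseteq A_1\subseteq\cdots\supseteq B$ (each step adding $F'$ for some $E'\Rightarrow F'\in\Sigma'$ with $E'\subseteq A_{i-1}$) as successive $\Acc$ steps followed by one $\Pro$; your explicit matching $C=E'$, $B=G\setminus E'$, $D=F'$ just spells out what the paper leaves to ``moment's reflection.'' (The parenthetical claim that $\Sigma'\proves_{\{\Ax,\Cut\}}A\Rightarrow B$ if{}f $\Sigma'\proves A\Rightarrow B$ is not needed and is false in the stated direction, but only the trivial implication is used, so the argument stands.)
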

\begin{proof}
  The if-part follows directly by the fact that a normalized 
  derivation sequence of $A \Rightarrow B$ using formulas in $\Sigma$ is
  a proof of $A \Rightarrow B$ by $\Sigma$
  using $\Ref$, $\Shf$, $\Acc$, and $\Pro$. Since all of them are rules
  derivable by $\Ax$, $\Cut$, and $\Shf$, see Proposition~\ref{prop:deriv},
  we get $\Sigma \proves A \Rightarrow B$.
  
  Conversely, by Theorem~\ref{th:commut} we get that $A \Rightarrow B$
  is provable by a finite $\Sigma' \subseteq \Sigma^\PL$ using only $\Ref$
  and $\Cut$. Therefore, we may form the \itm{1} and \itm{2}-parts
  of the derivation sequence using the formulas in $\Sigma'$ followed
  by $A \Rightarrow A$. Next, observe that there is a finite sequence
  $A_0,\ldots,A_n$ of subsets of $\alltatr$ such that $A_0 = A$,
  $A_i  = A_{i-1} \cup F$ for some $E \Rightarrow F \in \Sigma'$
  satisfying $E \subseteq A_{i-1}$, and $A_n \supseteq B$. In order
  to see that, consider~\eqref{eqn:A_n+1} and the fact
  that $A \Rightarrow B$ is provable by $\Sigma'$ without using $\Shf$.
  By moment's reflection, we can see that the \itm{4}-part of the
  derivation sequence is formed of formulas $A \Rightarrow A_i$
  ($i=0,\ldots,n$), and the sequence is terminated by
  a single application of $\Pro$ to obtain $A \Rightarrow B$.
\end{proof}

We conclude the section by showing further properties of provability. The
next assertion may be viewed as a type of a deduction theorem.

\begin{theorem}
  Let $\Sigma$ be a theory and $A,B \subseteq \alltatr$ be finite.
  Then the following statements are equivalent:
  \begin{enumerate}\parskip=0pt%
  \item[\itm{1}]
    $\Sigma \cup \{\emptyset \Rightarrow A\} \proves \emptyset \Rightarrow B$,
  \item[\itm{2}]
    there are $i_1,\ldots,i_n \in \Z$ such that
    $\Sigma \proves \textstyle\bigcup_{m=1}^n(A+i_m) \Rightarrow B$.
  \end{enumerate}
\end{theorem}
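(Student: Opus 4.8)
The plan is to prove the two directions separately, using the syntactic closure characterization from Lemma~\ref{le:synclos} and Theorem~\ref{th:clos} to turn provability into membership in $\omega$-stage closures, together with Theorem~\ref{thm:PL_sem} to reduce everything to the classical (shift-free) Armstrong calculus over $\alltatr$.

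First, for the direction \itm{1}$\Rightarrow$\itm{2}: Assume $\Sigma \cup \{\emptyset \Rightarrow A\} \proves \emptyset \Rightarrow B$. By Lemma~\ref{le:synclos} this means $B \subseteq \emptyset_{\Sigma \cup \{\emptyset \Rightarrow A\}}^{\omega}$. I would analyze the stages $\emptyset_{\Sigma \cup \{\emptyset \Rightarrow A\}}^{n}$ directly from~\eqref{eqn:A_n+1}. The only way the extra formula $\emptyset \Rightarrow A$ contributes is via its shifts $A+i$ (since $\emptyset + i \subseteq M$ always holds); so at each stage the new material coming from $\emptyset \Rightarrow A$ is exactly $\bigcup_i (A+i)$ over the relevant shifts. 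Because $B$ is finite and closures are algebraic (or just because the chain stabilizes at the finite level containing $B$), only finitely many shifts $i_1,\ldots,i_n$ of $A$ are actually ``used'' in generating $B$. I would then argue that $B \subseteq \bigl(\bigcup_{m=1}^n (A+i_m)\bigr)_\Sigma^{\omega}$: intuitively, once the sets $A+i_1,\ldots,A+i_n$ are present as a starting point, the remaining derivation uses only formulas of $\Sigma$ and their shifts, exactly as in the original proof. Formally, one shows by induction on $n'$ that $\emptyset_{\Sigma\cup\{\emptyset\Rightarrow A\}}^{n'} \cap$ (the part reachable using only the shifts $i_1,\ldots,i_n$) is contained in $\bigl(\bigcup_{m}(A+i_m)\bigr)_\Sigma^{\omega}$. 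Then Lemma~\ref{le:synclos} again gives $\Sigma \proves \bigcup_{m=1}^n(A+i_m) \Rightarrow B$.

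For the direction \itm{2}$\Rightarrow$\itm{1}: Suppose $\Sigma \proves \bigcup_{m=1}^n(A+i_m) \Rightarrow B$ for some $i_1,\ldots,i_n$. From $\emptyset \Rightarrow A$ we derive $\emptyset \Rightarrow A+i_m$ for each $m$ using $\Shf$, and then by repeated use of $\Add$ (Proposition~\ref{prop:deriv}) we get $\emptyset \Rightarrow \bigcup_{m=1}^n(A+i_m)$, all of this provable from $\Sigma\cup\{\emptyset\Rightarrow A\}$. Combining with the assumed proof of $\bigcup_{m=1}^n(A+i_m) \Rightarrow B$ from $\Sigma$ (hence a fortiori from $\Sigma\cup\{\emptyset\Rightarrow A\}$) via $\Tra$ yields $\Sigma\cup\{\emptyset\Rightarrow A\} \proves \emptyset \Rightarrow B$. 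This direction is essentially bookkeeping with the derived rules already in hand.

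The main obstacle is the careful stage-by-stage argument in the forward direction: one must make precise the claim that only finitely many shifts $i_1,\ldots,i_n$ of $A$ suffice and that the rest of the derivation can be ``relativized'' to a $\Sigma$-only derivation starting from $\bigcup_m (A+i_m)$. The cleanest route is probably to argue semantically instead: by completeness (Theorem~\ref{th:scmpl}), \itm{1} is equivalent to $B \subseteq [A]_{\Sigma}$ computed in the model $(\emptyset_{\Sigma\cup\{\emptyset\Rightarrow A\}})^{\omega}$, and one inspects that this model equals $\bigcup$ over finite sets of shifts of the $\Sigma$-closures of $\bigcup_m(A+i_m)$; finiteness of $B$ then forces a single finite set of shifts. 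Either way, the bridge between ``finitely many shifts used'' and the statement's explicit finite union is where the real content lies, and I would isolate it as the key lemma before assembling the equivalence.
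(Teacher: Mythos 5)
Your proof of \itm{2}$\Rightarrow$\itm{1} is exactly the paper's argument ($\Shf$ on $\emptyset\Rightarrow A$, then $\Add$, then $\Tra$), but your forward direction takes a genuinely different route. The paper proves \itm{1}$\Rightarrow$\itm{2} proof-theoretically: it runs induction along the given proof $A_1\Rightarrow B_1,\ldots,A_n\Rightarrow B_n$ of $\emptyset\Rightarrow B$, showing for each line that $\Sigma\proves A_p\cup\bigcup_{m}(A+i_m)\Rightarrow B_p$ for some finitely many shifts, with a case split on whether the line is an axiom, a member of the theory, or obtained by $\Cut$ or $\Shf$; the shifts $i_m$ are accumulated explicitly (in the $\Shf$ case they all get translated by $i$). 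You instead work semantically: via Lemma~\ref{le:synclos} and completeness, \itm{1} becomes $B\subseteq[\emptyset]_{\Sigma\cup\{\emptyset\Rightarrow A\}}$, and the key lemma is that this closure equals the union, over all finite sets of shifts, of $[\bigcup_m(A+i_m)]_\Sigma$; since that union is directed and $B$ is finite, $B$ lands in a single member, giving \itm{2}. This key lemma does hold (the union is a model of $\Sigma\cup\{\emptyset\Rightarrow A\}$ precisely because each $E+i$ with $E\Rightarrow F\in\Sigma$ is finite, so the algebraicity argument of Theorem~\ref{th:algmod} applies), so your route is sound. The trade-off: the paper's induction is constructive and reads off the shifts directly from the $\Shf$ applications in the proof, while your argument is shorter in spirit but non-constructive and leans on the closure machinery. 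One caution: of the two variants you offer for the forward direction, the first (relativizing the stages $\emptyset^{n'}_{\Sigma\cup\{\emptyset\Rightarrow A\}}$ to ``the part reachable using only the shifts $i_1,\ldots,i_n$'') is not well-posed as stated, because those shifts have not yet been identified when the induction is set up; the semantic variant you isolate at the end is the one that actually closes the gap, and it is the version you should write out in full.
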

\begin{proof}
  ``\itm{1}\,$\Rightarrow$\,\itm{2}'':
  Let $A_1 \Rightarrow B_1,\ldots,A_n \Rightarrow B_n$ be a proof
  of $\emptyset \Rightarrow B$ by $\Sigma \cup \{\emptyset \Rightarrow A\}$.
  For each $p=1,\ldots,n$, we show that there are $i_1,\ldots,i_{p_n} \in \Z$
  for which $\Sigma \proves
  A_p \cup \textstyle\bigcup_{m=1}^{p_n}(A+i_m) \Rightarrow B_p$.
  The proof goes by induction
  on $p$. Thus, take $p=1,\ldots,n$ and assume the claim holds for all $q < p$.
  We distinguish the following cases:
  \begin{itemize}\parskip=0pt%
  \item[--]
    $A_p \Rightarrow B_p$ is an instance of $\Ax$.
    Then, we let $p_n = 1$, $i_1 = 0$, and thus
    $A_p \cup \textstyle\bigcup_{m=1}^{p_n}(A+i_m)$ equals $A_p \cup A$, i.e.,
    $A_p \cup A \Rightarrow B_p$ follows using $\Ax$.
  \item[--]
    $A_p \Rightarrow B_p \in \Sigma$. As in the previous case,
    for $p_n = 1$ and $i_1=0$ using $\Wea$ we infer
    $A_p \cup A \Rightarrow B_p$, showing 
    $\Sigma \proves A_p \cup A \Rightarrow B_p$.
  \item[--]
    Let $A_p \Rightarrow B_p$ result by
    $A_q \Rightarrow B_q$ and 
    $A_r \Rightarrow B_r$ using $\Cut$. In this case,
    there is $C$ such that $A_r = B_q \cup C$, $B_p = B_r$,
    and $A_p = A_q \cup C$.
    By induction hypothesis, there are $i_1,\ldots,i_{q_n} \in \Z$
    and $i'_1,\ldots,i'_{q_r} \in \Z$ such that
    $\Sigma \proves
    A_q \cup \textstyle\bigcup_{m=1}^{q_n}(A+i_m) \Rightarrow B_q$ and
    $\Sigma \proves
    B_q \cup C \cup \textstyle\bigcup_{m=1}^{q_r}(A+i'_m) \Rightarrow B_r$.
    Therefore, using $\Cut$,
    $\Sigma \proves
    A_q \cup \textstyle\bigcup_{m=1}^{q_n}(A+i_m)
    \cup C \cup \textstyle\bigcup_{m=1}^{q_r}(A+i'_m) \Rightarrow B_p$.
    Hence, for $i''_1 = i_1,\ldots,i''_{q_n}=i_{q_n},
    i''_{q_{n+1}}=i'_1,\ldots,i''_{q_n+q_r}=i'_{q_r}$ it follows that
    $\Sigma \proves
    A_q \cup C \cup \textstyle\bigcup_{m=1}^{q_n+q_r}(A+i''_m)
    \Rightarrow B_p$, i.e., 
    $\Sigma \proves
    A_p \cup \textstyle\bigcup_{m=1}^{q_n+q_r}(A+i''_m) \Rightarrow B_p$.
  \item[--]
    Let $A_p \Rightarrow B_p$ result by $A_q \Rightarrow B_q$ using $\Shf$.
    Then, $A_p = A_q + i$ and $B_p = B_q + i$ for some $i \in \Z$.
    By induction hypotheses, there are $i_1,\ldots,i_{q_n}$ such that
    $\Sigma \proves
    A_q \cup \textstyle\bigcup_{m=1}^{q_n}(A+i_m) \Rightarrow B_q$.
    Using $\Shf$, we get 
    $\Sigma \proves
    \bigl(A_q \cup \textstyle\bigcup_{m=1}^{q_n}(A+i_m)\bigr)+i
    \Rightarrow B_q+i$.
    Now, observe that
    $\bigl(A_q \cup \textstyle\bigcup_{m=1}^{q_n}(A+i_m)\bigr)+i$
    equals
    $(A_q+i) \cup \textstyle\bigcup_{m=1}^{q_n}(A+i_m+i)$.
    Therefore, the claim holds for integers $i_1+i,\ldots,i_{q_n}+i$.
  \end{itemize}
  As a special case for $p = n$, we get \itm{2} because $A_n = \emptyset$.
  
  ``\itm{2}\,$\Rightarrow$\,\itm{1}'':
  Let $\Sigma \proves \textstyle\bigcup_{m=1}^n(A+i_m) \Rightarrow B$ for
  some $i_1,\ldots,i_n \in \Z$. From the monotony of provability, we get
  that $\Sigma \cup \{\emptyset \Rightarrow A\} \proves
  \textstyle\bigcup_{m=1}^n(A+i_m) \Rightarrow B$.
  Moreover, for each $m=1,\ldots,n$ we get 
  $\Sigma \cup \{\emptyset \Rightarrow A\} \proves
  \emptyset \Rightarrow A+i_m$ using $\Shf$. Hence,
  $\Sigma \cup \{\emptyset \Rightarrow A\} \proves
  \emptyset \Rightarrow \textstyle\bigcup_{m=1}^nA+i_m$
  by finitely many applications of $\Add$ and $\Tra$ gives
  $\Sigma \cup \{\emptyset \Rightarrow A\} \proves \emptyset \Rightarrow B$. 
\end{proof}

\begin{example}
  Let us observe that a direct counterpart of the classic deduction theorem
  does not hold in our system. For instance, we may take a theory
  $\Sigma = \{\emptyset \Rightarrow \{x^1\}\}$. Then, using $\Shf$ for $i = 1$,
  we easily see that $\Sigma \proves \emptyset \Rightarrow \{x^2\}$. On the
  other hand, we have $\nproves \{x^1\} \Rightarrow \{x^2\}$ and thus in general
  $\Sigma \cup \{\emptyset \Rightarrow A\} \proves \emptyset \Rightarrow B$
  does not imply that $\Sigma \proves A \Rightarrow B$ which holds in the
  classic case.
\end{example}

\begin{example}\label{ex:completion}
  One of the classic laws about provability that apply to attribute implications
  and can be formulated in terms of attribute implications as formulas with
  limited expressive power compared to general propositional formulas is the
  principle of the proof by cases. Formally, if $\mathcal{R}$ consists only
  of $\Ax$ and $\Cut$, then the following are equivalent:
  \begin{itemize}
  \item
    $\Sigma \proves_\mathcal{R} A \Rightarrow B$;
  \item
    $\Sigma \cup \{C \Rightarrow D\} \proves_\mathcal{R} A \Rightarrow B$
    and
    $\Sigma \cup \{D \Rightarrow C\} \proves_\mathcal{R} A \Rightarrow B$.
  \end{itemize}
  This follows immediately by the fact that in this case,
  $\proves_\mathcal{R}$ becomes the classic propositional provability. The law
  does not apply in our system where $\mathcal{R}$ contains
  the additional rule $\Shf$. For instance, consider the following theory
  \begin{align*}
    \Sigma &= \{\{x^0\} \Rightarrow \{c^1\},
             \{x^0\} \Rightarrow \{d^2\},
             \{c^2\} \Rightarrow \{y^0\},
             \{d^1\} \Rightarrow \{y^0\}\}.
  \end{align*}
  Obviously, we have 
  $\Sigma \cup \{\{c^0\} \Rightarrow \{d^0\}\} \proves
  \{x^0\} \Rightarrow \{y^0\}$ using $\Shf$ and two applications of $\Cut$.
  Analogously, we get $\Sigma \cup \{\{d^0\} \Rightarrow \{c^0\}\} \proves
  \{x^0\} \Rightarrow \{y^0\}$. On the other hand, we can show that
  $\Sigma \nproves \{x^0\} \Rightarrow \{y^0\}$, i.e., the principle of the
  proof by cases does not hold. In order to see that 
  $\Sigma \nproves \{x^0\} \Rightarrow \{y^0\}$, observe that 
  $[\{x^0\}]_\Sigma = \{y^{-1}, x^0, c^1, y^1, d^2\}$ for which 
  $[\{x^0\}]_\Sigma \nmodels \{x^0\} \Rightarrow \{y^0\}$. Thus,
  since our logic is sound and $[\{x^0\}]_\Sigma \in \mathrm{Mod}(\Sigma)$,
  we indeed have $\Sigma \nproves \{x^0\} \Rightarrow \{y^0\}$.
\end{example}

\begin{remark}
  We may say that $\Sigma'$ is a \emph{completion} of $\Sigma$
  if $\Sigma \subseteq \Sigma'$ and for any finite $C,D \subseteq \alltatr$,
  we have either $\Sigma' \vdash C \Rightarrow D$ or 
  $\Sigma' \vdash D \Rightarrow C$. Let us note that analogous notions of
  completions play an important role in completeness proofs of
  various logics, cf.~\cite{Haj:MFL}. Namely, if a given theory does
  not prove a formula it is often desirable to find its completion which
  does not prove the formula as well. As a consequence of
  Example~\ref{ex:completion}, we observe that this is not possible in
  our logic. Namely, the example shows a particular case where 
  $\Sigma \nproves \{x^0\} \Rightarrow \{y^0\}$ and there is
  no completion $\Sigma'$ such that
  $\Sigma' \nproves \{x^0\} \Rightarrow \{y^0\}$.
  Indeed, each completion $\Sigma'$ proves either 
  $\{c^0\} \Rightarrow \{d^0\}$ or $\{d^0\} \Rightarrow \{c^0\}$
  and thus it also proves $\{x^0\} \Rightarrow \{y^0\}$. Nevertheless,
  we were able to prove Theorem~\ref{th:scmpl} without having this property.
\end{remark}

%%%%%%%%%%%%%%%%%%%%%%%%%%%%%%%%%%%%%%%%%%%%%%%%%%%%%%%%%%%%%%%%%%%%%%%%%%%%%%%%
%%%%% COMPUTATIONAL ISSUES
%%%%%%%%%%%%%%%%%%%%%%%%%%%%%%%%%%%%%%%%%%%%%%%%%%%%%%%%%%%%%%%%%%%%%%%%%%%%%%%%
\section{Computational Issues}\label{sec:comput}
In this section, we show bounds on the computational complexity of deciding
whether an attribute implication over attributes annotated by time points
is provable by a finite set $\Sigma$ of other attribute implications.
Then, we focus on a subproblem which typically appears in applications.
For the subproblem we provide a pseudo-polynomial time~\cite{GaJo:CaI}
decision algorithm.

We formalize the \emph{decision problem of entailment} as a language of
encodings of finitely many formulas, i.e., we put
\begin{align}
  L_\mathrm{ENT} &= \{\langle\Sigma,A \Rightarrow B\rangle \,|\,
                   \Sigma \text{ is a finite theory and } 
                   \Sigma\proves A\Rightarrow B\},
\end{align}
considering a fixed $\alltatr$. 
In order to show the lower bound of the time complexity of 
$L_\mathrm{ENT}$, we utilize a reduction of decision problems~\cite{Papa}
which involves the unbounded subset sum problem. The decision variant of the 
unbounded subset sum problem is formulated as follows: An instance of the 
problem is given by $n$ non-negative integers $j_1,\dots,j_n$ and a target 
value $z$; the answer to the instance is ``yes'' if{}f there are 
non-negative integers $c_1,\ldots,c_n$ such that
\begin{align}
  \textstyle\sum_{i=1}^{n} c_i j_i &= z.
  \label{eqn:uss}
\end{align}
The unbounded subset sum decision problem is
NP-complete, see~\cite[Proposition A.4.1]{Knapsack}. 

Let us note that in the case of the ordinary
attribute implications and functional dependencies, the problem of determining
whether a given formula follows by a finite set of formulas is easy and there
exist efficient linear time decision algorithms~\cite{BeBe:Cprttdonfrs}.
In contrast, the corresponding decision problem in our setting is hard:

\begin{theorem}[lower bound]\label{th:ENT}
  $L_\mathrm{ENT}$ is NP-hard.
\end{theorem}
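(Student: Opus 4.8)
The plan is to reduce the NP-complete unbounded subset sum problem to $L_\mathrm{ENT}$ in polynomial time. Given an instance consisting of non-negative integers $j_1,\ldots,j_n$ and a target $z$, I would build a theory $\Sigma$ over a suitable fixed $\alltatr$ together with a query formula $A \Rightarrow B$ so that $\Sigma \proves A \Rightarrow B$ if{}f there exist non-negative integers $c_1,\ldots,c_n$ with $\sum_{i=1}^n c_i j_i = z$. The intuition is that each "step" $j_i$ becomes a formula that advances some marker attribute forward in time by $j_i$ positions, i.e., a formula like $\{p^0\} \Rightarrow \{p^{j_i}\}$ (plus, if needed, auxiliary attributes to keep the marker from running backwards or to force the step sizes to be exactly the $j_i$). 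Composing such formulas via $\Cut$ and $\Shf$ realizes exactly the sums $\sum c_i j_i$, and reaching the attribute $p^z$ from $p^0$ corresponds to hitting the target. Concretely I would take $\Sigma = \{\{p^0\} \Rightarrow \{p^{j_i}\} \mid i = 1,\ldots,n\}$, set $A = \{p^0\}$ and $B = \{p^z\}$, and claim $\Sigma \proves \{p^0\} \Rightarrow \{p^z\}$ if{}f the subset sum instance is a "yes" instance.

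The proof of correctness of the reduction then splits into the two directions, handled most cleanly via the semantic closure $[A]_\Sigma = A_\Sigma^\omega$ (Theorem~\ref{th:clos}) and the characterization of entailment in Theorem~\ref{th:semclos}, i.e., $\Sigma \proves \{p^0\} \Rightarrow \{p^z\}$ if{}f $p^z \in [\{p^0\}]_\Sigma$. For the "if" direction: if $z = \sum_{i=1}^n c_i j_i$ with $c_i \ge 0$, then by repeatedly applying $\Shf$ to shift $\{p^0\} \Rightarrow \{p^{j_i}\}$ into $\{p^t\} \Rightarrow \{p^{t+j_i}\}$ and chaining with $\Tra$ (or directly unwinding the iteration~\eqref{eqn:A_n+1} defining $A_\Sigma^\omega$), one sees $p^0, p^{j_{i_1}}, p^{j_{i_1}+j_{i_2}}, \ldots, p^z$ all lie in $\{p^0\}_\Sigma^\omega$, so $\Sigma \proves \{p^0\} \Rightarrow \{p^z\}$. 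For the "only if" direction: I would show by induction on the stage $m$ that every attribute in $\{p^0\}_\Sigma^m$ has the form $p^t$ with $t$ a non-negative integer combination of $j_1,\ldots,j_n$ — the base case is $t = 0$ (the empty combination), and the inductive step observes that $\eqref{eqn:A_n+1}$ can only add, for some $\{p^0\}\Rightarrow\{p^{j_i}\} \in \Sigma$ and some shift $k$ with $p^k \in \{p^0\}_\Sigma^m$, the attribute $p^{k + j_i}$, which is again a non-negative combination by the induction hypothesis (note the only shifts $k$ with $\{p^0\}+k = \{p^k\} \subseteq \{p^0\}_\Sigma^m$ are themselves such non-negative combinations). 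Hence if $p^z \in \{p^0\}_\Sigma^\omega$ then $z$ is such a combination, i.e., the subset sum instance is "yes".

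Finally I would note that the reduction is computable in polynomial (indeed logarithmic-space) time: $\Sigma$ has $n$ formulas, each of size polynomial in the encoding of the corresponding $j_i$, and the query formula has size polynomial in the encoding of $z$; all the integers appearing are among the input integers. Since unbounded subset sum is NP-complete (\cite[Proposition A.4.1]{Knapsack}) and $L_\mathrm{ENT}$ is, by Theorem~\ref{th:scmpl}, the same as the semantic entailment problem, this establishes NP-hardness of $L_\mathrm{ENT}$.

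The step I expect to be the main obstacle is getting the "only if" direction exactly right, in particular making sure the theory $\Sigma$ cannot generate, via interleaved forward and backward shifts, some attribute $p^t$ whose index $t$ is \emph{not} a non-negative integer combination of the $j_i$. With the bare theory $\{\{p^0\} \Rightarrow \{p^{j_i}\}\}$ and starting set $\{p^0\}$ this is fine because the iteration~\eqref{eqn:A_n+1} only ever fires a formula $E \Rightarrow F$ when the shifted antecedent $E+k$ is already present, and here $E = \{p^0\}$ so $k$ must itself already be a reachable index; but if auxiliary attributes or richer antecedents turn out to be needed to force the construction to behave (e.g. to prevent $\emptyset \Rightarrow \cdots$ shortcuts, or to handle $j_i = 0$), the inductive invariant on the structure of $\{p^0\}_\Sigma^\omega$ will need to be stated and checked with more care. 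A secondary point to verify is that the case $z = 0$, or the case where some $j_i = 0$, is handled consistently by both the subset-sum semantics and the provability semantics (both should answer "yes" for $z = 0$, via the empty combination and via $\Ref$ respectively).
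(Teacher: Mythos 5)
Your reduction is exactly the one the paper uses ($\Sigma = \{\{y^0\}\Rightarrow\{y^{j_i}\} \| i=1,\ldots,n\}$ with query $\{y^0\}\Rightarrow\{y^z\}$), and your correctness argument is sound; the only difference is that the paper verifies the ``provable implies yes-instance'' direction by analyzing the $\Acc$-part of a normalized derivation sequence (Theorem~\ref{th:norm}), whereas you induct directly on the stages of the syntactic closure iteration~\eqref{eqn:A_n+1} via Lemma~\ref{le:synclos} and Theorem~\ref{th:clos} --- which is the same induction in different clothing, since Theorem~\ref{th:norm} is itself proved from that iteration. Your anticipated obstacle is correctly dispatched by your own observation that with antecedent $\{p^0\}$ the only admissible shifts $k$ are indices already present in the closure, and the edge cases $z=0$ and $j_i=0$ indeed cause no trouble.
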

\begin{proof}
  We prove the claim by showing that the unbounded subset sum problem
  (see Section~\ref{sec:prelim}) is polynomial time reducible to
  $L_\mathrm{ENT}$. Consider an instance of the unbounded subset sum problem
  given by non-negative integers $j_1,\dots,j_n$ and $z$. For the integers
  we consider
  \begin{align}
    \Sigma = \bigl\{\{y^0\}\Rightarrow\{y^{j_i}\} \,|\, i=1,\ldots,n\bigr\}
  \end{align}
  and put $A = \{y^0\}$, $B = \{y^{z}\}$. We now prove that 
  $\sum_{i=1}^{n} c_i j_i = z$ holds true for some non-negative integers
  $c_1,\ldots,c_n$ if{}f
  $\Sigma \proves \{y^0\} \Rightarrow \{y^z\}$ by proving both implications.
  
  In order to prove the if-part, assume that
  $\Sigma \proves \{y^0\} \Rightarrow \{y^z\}$.
  Using Theorem~\ref{th:norm}, it follows there is a normalized derivation
  sequence $\varphi_1,\dots,\varphi_k$ of
  $\{y^0\} \Rightarrow \{y^z\}$ using formulas in $\Sigma$.
  In the proof, we utilize a part of the sequence which results by
  applications of $\Acc$, see Definition~\ref{def:normseq}\,\itm{4}.
  All formulas in this part of the sequence can be written as
  \begin{align*}
    \underbrace{\{y^0\} \Rightarrow A_i}_{\varphi_i},
    \underbrace{\{y^0\} \Rightarrow A_{i+1}}_{\varphi_{i+1}}, \dots,
    \underbrace{\{y^0\} \Rightarrow A_{k-1}}_{\varphi_{k-1}},
  \end{align*}
  where $A_i,\dots,A_{k-1}$ are finite subsets of $\alltatr$,
  $A_i=\{y^0\}$, and $y^z \in A_{k-1}$ because
  $\varphi_k$ results from $\varphi_{k-1}$ by $\Pro$,
  cf. Definition~\ref{def:normseq}. By induction, we show for
  every $A_l$ ($i \leq l \leq k-1$) that the following property is
  satisfied:
  \begin{quote}
    If $y^w \in A_l$, then there are non-negative integers
    $c_1,\ldots,c_n$ \\ such that $w = \sum_{i=1}^{n} c_i j_i$.
  \end{quote}
  Notice the property is satisfied for $l = i$ since in that case
  we have $A_l = A_i = \{y^0\}$ and thus, we may put 
  $c_1 = c_2 = \cdots = c_n = 0$. Assuming the claim holds for
  $l$, we prove it for $l+1$ as follows. Inspecting 
  Definition~\ref{def:normseq}\,\itm{4}, it follows that
  $\{y^0\} \Rightarrow A_{l+1}$ results from $\{y^0\} \Rightarrow A_l$ and
  $\{y^0\}+t \Rightarrow \{y^{j_m}\}+t$ using $\Acc$ where $t \in \Z$ and 
  $1 \leq m \leq n$. As a consequence $\{y^0\}+t\subseteq A_l$ and thus,
  by induction hypothesis, there are non-negative integers $d_1,\ldots,d_n$
  such that $t = 0+t = \sum_{i=1}^{n} d_i j_i$. Then,
  $j_m + t = j_m + \sum_{i=1}^{n} d_i j_i$ and so
  $j_m + t = \sum_{i=1}^{n} c_i j_i$ for 
  non-negative integers $c_1,\ldots,c_n$ defined by
  \begin{align*}
    c_i &=
          \left\{
          \begin{array}{@{\,}l@{\quad}l@{}}
            d_i + 1, &\text{if } i = m, \\
            d_i, &\text{otherwise.}
          \end{array}
                   \right.
  \end{align*}
  Now, since we have $A_{l+1} \subseteq A_l \cup\{y^{j_m+t}\}$, the
  property holds for $A_{l+1}$. As a particular case, for
  $\{y^z\}\subseteq A_{k-1}$ we conclude there are non-negative
  integers $c_1,\ldots,c_n$ for which $\sum_{i=1}^{n} c_i j_i = z$
  which concludes the first part of the proof of Theorem~\ref{th:ENT}.
  
  Conversely, let $\sum_{i=1}^{n} c_i j_i = z$
  for some non-negative integers $c_1,\ldots,c_n$. By induction,
  we show that $\Sigma \proves \{y^0\} \Rightarrow \{y^{z_k}\}$ for
  every $z_k = \sum_{i=1}^{k} c_i j_i$ where $k=0,\ldots,n$.
  As a particular case for $k = n$, we obtain the desired fact that
  $\Sigma \proves \{y^0\} \Rightarrow \{y^z\}$ because $z_n = z$.
  
  Observe that for $k=0$, the claim follows trivially by $\Ax$. Now,
  suppose the claim holds for $k < n$. By induction hypothesis,
  $\Sigma \proves \{y^0\} \Rightarrow \{y^{z_k}\}$. Moreover,
  we have $\Sigma \proves \{y^0\} \Rightarrow \{y^{j_{k+1}}\}$
  because $\{y^0\} \Rightarrow \{y^{j_{k+1}}\} \in \Sigma$.
  Using $\Shf$, we also get
  $\Sigma \proves \{y^0\} + j_{k+1} \Rightarrow \{y^{j_{k+1}}\} + j_{k+1}$,
  i.e., using $\Cut$, it follows that 
  $\Sigma \proves \{y^0\} \Rightarrow \{y^{2j_{k+1}}\}$.
  Repeating the last argument $c_{k+1}$-times, we obtain
  $\Sigma \proves \{y^0\} \Rightarrow \{y^{c_{k+1}j_{k+1}}\}$.
  Now, using \Shf, we get
  $\Sigma \proves \{y^0\} + z_k \Rightarrow \{y^{c_{k+1}j_{k+1}}\} + z_k$, i.e.,
  $\Sigma \proves \{y^{z_k}\} \Rightarrow \{y^{c_{k+1}j_{k+1} + z_k}\}$.
  Hence, $\Sigma \proves \{y^0\} \Rightarrow \{y^{z_{k+1}}\}$
  follows by $\Cut$ using the fact that $z_{k+1} = z_k + c_{k+1}j_{k+1}$,
  which finishes the proof.
\end{proof}

The reduction involved in Theorem~\ref{th:ENT} is
illustrated in the following example.

\begin{example}
  Let us show a particular instance of the unbounded subset sum problem and
  its reduction to $L_\mathrm{ENT}$. We consider integers $5$, $7$, $11$,
  and a target number $31$ as an instance of the problem. The answer to this 
  instance is ``yes'' because for numbers $4$, $0$, and $1$,
  the sum $4 \cdot 5 + 0 \cdot 7 + 1 \cdot 11$ is equal to $31$. 
  The corresponding theory $\Sigma$, see the proof of Theorem~\ref{th:ENT},
  is
  \begin{align*}
    \Sigma = \{
    \{y^0\}\Rightarrow\{y^5\},
    \{y^0\}\Rightarrow\{y^7\},
    \{y^0\}\Rightarrow\{y^{11}\}\}.
  \end{align*}
  In this case, $\{y^0\}\Rightarrow\{y^{31}\}$ is provable from $\Sigma$
  because we may chain four shifted instances of $\{y^0\} \Rightarrow \{y^5\}$
  and a single shifted instance of $\{y^0\} \Rightarrow \{y^{11}\}$ by
  using $\Cut$. It corresponds with the sum
  $4 \cdot 5 + 0 \cdot 7 + 1 \cdot 11$.
  In a more detail, the corresponding proof
  of $\{y^0\}\Rightarrow\{y^{31}\}$ by $\Sigma$
  is the following sequence of formulas:
  \bgroup%
  \def\Proof#1#2{\item\makebox[11em][l]{\ensuremath{#1}}#2}
  \begin{enumerate}\parskip=0pt
    \Proof{\{y^0\}\Rightarrow\{y^{5}\}}{formula in $\Sigma$}
    \Proof{\{y^0\}+5\Rightarrow\{y^{5}\}+5}{using $\Shf$ on 1.}
    \Proof{\{y^0\}\Rightarrow\{y^{10}\}}{using $\Cut$ on 1. and 2.}
    \Proof{\{y^0\}+10\Rightarrow\{y^{5}\}+10}{using $\Shf$ on 1.}
    \Proof{\{y^0\}\Rightarrow\{y^{15}\}}{using $\Cut$ on 3. and 4.}
    \Proof{\{y^0\}+15\Rightarrow\{y^{5}\}+15}{using $\Shf$ on 1.}
    \Proof{\{y^0\}\Rightarrow\{y^{20}\}}{using $\Cut$ on 5. and 6.}
    \Proof{\{y^0\}\Rightarrow\{y^{11}\}}{formula in $\Sigma$}
    \Proof{\{y^0\}+20\Rightarrow\{y^{11}\}+20}{using $\Shf$ on 8.}
    \Proof{\{y^0\}\Rightarrow\{y^{31}\}}{using $\Cut$ on 7. and 9.}
  \end{enumerate}%
  \egroup%
\end{example}

\begin{remark}
  The entailment problem is closely related to the existence of
  non-negative solutions of linear Diophantine equations.
  Indeed, for a theory $\Sigma$ which consists of formulas of the form
  $\{y^0\} \Rightarrow \{y^{j_i}\}$ for $i=1,\ldots,n$,
  by inspecting the proof of Theorem~\ref{th:ENT}, we can see that
  $\Sigma \vdash \{y^0\} \Rightarrow \{y^z\}$ if{}f the
  linear Diophantine equation $j_1x_1 + \cdots + j_nx_n = z$ has
  a non-negative solution.
\end{remark}

Our observations on the upper bound of computational complexity involve
additional classes of decision problems. In order to establish an upper bound,
we utilize the fact that the satisfiability problem of temporal logic
with ``until'' and ``since'' operators over a linear flow of time is decidable
in polynomial space~\cite{Re:tcodpfltl}. For the purpose of our proof,
we use the linear temporal logic over $\langle\Z,<\rangle$ with the unary temporal
operators $\boxast$ (always), $\circ_F$ (next time), and 
$\circ_P$ (previous time) because these operators are definable
using operators ``until'' and ``since'', see~\cite{ArKoRyZa:ccfltl} for details.

From now on, we consider $Y$ (the set of attributes) as (a subset of) the set
of propositional variables. Recall that formulas of the temporal logic with
the above-mentioned operators are defined as follows: Each $y \in Y$ is a formula;
if $\varphi$ and $\psi$ are formulas, then
$\neg \varphi$, $\varphi \logand \psi$, $\varphi \Rightarrow \psi$,
$\mathop{\boxast} \varphi$, $\mathop{\circ_F} \varphi$,
and $\mathop{\circ_P} \varphi$ are formulas.
In order to interpret the formulas we consider
a standard structure $\mathbf{K} = \langle W,e,r\rangle$ where
$W = \mathbb{Z}$, $r$ is the genuine ordering $<$ on $\Z$, and $e$ is
an evaluation such that $e(w,y)\in\{0,1\}$ for all $w \in \Z$ and $y \in Y$.
Given $\mathbf{K}$ and $w \in \Z$, we interpret the formulas as usual:
We put
\begin{enumerate}\parskip=0pt%
\item[\itm{1}]
  $\mathbf{K},w \models y$ whenever $e(w,y) = 1$;
\item[\itm{2}]
  $\mathbf{K},w \models \neg\varphi$ whenever
  $\mathbf{K},w \nmodels \varphi$;
\item[\itm{3}]
  $\mathbf{K},w \models \varphi \logand \psi$ whenever
  $\mathbf{K},w \models \varphi$ and $\mathbf{K},w \models \psi$;
\item[\itm{4}]
  $\mathbf{K},w \models \varphi \Rightarrow \psi$ whenever
  $\mathbf{K},w \nmodels \varphi$ or $\mathbf{K},w \models \psi$;
\item[\itm{5}]
  $\mathbf{K},w \models \mathop{\boxast} \varphi$ whenever
  $\mathbf{K},w' \models \varphi$ for all $w' \in \Z$;
\item[\itm{6}]
  $\mathbf{K},w \models \mathop{\circ_F} \varphi$ whenever
  $\mathbf{K},w' \models \varphi$ for $w' \in \Z$
  such that $w < w'$ and there does not exist $z \in \Z$ 
  such that $w < z < w'$;
\item[\itm{7}]
  $\mathbf{K},w \models \mathop{\circ_P} \varphi$ whenever
  $\mathbf{K},w' \models \varphi$ for $w' \in \Z$
  such that $w' < w$ and there does not exist $z \in \Z$ 
  such that $w' < z < w$.
\end{enumerate}
We say that $\varphi$ is \emph{true in $\mathbf{K}$} whenever
$\mathbf{K},w \models \varphi$ for all $w \in \Z$.
Moreover, we say that $\varphi$ is \emph{satisfiable} whenever there is a structure
$\mathbf{K}$ such that $\mathbf{K},0\models \varphi$. Moreover for each
formula of the form \eqref{eqn:cimpl}, we consider its counterpart
in the considered temporal logic 
\begin{align}
  \boxast\bigl(
  \bigl(\mdl{i_1}y_1 \logand \cdots \logand \mdl{i_m}y_m\bigr)
  \Rightarrow
  \bigl(\mdl{j_1}z_1 \logand \cdots \logand \mdl{j_n}z_n\bigr)
  \bigr),
  \label{eqn:mimpl}
\end{align}
where $\mdl{i}$ is defined as follows:
\begin{align}
  \mdl{i}y 
  &=
  \left\{
    \begin{array}{@{\,}l@{\quad}l@{}}
      y, & \text{if } i = 0, \\
      \circ_F\mdl{i-1}y, & \text{if } i > 0, \\
      \circ_P\mdl{i+1}y, & \text{if } i < 0. \\
    \end{array}
  \right.
\end{align}
Note that the construction of $\mdl{i}y$ from $y^i$ requires space which
is linear in (the absolute value of) $i \in \Z$, i.e., it is exponential
in the length of the encoding of~$i$.

\begin{theorem}\label{th:expred}
  $L_\mathrm{ENT}$ is reducible in exponential space to the satisfiability
  problem of the linear temporal logic over $\langle\Z,<\rangle$
  with unary temporal operators ``always'', ``next time'',
  and ``previous time''.
\end{theorem}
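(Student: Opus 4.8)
The plan is to derive this from the completeness theorem (Theorem~\ref{th:scmpl}) together with a direct encoding of subsets of $\alltatr$ as temporal structures over $\langle\Z,<\rangle$. First I would use Theorem~\ref{th:scmpl} to replace membership ``$\Sigma \proves A \Rightarrow B$'' in $L_\mathrm{ENT}$ by the semantic condition $\Sigma \models A \Rightarrow B$, so that the task becomes deciding whether \emph{some} model of $\Sigma$ fails $A \Rightarrow B$. To each $M \subseteq \alltatr$ I associate the structure $\mathbf{K}_M = \langle \Z, e_M, <\rangle$ with $e_M(w,y)=1$ if{}f $y^w \in M$; this is a bijection between subsets of $\alltatr$ and structures of the kind considered in the theorem.

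The technical heart is an easy induction on $|i|$: for all $M$, $w \in \Z$, $y \in Y$, $i \in \Z$ one has $\mathbf{K}_M, w \models \mdl{i}y$ if{}f $y^{w+i} \in M$, the inductive step merely peeling off one $\circ_F$ or one $\circ_P$ and using that in $\langle\Z,<\rangle$ the immediate successor and predecessor of $w$ are $w+1$ and $w-1$. Consequently $\mathbf{K}_M, w$ satisfies the antecedent conjunction of~\eqref{eqn:mimpl} built from $A \Rightarrow B$ if{}f $A+w \subseteq M$, and likewise for the consequent. Writing $\tau(A \Rightarrow B)$ for the formula~\eqref{eqn:mimpl} (with the obvious convention that an empty conjunction is $\top$), and noting that a $\boxast$-prefixed formula holds at one world if{}f it holds at all worlds, I conclude: $\mathbf{K}_M, 0 \models \tau(A \Rightarrow B)$ if{}f $M \models A \Rightarrow B$ in the sense of Definition~\ref{def:sem}; and more generally $\mathbf{K}_M,0 \models \bigwedge_{\varphi\in\Sigma}\tau(\varphi)$ if{}f $M \in \mathrm{Mod}(\Sigma)$.

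For the reduction I then set $\Psi = \bigl(\bigwedge_{\varphi \in \Sigma}\tau(\varphi)\bigr) \logand \neg\tau(A \Rightarrow B)$. By the equivalences above, there is a structure $\mathbf{K}$ with $\mathbf{K},0 \models \Psi$ if{}f there is $M$ with $M \in \mathrm{Mod}(\Sigma)$ and $M \nmodels A \Rightarrow B$, that is, if{}f $\Sigma \nmodels A \Rightarrow B$; here $\neg\tau(A\Rightarrow B)$, being $\neg\boxast(\cdots)$, already supplies the existential quantifier over the time point witnessing the counterexample in Definition~\ref{def:sem}, so no separate shifting step is needed. Combining with Theorem~\ref{th:scmpl} gives $\langle\Sigma,A\Rightarrow B\rangle \in L_\mathrm{ENT}$ if{}f $\Psi$ is not satisfiable, which is the desired exponential-space reduction to the satisfiability problem. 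Finally I would bound the resources: $\tau$ is built componentwise, and the only blow-up is that forming $\mdl{i}y$ from $y^i$ takes space linear in $|i|$ and hence exponential in the length of the binary encoding of $i$; thus $\Psi$ is produced in exponential space and has size at most exponential in $|\langle\Sigma,A\Rightarrow B\rangle|$.

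I expect the difficulties to be bookkeeping rather than conceptual: carrying out the induction on $|i|$ cleanly for both signs of $i$, handling the edge cases of empty antecedents or consequents which~\eqref{eqn:mimpl} as displayed does not cover, and being careful about the single place where ``true at $0$'' and ``true at all worlds'' must be distinguished --- namely $\neg\tau(A\Rightarrow B)$, which is precisely where the existential-over-time-points clause of the semantics gets encoded.
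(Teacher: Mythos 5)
Your proposal is correct and follows essentially the same route as the paper's proof: the correspondence $M \leftrightarrow \mathbf{K}_M$, the translation via~\eqref{eqn:mimpl}, the formula $\bigwedge_{\varphi\in\Sigma}\tau(\varphi) \logand \neg\tau(A\Rightarrow B)$ being satisfiable if{}f $\Sigma \nmodels A \Rightarrow B$, and the exponential space bound coming from the unary construction of $\mdl{i}y$. You merely spell out details the paper leaves implicit (the induction on $|i|$, the explicit appeal to Theorem~\ref{th:scmpl} to pass from $\proves$ to $\models$, and the empty-conjunction edge cases).
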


\begin{proof}
  First, observe that for each subset of $\alltatr$ we may consider
  a corresponding structure which makes the same formulas
  true---any $A \Rightarrow B$ is true in the subset of $\alltatr$
  if{}f its counterpart given by~\eqref{eqn:mimpl} is true in 
  the corresponding structure.
  Namely, for $M \subseteq \alltatr$, we may consider
  $\mathbf{K}_{M} = \langle W,e,r\rangle$, where $e(w,y) = 1$
  if $y^w \in M$ and $e(w,y) = 0$ otherwise. Conversely,
  for $\mathbf{K} = \langle W,e,r\rangle$,
  we put $M_\mathbf{K} = \{y^w \,|\, e(w,y) = 1\}$.
  Now, for any $w\in W$, it is easy to see that $M \models A \Rightarrow B$ 
  if{}f	$\mathbf{K}_M,w \models \varphi$ where $\varphi$ is the
  counterpart to $A \Rightarrow B$ given by~\eqref{eqn:mimpl}.
  From now on, we tacitly identify attribute implications with their
  counterparts. Furthermore, we have
  $\mathbf{K},w \models A \Rightarrow B$ if{}f
  $M_\mathbf{K} \models A \Rightarrow B$.

  Now, for a given $\Sigma = \{A_1\Rightarrow B_1,\dots,A_m\Rightarrow B_m\}$
  and $A\Rightarrow B$ we may consider formula
  $A_1\Rightarrow B_1 \logand \cdots \logand A_m\Rightarrow B_m 
  \logand \neg \bigl( A\Rightarrow B\bigr)$ whose construction requires
  exponential space. From the previous observation, it is obvious that
  the formula is satisfiable if{}f $\Sigma \nmodels A \Rightarrow B$.
\end{proof}

\begin{corollary}[upper bound]\label{col:expspace}
  $L_\mathrm{ENT}$ belongs to EXPSPACE.
\end{corollary}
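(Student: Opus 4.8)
The plan is to obtain the bound as an immediate consequence of Theorem~\ref{th:expred} together with the known complexity of the target logic. First I would recall that by completeness (Theorem~\ref{th:scmpl}) we have $\langle\Sigma,A\Rightarrow B\rangle\in L_\mathrm{ENT}$ iff $\Sigma\models A\Rightarrow B$, and that the construction in the proof of Theorem~\ref{th:expred} transforms the input $\langle\Sigma,A\Rightarrow B\rangle$, where $\Sigma=\{A_1\Rightarrow B_1,\dots,A_m\Rightarrow B_m\}$, into the temporal formula $A_1\Rightarrow B_1\logand\cdots\logand A_m\Rightarrow B_m\logand\neg(A\Rightarrow B)$, which is satisfiable precisely when $\Sigma\nmodels A\Rightarrow B$. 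Hence a decision procedure for $L_\mathrm{ENT}$ can proceed as follows: build this temporal formula and run a satisfiability test for the linear temporal logic over $\langle\Z,<\rangle$ with the operators ``always'', ``next time'', and ``previous time'', answering ``yes'' exactly when the test reports ``unsatisfiable''.

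Next I would invoke the fact that this satisfiability problem is decidable in space polynomial in the length of the input formula: it is a fragment of the temporal logic with ``until'' and ``since'' over a linear flow of time, whose satisfiability is in polynomial space by~\cite{Re:tcodpfltl}, the unary operators used here being definable from ``until''/``since'' as noted via~\cite{ArKoRyZa:ccfltl}. The complexity bookkeeping is then: on an input of size $N$, the reduction of Theorem~\ref{th:expred} produces, in space $2^{O(N)}$, a formula of length $2^{O(N)}$ (the exponential blow-up coming solely from expanding each annotated attribute $y^i$ into $\mdl{i}y$, which uses space linear in $|i|$ and thus exponential in the binary encoding of $i$); feeding a formula of this length to the polynomial-space satisfiability procedure consumes space polynomial in $2^{O(N)}$, i.e. again $2^{O(N)}$. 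Therefore the whole procedure runs in exponential space, and $L_\mathrm{ENT}\in\text{EXPSPACE}$; for good measure one may also note that EXPSPACE, being a deterministic space class, is closed under complementation, so it is irrelevant whether one phrases the reduction as targeting satisfiability or unsatisfiability.

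I do not expect a genuine obstacle here, since the substantive work is already contained in Theorem~\ref{th:expred}. The only point that needs care is the size accounting: one must check that the exponential blow-up introduced by the unary encoding in $\mdl{i}y$ is the \emph{sole} source of blow-up, so that the temporal formula remains of length at most a single exponential in $N$, and that invoking a polynomial-space subroutine on an input of that length does not compound the exponential (which would give a $2$-EXPSPACE bound instead). Once this is verified, the corollary follows directly.
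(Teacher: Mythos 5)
Your proposal is correct and follows essentially the same route as the paper: apply the reduction of Theorem~\ref{th:expred} to produce an exponential-size temporal formula, pass through the translation of the unary operators into ``until''/``since'' via~\cite{ArKoRyZa:ccfltl}, and invoke the polynomial-space satisfiability procedure of~\cite{Re:tcodpfltl}, with the composed space bound being a single exponential. Your added remarks on closure of EXPSPACE under complementation and on verifying that the unary encoding of time points is the only source of blow-up are sensible points of care that the paper leaves implicit.
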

\begin{proof}
  The decision procedure reduces the input of $L_\mathrm{ENT}$ to 
  the satisfiability problem of linear temporal logic over $\langle\Z,<\rangle$
  with unary temporal operators ``always'', ``next time'', and ``previous time''
  in exponential space, see Theorem~\ref{th:expred}. Then, the input is reduced
  to the satisfiability problem of the linear temporal logic over
  $\langle\Z,<\rangle$ with binary temporal operators ``until'' and ``since''
  in linear space~\cite{ArKoRyZa:ccfltl} which we can decide in polynomial
  space~\cite{Re:tcodpfltl}. Altogether, the decision procedure
  decides $L_\mathrm{ENT}$ in exponential space.
\end{proof}

\begin{remark}
  Note that the results of Theorem~\ref{th:expred} and
  Corollary~\ref{col:expspace} can also be interpreted so that 
  $L_\mathrm{ENT}$ is decidable in a \emph{pseudo-polynomial space}
  because we reduce an instance of $L_\mathrm{ENT}$ to an instance
  (of the satisfiability problem of the above-mentioned temporal logic)
  the length of which is bounded from above by the numeric value encoded
  in the original input. With respect to the new instance, the decision
  procedure works in polynomial space.
\end{remark}

We now turn our attention to issues of entailment of formulas which typically
appear in applications in prediction. The restriction on particular
formulas allows us to improve the complexity of the entailment problem.
Based on the time points present in antecedents and consequents of
attribute implications, we may consider formulas
which describe presence of attributes in future time points. That is, based on
the presence of attributes in the past, the formulas indicate which attributes
are present in future time points. Technically, such formulas can be seen
as attribute implications where all the time points in the antecedents are
smaller (i.e., denote earlier time points) than all the time points in the
consequents which denote later time points. We call such formulas predictive
and define the notion as follows.

\begin{definition}\label{def:reg}
  An attribute implication $A \Rightarrow B$ over $Y$ annotated by time points
  in $\Z$ is called \emph{predictive} whenever $A \ne \emptyset$,
  $B \ne \emptyset$, and for each $x^i \in A$ and $y^j \in B$,
  we have $i \leq j$. A theory $\Sigma$ is called predictive
  whenever all its formulas are predictive.
\end{definition}

\begin{remark}
  Note that the deduction rules (Shf) and (Cut) preserve the property of
  being predictive. That is, if $A \Rightarrow B$ is provable by
  a predictive theory $\Sigma$ without using (Ax),
  then $A \Rightarrow B$ is predictive. General instances of (Ax)
  are not predictive formulas.
\end{remark}

In the next assertion, we utilize lower and upper time bounds of finite
non-empty subsets of $\alltatr$:
For a finite non-empty $M \subseteq \alltatr$, put
\begin{align}
  l(M) &= \min\{i \in \Z \,|\, y^i \in M\}, \\
  u(M) &= \max\{i \in \Z \,|\, y^i \in M\}.
\end{align}
Thus, $l(M)$ and $u(M)$ are the lowest and greatest time points
which appear in~$M$, respectively. Clearly, $A \Rightarrow B$ is 
\emph{predictive} if{}f both $A$ and $B$ are non-empty and $u(A) \leq l(B)$.

\begin{theorem}\label{th:predbound}
  Let $\Sigma$ and $A \Rightarrow B$ be predictive. Then, for
  \begin{align}\label{eqn:SigmaAB}
    \Sigma_{\!A}^B 
    &=
      \{E+i \Rightarrow F+i \,|\, 
      E \Rightarrow F \in \Sigma \text{ and }
      l(A)-l(E) \leq i \leq u(B) - l(F)\}
  \end{align}
  we have $\Sigma \vdash A \Rightarrow B$ if{}f
  $\Sigma_{\!A}^B \vdash_\mathcal{R} A \Rightarrow B$ for
  $\mathcal{R}$ containing $\Ax$ and $\Cut$.
\end{theorem}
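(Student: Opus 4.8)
The ``if'' direction is immediate: since $\Sigma_{\!A}^B \subseteq \Sigma^\PL$ and every formula in $\Sigma^\PL$ is provable from $\Sigma$ by $\Shf$, a derivation of $A \Rightarrow B$ from $\Sigma_{\!A}^B$ using $\Ax$ and $\Cut$ can be converted to a $\Sigma$-proof. The substance is the ``only if'' direction, and the plan is to start from a normalized derivation sequence of $A \Rightarrow B$ from $\Sigma$, as furnished by Theorem~\ref{th:norm}, and argue that every shifted instance $E+i \Rightarrow F+i$ of a formula of $\Sigma$ that is actually \emph{needed} in the $\Acc$-part of such a sequence already lies in $\Sigma_{\!A}^B$, so that the same sequence is in fact a derivation from $\Sigma_{\!A}^B$ using only $\Ax$ and $\Cut$ (via $\Ref$, $\Shf$ replaced by inclusion in $\Sigma_{\!A}^B$, $\Acc$, $\Pro$).

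First I would unwind Definition~\ref{def:normseq}: after the initial formulas from $\Sigma$ and their $\Shf$-instances, one has $A \Rightarrow A$ and then a chain $A \Rightarrow A_0, A \Rightarrow A_1, \ldots, A \Rightarrow A_n$ where $A_0 = A$, each $A_{l+1} = A_l \cup (F+i)$ for some $E \Rightarrow F \in \Sigma$ with $E+i \subseteq A_l$, and $B \subseteq A_n$; then $\Pro$ gives $A \Rightarrow B$. Without loss of generality I may assume the chain is \emph{minimal} in the sense that each step contributes something not already present and the terminal $A_n$ is no larger than needed to cover $B$; concretely one can prune the chain so that $A_n = A \cup (F_1 + i_1) \cup \cdots \cup (F_n + i_n)$ with $B \subseteq A_n$ and dropping any step destroys the inclusion $B \subseteq A_n$. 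The key claim is then: in such a pruned chain, every used instance $E_l + i_l \Rightarrow F_l + i_l$ satisfies $l(A) - l(E_l) \leq i_l \leq u(B) - l(F_l)$, i.e.\ lies in $\Sigma_{\!A}^B$.

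For the \emph{lower} bound $l(A) - l(E_l) \le i_l$: since $E_l + i_l \subseteq A_l \subseteq A_n$ and, crucially, every element of $A_n$ has time point at least $l(A)$ (because $A$ is predictive and each added $F_j + i_j$ sits at time points $\ge l(F_j + i_j) = l(F_j) + i_j \ge l(E_j) + i_j = l(E_j + i_j) \ge l(A)$, using $u(E_j) \le l(F_j)$ from predictivity of $\Sigma$ and $E_j + i_j \subseteq A_j$, by induction on $j$), we get $l(E_l + i_l) = l(E_l) + i_l \ge l(A)$, which rearranges to $i_l \ge l(A) - l(E_l)$. Here the induction establishing ``$l(A_j) = l(A)$ for all $j$'' is the engine; predictivity of $\Sigma$ (giving $u(E) \le l(F)$ for each formula) is exactly what makes it go through. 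For the \emph{upper} bound $i_l \le u(B) - l(F_l)$: by minimality of the chain, each $F_l + i_l$ must contribute an element that ultimately survives into $B \subseteq A_n$, or more carefully, one shows by a downward/backward argument that $l(F_l + i_l) = l(F_l) + i_l \le u(A_n) \le u(B)$ — the point being that the maximal time point ever introduced cannot exceed $u(B)$ in a pruned derivation, again because predictivity forces each formula to only ``push forward'' and a step introducing attributes strictly beyond $u(B)$ would be removable without affecting $B \subseteq A_n$. This rearranges to $i_l \le u(B) - l(F_l)$.

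\textbf{Main obstacle.} The genuinely delicate part is the upper-bound argument and the accompanying pruning: it is not literally true of an arbitrary normalized derivation that all intermediate time points lie below $u(B)$, so I must first replace the given sequence by a \emph{minimal} one and then show minimality forces the bound. Formalizing ``minimal'' correctly — so that both (i) removing any $\Acc$-step breaks $B \subseteq A_n$ and (ii) one can still reconstruct a valid sequence from $\Sigma_{\!A}^B$ — and then proving that in such a minimal sequence no introduced attribute has time point exceeding $u(B)$ (nor below $l(A)$, though that direction is the easier induction above), is where the real work lies. A convenient way to organize the upper bound is to run the argument ``in reverse'': note $B \subseteq A_n$, so $u(A_n) \ge u(B)$ is false in a pruned chain and in fact $u(A_n) \le u(B)$ because any element of $A_n \setminus B$ at time $> u(B)$ could be traced back through the predictive formulas to a removable step; making this trace precise (essentially a support/relevance analysis of which steps feed into $B$) is the crux. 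Once the claim is in hand, the conclusion is mechanical: the initial $\Sigma$-formulas together with their used $\Shf$-instances all belong to $\Sigma_{\!A}^B$, so the remainder of the normalized sequence (which uses only $\Ref$, $\Acc$, $\Pro$, all derivable from $\Ax$ and $\Cut$) witnesses $\Sigma_{\!A}^B \vdash_\mathcal{R} A \Rightarrow B$.
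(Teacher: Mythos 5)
Your overall plan is workable, and both the ``if'' direction and the lower bound $i \geq l(A)-l(E)$ match the paper: the invariant that nothing with time point below $l(A)$ is ever introduced (in the paper's notation, $l(A_\Sigma^n)=l(A)$), which follows from predictivity of $\Sigma$, is exactly the paper's argument. Where you genuinely diverge is the upper bound. The paper does not normalize and prune a derivation at all; it works with the syntactic closure stages $A_\Sigma^n$ and with $A^\circ$, the closure of $A$ under $\Sigma_{\!A}^B$ viewed as ordinary implications, and proves $A_\Sigma^{n}\cap T\subseteq A^\circ$ by induction on $n$ for the window $T=\{y^j\,|\,l(A)\leq j\leq u(B)\}$. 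The key observation is that if $y^j\in T$ is produced at stage $n+1$ by an instance $E+i\Rightarrow F+i$, then $l(F)+i\leq j\leq u(B)$ holds \emph{automatically}, so the instance is in $\Sigma_{\!A}^B$, and predictivity places $E+i$ back inside $T$ so the induction closes. In other words, the backward relevance analysis you identify as the crux is dissolved by simply never looking at attributes outside $T$; since $B\subseteq T$, one gets $B\subseteq A^\circ$ with no minimality lemma to formulate.

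There is also a concrete inaccuracy in your upper-bound route. The claim that a pruned chain satisfies $u(A_n)\leq u(B)$ is false if pruning removes only whole $\Acc$-steps: an indispensable step using $E+i\Rightarrow F+i$ with $l(F+i)\leq u(B)<u(F+i)$ (a ``straddling'' consequent) deposits attributes above $u(B)$ into $A_n$ as a by-product, so $u(A_n)>u(B)$ in general; note also that $B\subseteq A_n$ already forces $u(A_n)\geq u(B)$, so the sentence ``$u(A_n)\geq u(B)$ is false in a pruned chain'' cannot be what you intend. What you actually need, and what your backward induction does deliver if run at the level of individual \emph{attributes} rather than whole steps, is only the weaker statement that every relevant step satisfies $l(F_l)+i_l\leq u(B)$: a relevant step contributes some attribute either to $B$ or to the antecedent $E_m+i_m$ of a later relevant step, and in the latter case predictivity together with the inductive bound $l(F_m)+i_m\leq u(B)$ gives $u(E_m+i_m)\leq l(F_m+i_m)\leq u(B)$. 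With that correction your argument goes through, but the paper's fixed-point formulation reaches the same conclusion with substantially less bookkeeping.
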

\begin{proof}
  Observe that the if-part of the claim is trivial. In order to prove the
  only-if part, assume that $\Sigma \vdash A \Rightarrow B$. That is,
  $B \subseteq [A]_\Sigma$ owing to Theorem~\ref{le:synclos} and
  Theorem~\ref{th:semclos}.
  Note that $\Sigma_{\!A}^B \vdash_\mathcal{R} A \Rightarrow B$ for $\mathcal{R}$
  containing $\Ax$ and $\Cut$ means that $A \Rightarrow B$ is provable by
  $\Sigma_{\!A}^B$ as an ordinary attribute implication.
  Let $A^\circ$ denote the least subset of
  $\alltatr$ with the following properties:
  \begin{enumerate}\parskip=0pt%
  \item[\itm{1}]
    $A \subseteq A^\circ$, and
  \item[\itm{2}]
    for each $E \Rightarrow F \in \Sigma_{\!A}^B$:
    if $E \subseteq A^\circ$ then $F \subseteq A^\circ$.
  \end{enumerate}
  Since $A^\circ$ is in fact the syntactic closure of $A$ with respect
  to $\mathcal{R}$, $\Sigma_{\!A}^B \vdash_\mathcal{R} A \Rightarrow B$ if{}f
  $B \subseteq A^\circ$. That is, in order to prove the desired claim,
  it suffices to show that $A^\circ \cap T = [A]_\Sigma \cap T$ for
  \begin{align*}
    T = \{y^i \in \alltatr \,|\, l(A) \leq i \leq u(B)\}.
  \end{align*}
  Trivially, we get that
  $A^\circ \cap T \subseteq [A]_\Sigma \cap T$. In order to prove the
  converse inclusion, according to Theorem~\ref{th:clos}, it suffices to
  check that $A_\Sigma^n \cap T \subseteq A^\circ \cap T$ for each
  non-negative integer $n$. By induction, assume that
  $A_\Sigma^n \cap T \subseteq A^\circ \cap T$ and take
  $y^j \in (A_\Sigma^{n+1} \cap T) \setminus (A_\Sigma^{n} \cap T) =
  (A_\Sigma^{n+1} \setminus A_\Sigma^{n}) \cap T$.
  The fact $y^j \in A_\Sigma^{n+1} \setminus A_\Sigma^{n}$ yields there
  is $E \Rightarrow F \in \Sigma$ and $i \in \Z$ such that
  $E+i \subseteq A_\Sigma^{n}$ and $y^j \in F+i$. It can be shown
  that $E+i \Rightarrow F+i \in \Sigma_{\!A}^B$.
  Indeed, since $\Sigma$ is predictive,
  observe that $l(E)+i = l(E+i) \geq l(A_\Sigma^n) = l(A)$
  and thus $i \geq l(A) - l(E)$.
  Moreover, $y^j \in F+i$ yields $l(F+i) = l(F) + i \leq j$ and thus
  $i \leq j - l(F)$ which gives $i \leq u(B) - l(F)$ on
  account of $j \leq u(B)$ since $y^j \in T$. As a consequence,
  $E+i \Rightarrow F+i \in \Sigma_{\!A}^B$. Furthermore,
  $E+i \subseteq A_\Sigma^{n}$ and the fact that $E \Rightarrow F$
  is predictive give
  $E+i = (E+i) \cap T \subseteq A_\Sigma^{n} \cap T$.
  By induction hypothesis, $E + i \subseteq A^\circ$ and thus
  $F + i \subseteq A^\circ$ by \itm{2}. Hence, $y^j \in A^\circ$
  and so $A_\Sigma^{n+1} \cap T \subseteq A^\circ \cap T$.
\end{proof}

Let $L_\mathrm{PRE}$ be the language consisting of encodings of pairs of
all finite predictive theories and predictive formulas, i.e.,
\begin{align}
  L_\mathrm{PRE} &= \{\langle\Sigma,A \Rightarrow B\rangle \,|\,
                   \Sigma \text{ is finite and }
                   \Sigma \text{ and } A \Rightarrow B \text{ are predictive}\}.
\end{align}
Based on Theorem~\ref{th:predbound}, we establish the following
observation on the time complexity of deciding whether a predictive
formula is provable by a finite predictive theory.

\begin{theorem}
  $L_\mathrm{ENT} \cap L_\mathrm{PRE}$ is decidable in a pseudo-polynomial time.
\end{theorem}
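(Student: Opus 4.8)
The plan is to turn the question into an instance of the \emph{classic} attribute-implication entailment problem, which is decidable in linear time~\cite{BeBe:Cprttdonfrs}, by means of Theorem~\ref{th:predbound}. First I would, given an input $\langle\Sigma,A\Rightarrow B\rangle$, check in polynomial time whether it belongs to $L_\mathrm{PRE}$: this amounts to verifying that $A$, $B$, and every antecedent and consequent occurring in $\Sigma$ are non-empty and that $u(G)\le l(H)$ for every $G\Rightarrow H\in\Sigma\cup\{A\Rightarrow B\}$. If the input is not in $L_\mathrm{PRE}$, reject it. Otherwise, compute $l(A)$, $u(B)$, and for every $E\Rightarrow F\in\Sigma$ the values $l(E),l(F)$, and build the finite theory $\Sigma_{\!A}^B$ of~\eqref{eqn:SigmaAB} explicitly, listing for each $E\Rightarrow F\in\Sigma$ all shifted copies $E+i\Rightarrow F+i$ with $l(A)-l(E)\le i\le u(B)-l(F)$.

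Next I would bound the size of this construction. For a fixed $E\Rightarrow F\in\Sigma$ the number of admissible shifts $i$ is at most $u(B)-l(F)-l(A)+l(E)+1$, hence at most $N+1$, where $N$ denotes the maximum over the input of the absolute values of time points together with the largest ``width'' $u(G)-l(G)$ of a set $G$ occurring in $\Sigma$. Therefore $|\Sigma_{\!A}^B|\le|\Sigma|\cdot(N+1)$, and each shifted formula has encoding length polynomial in the length of the input and in $\log N$. Only the time-annotated attributes lying in a window of width $O(N)$ around $[l(A),u(B)]$ can occur in $\Sigma_{\!A}^B\cup\{A\Rightarrow B\}$, so this finite set of attributes serves as the universe of the closure computation. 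Both writing down $\Sigma_{\!A}^B$ and fixing the universe therefore cost time polynomial in the input length and in $N$, i.e., \emph{pseudo-polynomial} time.

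Finally I would apply the linear-time closure algorithm of~\cite{BeBe:Cprttdonfrs}: treating the members of $\alltatr$ that occur as ordinary attributes and the formulas of $\Sigma_{\!A}^B$ as ordinary attribute implications, compute the classic closure $A^{\circ}$ of $A$ under $\Sigma_{\!A}^B$ and accept iff $B\subseteq A^{\circ}$. By Theorem~\ref{th:predbound} this is exactly the condition $\Sigma\vdash A\Rightarrow B$, and by completeness (Theorem~\ref{th:scmpl}) it is equivalent to $\Sigma\models A\Rightarrow B$, so the procedure decides $L_\mathrm{ENT}\cap L_\mathrm{PRE}$ correctly. Its running time is dominated by the construction of $\Sigma_{\!A}^B$ and one execution of the closure algorithm on an input of size $O(|\Sigma_{\!A}^B|+|A|+|B|)$, both pseudo-polynomial.

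The hard part will be purely the bookkeeping of the size estimate: one must express the blow-up from $\Sigma$ to $\Sigma_{\!A}^B$ in terms of the \emph{numeric magnitude} $N$ of the time points rather than their bit-length, so that the algorithm is genuinely pseudo-polynomial---a polynomial-time algorithm is not to be expected in view of the NP-hardness of $L_\mathrm{ENT}$ established in Theorem~\ref{th:ENT}---and one must check that the linear-time guarantee of the classic closure algorithm applies once the finite attribute universe of pseudo-polynomial size has been fixed.
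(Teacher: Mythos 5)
Your proposal is correct and follows essentially the same route as the paper: reduce via Theorem~\ref{th:predbound} to classic attribute-implication entailment over the explicitly constructed $\Sigma_{\!A}^B$, bound $|\Sigma_{\!A}^B|$ by $|\Sigma|$ times a quantity polynomial in the numeric magnitudes of the time points, and invoke the linear-time closure algorithm of~\cite{BeBe:Cprttdonfrs}. The only cosmetic difference is that your bound ``at most $N+1$'' on the number of admissible shifts per formula is off by a constant factor (the paper's exact count is $u(B)+l(E)-l(A)-l(F)+1$), which is immaterial for pseudo-polynomiality.
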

\begin{proof}
  Take a finite predictive $\Sigma$ and a predictive formula $A \Rightarrow B$.
  The theory $\Sigma_{\!A}^B$ given by~\eqref{eqn:SigmaAB} is finite.
  According to
  Theorem~\ref{th:predbound}, the problem of deciding
  $\Sigma \vdash A \Rightarrow B$ is reducible to the problem of
  deciding whether $\Sigma_{\!A}^B$ entails $A \Rightarrow B$
  without using $\Shf$,
  i.e., in the sense of the entailment of ordinary attribute implications.
  Therefore, the problem is decidable in a time which is polynomial with
  respect to the size of
  $\Sigma_{\!A}^B$~\cite{BeBe:Cprttdonfrs,GaWi:FCA,Mai:TRD}.
  Now, observe that the size of (the encoding of) $\Sigma_{\!A}^B$
  may be bounded from above by the size of (the encoding of) $\Sigma$
  multiplied by
  \begin{align}
    n = \textstyle\max\{
    \max(0, u(B) + l(E) - l(A) - l(F) + 1) \,|\,
    E \Rightarrow F \in \Sigma\},
    \label{eqn:pseudolenght}
  \end{align}
  i.e., the size of $\Sigma_{\!A}^B$ is polynomial in the numeric value
  encoded in the input $\Sigma$ and hence $L_\mathrm{ENT} \cap L_\mathrm{PRE}$
  is decidable in a pseudo-polynomial time.
\end{proof}

\begin{remark}
  (a)
  By considering only $L_\mathrm{ENT} \cap L_\mathrm{PRE}$, we have
  improved the upper bound since pseudo-polynomial time algorithms belong
  to EXPTIME~\cite{GaJo:CaI} which is believed to be better than EXPSPACE.
  Observe that $L_\mathrm{ENT} \cap L_\mathrm{PRE}$ is also NP-hard because
  we can use the same reduction as in Theorem~\ref{th:ENT}.

  (b)
  Because of the complexity issues, in applications it is reasonable to
  consider attribute implications annotated by time points with small
  difference between lower and upper time bounds
  (\emph{maxspan}~\cite{FDL:Iarmcptasmd})
  since $L_\mathrm{ENT} \cap L_\mathrm{PRE}$ is decidable
  in \emph{pseudo-linear time} with respect to $n$
  given by~\eqref{eqn:pseudolenght}.
\end{remark}	

\begin{algorithm}[p]
  \caption{\rmfamily\textsc{PseudoLinClosure}\,$(\Sigma, A, \mathit{Max})$}%
  \label{alg:pseudolinclosure}
  \SetKw{KwSet}{set}
  \SetKw{KwAdd}{add}
  \SetKw{KwChoose}{choose}
  \SetKw{KwFrom}{from}
  \def\count#1#2{\ensuremath{\mathop{\text{\textit{count}}}[#1,#2]}}
  \def\list#1{\ensuremath{\mathop{\text{\textit{list}}}[#1]}}
  \def\update{\ensuremath{\mathop{\text{\textit{update}}}}}
  \def\new{\ensuremath{\mathop{\text{\textit{new}}}}}
  \ForAll{$E \Rightarrow F \in \Sigma$}{%
    \For{$i$ \KwFrom $l(A) - l(E)$ \KwTo $\mathit{Max} - l(F)$}{%
      \KwSet $\count{E \Rightarrow F}{i}$ \KwTo $|E|$\;
      \ForAll{$y^j \in E$}{%
        \KwAdd $\langle E \Rightarrow F,i\rangle$ \KwTo $\list{y^{i+j}}$\;
      }
    }
  }
  \KwSet $M$ \KwTo $A$\;
  \KwSet $\update$ \KwTo $A$\;		\label{alg:endinit}
  \While{$\update \ne \emptyset$}{%
    \KwChoose $y^i$ \KwFrom $\update$\;
    \KwSet $\update$ \KwTo $\update \setminus \{y^i\}$\;
    \ForAll{$\langle E \Rightarrow F, j\rangle \in \list{y^i}$}{%
      \KwSet $\count{E \Rightarrow F}{j}$
      \KwTo $\count{E \Rightarrow F}{j} - 1$\;
      \If{$\count{E \Rightarrow F}{j} = 0$}{%
        \KwSet $\new$ \KwTo $F+j \setminus M$\;
        \KwSet $M$ \KwTo $M \cup \new$\;
        \KwSet $\update$ \KwTo $\update \cup \new$\;
      }
    }
  }
  \Return{M}
\end{algorithm}%

An explicit procedure for deciding $L_\mathrm{ENT} \cap L_\mathrm{PRE}$ in
a \emph{pseudo-linear} time is
described in Algorithm~\ref{alg:pseudolinclosure}. It is a generalization of
\textsc{LinClosure}~\cite{BeBe:Cprttdonfrs}, cf. also~\cite{Mai:TRD},
which incorporates applicable time shifts of formulas in $\Sigma$.
The algorithm accepts three arguments:
\begin{enumerate}
\item
  a finite predictive theory $\Sigma$,
\item
  a finite $A \subseteq \alltatr$, and
\item
  a non-negative number $\mathit{Max} \geq u(A)$,
\end{enumerate}
and it returns a subset $M \subseteq [A]_\Sigma$ such that
$M \cap T = [A]_\Sigma \cap T$ for
\begin{align}
  T = \{y^i \in \alltatr \,|\, l(A) \leq i \leq \mathit{Max}\}.
  \label{eqn:T_Max}
\end{align}
The soundness of the algorithm is justified by the following observation:

\begin{theorem}\label{th:pseudolc}
  Let $\Sigma$ and $A \Rightarrow B$ be predictive and
  let $\Sigma$ be finite. Then, Algorithm~\ref{alg:pseudolinclosure} executed
  with arguments $\Sigma$, $A$, and $u(B)$, terminates after finitely many
  steps and for the returned value $M$ we have $\Sigma \proves A \Rightarrow B$
  if{}f $B \subseteq M$.
\end{theorem}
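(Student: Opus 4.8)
The plan is to show that Algorithm~\ref{alg:pseudolinclosure}, when run on $\Sigma$, $A$, and $\mathit{Max} = u(B)$, terminates and computes a set $M$ such that $M \cap T = [A]_\Sigma \cap T$ for $T = \{y^i \in \alltatr \,|\, l(A) \leq i \leq u(B)\}$; the claimed equivalence $\Sigma \proves A \Rightarrow B$ iff $B \subseteq M$ then follows immediately, since $B \subseteq T$ (because $A \Rightarrow B$ is predictive, so $l(A) \leq u(A) \leq l(B) \leq u(B)$, and hence every $y^j \in B$ has $l(B) \leq j \leq u(B)$ and also $l(A) \leq j$), and by Theorem~\ref{th:clos} and Theorem~\ref{th:semclos} we have $\Sigma \proves A \Rightarrow B$ iff $B \subseteq [A]_\Sigma$ iff $B \subseteq [A]_\Sigma \cap T$.

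First I would establish termination. The initialization loop runs over the finite theory $\Sigma$ and, for each $E \Rightarrow F \in \Sigma$, over the finite range $l(A) - l(E) \leq i \leq \mathit{Max} - l(F)$, so it halts. For the main \textbf{while}-loop, I would argue that $M$ only grows and stays inside the finite set $T' = \{y^i \,|\, l(A) \leq i \leq \mathit{Max}\} = T$: indeed every element added to $M$ comes from some $F + j$ where $\langle E \Rightarrow F, j \rangle$ was placed on a list, which happens only for $j$ in the admissible range, and for such $j$ one has $l(F) + j \leq \mathit{Max}$ and $l(F+j) = l(F)+j \geq l(A)$ by the predictivity bookkeeping, so $F + j \subseteq T$. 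Each element of $T$ enters $\mathit{update}$ at most once (it is only added as part of $\mathit{new} = F+j \setminus M$, i.e.\ when not already in $M$), and each iteration removes one element of $\mathit{update}$, so the loop terminates after at most $|T|$ iterations.

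Next I would prove correctness, i.e.\ $M = A^\circ \cap T$ at termination, where $A^\circ$ is the set from the proof of Theorem~\ref{th:predbound} (the syntactic closure of $A$ under $\Sigma_{\!A}^B$ as an ordinary theory), since by that theorem $A^\circ \cap T = [A]_\Sigma \cap T$. The key invariant of the \textbf{while}-loop is the standard \textsc{LinClosure} invariant adapted to shifts: at all times $A \subseteq M \subseteq A^\circ$, and $\count{E \Rightarrow F}{j}$ equals $|(E+j) \setminus M'|$ where $M'$ is the portion of $M$ whose members have already been processed out of $\mathit{update}$ (equivalently, a counter hits $0$ exactly when $E+j \subseteq M$ and every element of $E+j$ has been pulled from $\mathit{update}$), so that the rule $E+j \Rightarrow F+j \in \Sigma_{\!A}^B$ fires precisely once, exactly when its antecedent has become fully contained in $M$. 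The soundness direction $M \subseteq A^\circ$ follows because every addition to $M$ is an application of a rule of $\Sigma_{\!A}^B$ to a subset of the current $M \subseteq A^\circ$, using property~\itm{2} of $A^\circ$. For completeness $A^\circ \cap T \subseteq M$, I would note that upon termination $\mathit{update} = \emptyset$ means every element of $M$ has been processed, so every counter that can reach $0$ has done so; hence $M$ is closed under all rules $E+j \Rightarrow F+j \in \Sigma_{\!A}^B$ whose consequent lies in $T$ — and the remaining rules of $\Sigma_{\!A}^B$, by the range constraints together with predictivity, have $u(F+j) \leq \mathit{Max}$ and $l(F+j) \geq l(A)$, so their consequents also lie in $T$, whence $M \cup (A^\circ \setminus T)$ is closed under all of $\Sigma_{\!A}^B$ and contains $A$, forcing $A^\circ \subseteq M \cup (A^\circ \setminus T)$, i.e.\ $A^\circ \cap T \subseteq M$. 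Combined with $M \subseteq A^\circ$ and $M \subseteq T$, this gives $M = A^\circ \cap T = [A]_\Sigma \cap T$, and the theorem follows.

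I expect the main obstacle to be the counter-invariant argument for the \textbf{while}-loop: one must carefully track that the multiset of list-entries $\langle E \Rightarrow F, i\rangle$ attached to the various $y^{i+j}$ correctly accounts for every occurrence of every attribute of every shifted antecedent $E+i$, that decrements happen exactly once per attribute (which relies on each $y^i$ leaving $\mathit{update}$ at most once), and that a counter reaching $0$ is both necessary and sufficient for firing the corresponding shifted rule. The bookkeeping with the shift index $i$ and the bounds $l(A)-l(E) \leq i \leq \mathit{Max}-l(F)$ — ensuring these are exactly the shifts relevant to $\Sigma_{\!A}^B$ and that no relevant shift is omitted — is where the predictivity hypothesis is essential, mirroring its use in the proof of Theorem~\ref{th:predbound}, and is the step that needs the most care.
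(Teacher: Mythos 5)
Your overall strategy is the same as the paper's (which argues termination and correctness by adapting the classic \textsc{LinClosure} invariant and then invokes Theorem~\ref{th:predbound} to identify what is accumulated in $M$ with $[A]_\Sigma$ on the window $T$), and your reduction of the theorem to the identity $M \cap T = [A]_\Sigma \cap T$ together with $B \subseteq T$ is exactly right. However, one concrete step in your write-up is false: the claim that every $F+j$ placed on a list satisfies $F+j \subseteq T$, and hence that $M$ stays inside $T$ and equals $A^\circ \cap T$. The range constraint $j \leq \mathit{Max} - l(F)$ only bounds $l(F+j) = l(F)+j$ by $\mathit{Max}$; it gives $u(F+j) \leq \mathit{Max} + \bigl(u(F)-l(F)\bigr)$, which exceeds $\mathit{Max}$ whenever $F$ contains attributes at more than one time point and $j$ is near its upper limit. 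So the algorithm genuinely adds elements outside $T$ to $M$ — which is why the paper is careful to claim only $M \subseteq [A]_\Sigma$ with $M \cap T = [A]_\Sigma \cap T$, not $M = [A]_\Sigma \cap T$. This error infects your termination bound ($|T|$ iterations) and, more seriously, your completeness argument, where you assert that all consequents of rules in $\Sigma_{\!A}^B$ lie in $T$ in order to conclude that $M \cup (A^\circ \setminus T)$ is closed under $\Sigma_{\!A}^B$; as written that step does not go through (and the auxiliary set $M \cup (A^\circ \setminus T)$ would in any case be awkward, since a shifted antecedent could straddle $M$ and $A^\circ \setminus T$ without the corresponding counter ever reaching zero).

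The repair is straightforward and in fact simplifies your argument. The shifted rules enumerated in the initialization loop are precisely the members of $\Sigma_{\!A}^B$ from \eqref{eqn:SigmaAB} with $\mathit{Max}=u(B)$, so the while-loop is literally \textsc{LinClosure} run on the finite ordinary theory $\Sigma_{\!A}^B$ from the seed $A$; its standard invariant gives termination (each $y^i$ enters $\mathit{update}$ at most once, and $M$ is confined to the finite set $A \cup \bigcup\{F+j \,|\, \langle E\Rightarrow F, j\rangle \text{ is a list entry}\}$, so the loop runs at most that many iterations) and yields $M = A^\circ$, the full ordinary syntactic closure of $A$ under $\Sigma_{\!A}^B$ — with no intersection with $T$. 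Since every rule of $\Sigma_{\!A}^B$ is sound for $\Sigma$ (by $\Shf$), $[A]_\Sigma$ is closed under them, so $M = A^\circ \subseteq [A]_\Sigma$; and Theorem~\ref{th:predbound} gives $A^\circ \cap T = [A]_\Sigma \cap T$. Then $B \subseteq M$ implies $B \subseteq [A]_\Sigma$, and conversely $B \subseteq [A]_\Sigma$ together with $B \subseteq T$ gives $B \subseteq [A]_\Sigma \cap T = A^\circ \cap T \subseteq M$, which is the stated equivalence via Theorems~\ref{th:clos} and~\ref{th:semclos}. With that correction your proof matches the paper's intent; the rest of your bookkeeping for \emph{count} and \emph{list} is the standard \textsc{LinClosure} argument the paper also defers to.
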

\begin{proof}
  The arguments are fully analogous to those in case of the classic
  \textsc{LinClosure}, so we present here comments on issues arising
  only in the context of attributes annotated by time points. Technical details
  can be found in~\cite{BeBe:Cprttdonfrs}.
  Notice that Algorithm~\ref{alg:pseudolinclosure} uses auxiliary structure
  \emph{count} and \emph{list} to store information about formulas.
  The structure \emph{count} can be seen as an associative array indexed
  by (pointers to) formulas in $\Sigma$ and integers $i$ representing
  time shifts. The value of $\mathop{\mathit{count}}[E \Rightarrow F,i]$
  is initially set to the number of attributes in the antecedent of
  $E \Rightarrow F$ (shifted by $i$). During the computation,
  $\mathop{\mathit{count}}[E \Rightarrow F,i]$ represents the number of
  remaining attributes in $E+i$ which have not been ``updated.''
  The structure \emph{list} is an array indexed by attributes annotated by
  time points and the value of $\mathop{\mathit{list}}[y^i]$ is a list of
  records $\langle E \Rightarrow F, j\rangle$ representing (pointers to)
  formulas in $\Sigma$ and their $j$-shifts such that $y^i$ appears in
  the antecedent of $E \Rightarrow F$ shifted by $j$. An additional variable
  \emph{update} is initialized at line 10 and maintains attributes annotated
  by time points which are waiting to be ``updated.'' An update of $y^i$,
  see lines 13--21, consists in decrementing the counter of occurrences
  of attributes in shifted antecedents in all formulas where $y^i$ appears.
  All such formulas (and their $j$-shifts) are found
  in $\mathop{\mathit{list}}[y^i]$, see line 14.
  If $\mathop{\mathit{count}}[E \Rightarrow F,j]$ reaches zero,
  see line 16, the antecedent of $E+j \Rightarrow F+j$ is already contained
  in $M$, and all new attributes in $F+j$ are prepared for update.
  Clearly, the procedure terminates after finitely many steps,
  and by Theorem~\ref{th:predbound},
  the attributes annotated by time points accumulated in $M$
  represent a subset of $[A]_\Sigma$. In addition,
  if $u(B) \leq \mathit{Max}$, then $B \subseteq M$ if{}f
  $B \subseteq [A]_\Sigma$ if{}f $\Sigma \proves A \Rightarrow B$
  as a consequence of our previous observations.
\end{proof}

\begin{remark}
  The procedure in Algorithm~\ref{alg:pseudolinclosure} is called
  \textsc{PseudoLinClosure} because for given parameters,
  $\Sigma$, $A$, and $\mathit{Max}$, it computes a subset of the
  closure of $[A]_\Sigma$ in a linear time with respect to the
  numeric value of the encoding of its input arguments, i.e.,
  its time complexity is \emph{pseudo-linear.} Indeed, this is
  a consequence of the fact that each $y^i$ where
  $l(A) \leq i \leq \mathit{Max}$ is updated during the computation
  at most once.
\end{remark}

\begin{example}
  \def\atr#1#2{\ensuremath{\mathtt{\lowercase{#1#2}}}}%
  Consider a set $M$ given by the table in Figure \ref{fig:weather}.
  Since $M$ can be regarded as transactional data over a set of
  items $Y$ with a dimensional attribute $\mathfrak{d}$ the domain of which
  is $\Z$, we can utilize the algorithm proposed in~\cite{LFH:biaammiar}.
  The parameters for the algorithm are numbers \emph{maxspan}, \emph{minsupport},
  and \emph{minconfidence} for which we obtain a set $\Sigma$ of all
  predictive $A \Rightarrow B$ where $u(A\cup B) - l(A\cup B) \leq
  \text{\emph{maxspan}}$,
  $\text{\emph{minconfidence}} \leq \text{\emph{confidence}}(A\Rightarrow B)$,
  and $\text{\emph{minsupport}} \leq \text{\emph{support}}(A\Rightarrow B)$.
  For this particular example we consider $\text{\emph{maxspan}} = 5$,
  $\text{\emph{minconfidence}} = 1$ since we are interested in
  formulas true in $M$, and $\text{\emph{support}} = 5$. In this setting,
  we obtain
  \begin{align*}
    \Sigma = \{ &\{\atr{W}{m}^0\}\Rightarrow\{\atr{T}{c}^4\},	
    \{\atr{W}{l}^0\}\Rightarrow\{\atr{T}{c}^3\}, \\
    &\{\atr{W}{l}^0\}\Rightarrow\{\atr{W}{m}^1\},	
    \{\atr{W}{l}^0\}\Rightarrow\{\atr{W}{m}^1, \atr{T}{c}^3\}, \\
    &\{\atr{W}{l}^0, \atr{W}{m}^1\}\Rightarrow\{\atr{T}{c}^3\}, 
    \{\atr{R}{n}^0, \atr{W}{l}^2\}\Rightarrow\{\atr{T}{c}^3\}, \\
    &\{\atr{R}{n}^0, \atr{R}{n}^3\}\Rightarrow\{\atr{T}{c}^3\}, 
    \{\atr{T}{c}^0, \atr{R}{n}^5\}\Rightarrow\{\atr{T}{c}^5\}, \\
    &\{\atr{T}{c}^0, \atr{T}{c}^3, \atr{R}{n}^5\} \Rightarrow\{\atr{T}{c}^5\}, 
    \{\atr{R}{n}^0, \atr{T}{c}^0, \atr{R}{n}^3\} \Rightarrow\{\atr{T}{c}^3\}, \\
    &\{\atr{R}{n}^0, \atr{T}{c}^0, \atr{W}{m}^2\} \Rightarrow\{\atr{T}{c}^3\}\}.
  \end{align*}
  Now, we may successively reduce the set $\Sigma$ by removing formulas 
  $A\Rightarrow B$ such that     
  $\Sigma \setminus \{A\Rightarrow B\}\proves A\Rightarrow B$, i.e., without loss 
  of information. Since $\Sigma$ is predictive we may use
  \textsc{PseudoLinClosure} and obtain the following set:
  \begin{align*}
    \Sigma' = \{&\{\atr{W}{m}^0\}\Rightarrow\{\atr{T}{c}^4\},	
    \{\atr{W}{l}^0\}\Rightarrow\{\atr{W}{m}^1, \atr{T}{c}^3\}, \\ 
    &\{\atr{R}{n}^0, \atr{R}{n}^3\}\Rightarrow\{\atr{T}{c}^3\}, 
    \{\atr{R}{n}^0, \atr{W}{m}^2\}\Rightarrow\{\atr{T}{c}^3\}, \\
    &\{\atr{T}{c}^0, \atr{R}{n}^5\}\Rightarrow\{\atr{T}{c}^5\}\},
  \end{align*}
  i.e., the equivalent non-redundant set contains less than half
  of the formulas in~$\Sigma$. For $\text{\emph{maxspan}} = 5$ and
  $\text{\emph{support}} = 2$, the reduction is much more significant.
  From the total number of $34,440$ generated formulas,
  \textsc{PseudoLinClosure} can be used to produce an equivalent set
  consisting of only $81$ formulas.
\end{example}

%%%%%%%%%%%%%%%%%%%%%%%%%%%%%%%%%%%%%%%%%%%%%%%%%%%%%%%%%%%%%%%%%%%%%%%%%%%%%%%%
%%%%% CONCLUSION
%%%%%%%%%%%%%%%%%%%%%%%%%%%%%%%%%%%%%%%%%%%%%%%%%%%%%%%%%%%%%%%%%%%%%%%%%%%%%%%%
\section{Conclusion}\label{sec:concl}
We have presented logic for reasoning with if-then rules expressing
dependencies between attributes changing in time. The logic extends
the classic logic for dealing with if-then rules by considering discrete
time points as an additional component.
We have studied both the semantic entailment based on
preserving validity in models in all time points and syntactic entailment
represented by a provability relation. We have shown a characterization
of the semantic entailment based on least models and syntactico-semantical
completeness of the logic. We have shown the problem of entailment is NP-hard,
decidable in exponential space, and its simplified variant which involves
only predictive formulas is decidable in pseudo-linear time.
Future research directions we consider
interesting include utilization of generalized
quantifiers~\cite{Li:FOPLwGQ,Mo:Ogq} to capture notions like
``validity in all time points with possible exceptions'',
connections to rules which may emerge in temporal databases~\cite{DDL:TRT},
and further analysis of algorithms related to the entailment.

%%%%%%%%%%%%%%%%%%%%%%%%%%%%%%%%%%%%%%%%%%%%%%%%%%%%%%%%%%%%%%%%%%%%%%%%%%%%%%% 
\subsection*{Acknowledgment}
Supported by grant no. \verb|P202/14-11585S| of the Czech Science Foundation.
J.~Triska was also supported by internal student grant IGA\_PrF\_2015\_023
of Palacky University Olomouc.

%%%%%%%%%%%%%%%%%%%%%%%%%%%%%%%%%%%%%%%%%%%%%%%%%%%%%%%%%%%%%%%%%%%%%%%%%%%%%%%%
%%%%% REFERENCES
%%%%%%%%%%%%%%%%%%%%%%%%%%%%%%%%%%%%%%%%%%%%%%%%%%%%%%%%%%%%%%%%%%%%%%%%%%%%%%%%

\footnotesize
\bibliographystyle{amsplain}
\bibliography{jtvv}

% that's all folks
\end{document}